\documentclass{article} 
\usepackage{graphicx}
\usepackage{krish} 
\usepackage{tikz}
\usepackage{stmaryrd}
\usepackage[most]{tcolorbox}
\usepackage{marginnote}
\usepackage{libertine}
\usepackage[libertine]{newtxmath}

\usetikzlibrary{arrows.meta,bending}

\newcommand{\bnntime}{\tau_{\calM}}

  {\endMakeFramed}

\algdef{SE}[SUBALG]{Indent}{EndIndent}{}{\algorithmicend\ }%
\algtext*{Indent}
\algtext*{EndIndent}

\title{Fast Geometric Spanners via Approximate Nearest Neighbor Search} 
\author{ Alexandr Andoni \\[1mm] Columbia University \\ \href{andoni@cs.columbia.edu}{andoni@cs.columbia.edu} \and 
Manuel Paez \\[1mm] Columbia University \\\href{map2332@seas.upenn.edu}{map2332@seas.upenn.edu} \and 
Krish Singal \\[1mm]University of Pennsylvania \\ \href{ksingal@seas.upenn.edu}{ksingal@seas.upenn.edu}
  \and 
Tian Zhang \\[1mm]University of Pennsylvania \\ \href{tianzh@seas.upenn.edu}{tianzh@seas.upenn.edu}
  \and 
}

\begin{document} 

\maketitle 


\begin{abstract}
    We study the problem of constructing metric spanners in \emph{general} metric spaces in subquadratic time when given blackbox access to a fast algorithm for batch approximate nearest neighbor search. In particular, we show the following results for \emph{any} metric space $\calM = ([n], \sfd)$ with aspect ratio $\Delta$ admitting a $c$-approximate batch nearest neighbor search algorithm with runtime $\bnntime(n)$, 
    \begin{enumerate}[label=(\arabic*)]
        \item There exists an algorithm that, for any $k \in \N$, constructs an $O(c k)$-distortion spanner with $\tilde{O}(kn^{1+1/2k} \log \Delta)$ edges and runs in time $\tilde{O}(\bnntime(n) \cdot k n^{1/k} \log \Delta)$.
        \item Any algorithm that learns at most $o(n^{1+1/k}/k)$ pairwise distances by querying a distance oracle and a blackbox batch nearest neighbor search oracle necessarily incurs $\Omega(c k)$ distortion.
    \end{enumerate}
    
Our results entail that (truly) sub-quadratic time algorithms for spanner construction is {\em equivalent} to subquadratic time BANN (up to constant-factor losses).

    As a further application, we use our fast spanner constructions to obtain a fast algorithm for approximating the Wasserstein distance $\sfW_q$, for all $q > 1$, over any metric space admitting an efficient batch approximate nearest neighbor search algorithm. Together with recent new efficient algorithms for approximate nearest neighbor search in $\ell_p$ spaces, for $p > 2$, our results entail the first subquadratic time algorithms for spanner construction (with the stated size-distortion tradeoff) and $\sfW_q$ distance approximation over these metric spaces. 
\end{abstract}
\newpage
\newcommand{\LabA}{ANN}
\newcommand{\LabB}{BANN}
\newcommand{\LabC}{Spanner}
\newcommand{\LabD}{EMD}
\newcommand{\LabE}{$\sfW_q$}
\newcommand{\LabF}{MST}

\newcommand{\ABabove}{}   \newcommand{\ABbelow}{}   
\newcommand{\ADabove}{\cite{AS14}}   \newcommand{\ADbelow}{}   
\newcommand{\BCabove}{?}   \newcommand{\BCbelow}{}   
\newcommand{\CBabove}{}   \newcommand{\CBbelow}{}   
\newcommand{\BDabove}{\cite{BCJW25}}   \newcommand{\BDbelow}{}   
\newcommand{\AEabove}{\cite{AS12}}   \newcommand{\AEbelow}{}   
\newcommand{\AFabove}{\cite{HIM12}}   \newcommand{\AFbelow}{}   
\newcommand{\CDabove}{\cite{CKLPPS22}}   \newcommand{\CDbelow}{}   
\newcommand{\CEabove}{\cite{CKLPPS22}}   \newcommand{\CEbelow}{}   
\newcommand{\CFabove}{}   \newcommand{\CFbelow}{}   

\newcommand{\gapAB}{2.5}      
\newcommand{\gapBC}{3}      
\newcommand{\gapCcol}{3}    

\newcommand{\gapDE}{1.5}      
\newcommand{\gapEF}{1.5}      
\newcommand{\stackShift}{0}   

\newcommand{\bdOut}{72}   \newcommand{\bdIn}{170}   
\newcommand{\aeOut}{-50}   \newcommand{\aeIn}{190}   
\newcommand{\afOut}{-72}  \newcommand{\afIn}{190}   
\newcommand{\adOut}{80}   \newcommand{\adIn}{150}   

\newcommand{\posAB}{0.50}
\newcommand{\posCB}{0.50}
\newcommand{\posAD}{0.50}
\newcommand{\posBC}{0.50}
\newcommand{\posBD}{0.60}
\newcommand{\posAE}{0.5}
\newcommand{\posAF}{0.60}
\newcommand{\posCD}{0.50}
\newcommand{\posCE}{0.50}
\newcommand{\posCF}{0.62}

\newcommand{\xB}{\gapAB}
\newcommand{\xC}{\gapAB+\gapBC}
\newcommand{\xCol}{\gapAB+\gapBC+\gapCcol}

\section{Introduction}

A graph $t$-spanner for a graph $G$ on $n$ vertices with $m$ edges is a subgraph $H \subseteq G$ for which the shortest path distances satisfy
\begin{align*}
    \forall i, j \in [n]. \, \, \sfd_G(i, j) \leq \sfd_H(i, j) \leq t \cdot \sfd_G(i, j)
\end{align*}
where $\sfd_G$ and $\sfd_H$ denote the shortest path metrics in $G$ and $H$ respectively. The notion was first introduced by \cite{PS89}, and has had wide-reaching influence\footnote{We refer the interested reader to \cite{ABS+20, Epp96, Zwi01} for comprehensive surveys on the many developments and uses of graph spanners.}. In the design of spanners, one is traditionally concerned with the \emph{size-distortion} trade-off. That is, the trade-off between the size of subgraph $H$ and the incurred distortion $t$. A classic result here is the existence of $(2k-1)$-spanners of size $O(n^{1 + 1/k})$ for $k \in \N$ and in any metric \cite{PS89}. Such a trade-off is known to be tight under the Erdös Girth Conjecture \cite{Erdos63}. Moreover, spanners achieving this trade-off can be constructed via a simple greedy algorithm with runtime $O(m)$ \cite{PS89}. 

In the \emph{geometric} setting \cite{chew1986planar}, one is tasked with constructing a spanner for the complete graph on $n$ points in a metric space $\calM = ([n], \sfd_{\calM})$ (where pairwise edge weights are given by the metric distance). Here, $m = O(n^2)$ but one could hope for subquadratic-in-$n$ time algorithms since the input description is only linear in $n$. Counter-intuitively, it is not even clear whether one can build the aforementioned classic spanner in sub-quadratic time. The only metric where such progress has been made is the Hamming and Euclidean spaces, where there has been a long line of work on the construction of spanners for both low-dimensional \cite{Clark87, Keil88, KG92, ADDJS93, Giri07} and high-dimensional \cite{HIS13, AZ23, JPSWZ26} spaces, with varying size-distortion trade-offs. This line of work naturally raises the question for other metrics:
\begin{question}
\label{q:fastSpanner}
    {\em For which metric spaces do  there exist subquadratic time algorithms to construct geometric spanners with non-trivial distortion?}
\end{question}

The existing sub-quadratic time spanner constructions typically leverage efficient algorithms for Approximate Nearest Neighbor Search (ANN), or the tools behind the ANN algorithms (such as Locality Sensitive Hashing \cite{HIS13}). Indeed, ANN seems to be a related problem for two reasons, at least in the case of offline version of ANN  termed Batch ANN (BANN), where one has to solve ANN for a query set given in advance~\cite{alman2015probabilistic}. (1) Some down-stream applications of spanners can be solved via efficient BANN (see related work in Section \ref{sec: related-work}). (2) One can efficiently solve BANN for a pointset given a spanner on that pointset in a black-box fashion. See Figure~\ref{fig: landscape} for a depiction of connections between some central problems. These connections raise the question of how tightly BANN and spanner construction problems are related, namely whether (2) holds in the reverse direction.

\begin{question}
\label{q:spannerBANN}
 {\em    
 Does there exist a black-box reduction from geometric spanner construction to Batch Approximate Nearest Neighbor Search (BANN), under an arbitrary metric?}
\end{question}

In this work, we make significant progress on both of these questions. In particular, we show that given black-box access to a BANN algorithm in an arbitrary metric space $\calM = ([n], \sfd)$, there exists an algorithm that makes sublinear-in-$n$ many calls to the oracle and constructs spanners with natural size-distortion trade-offs. Overall, when the runtime of the oracle is (truly) subquadratic-in-$n$, the spanner construction takes subquadratic time as well.  

A consequence of our results is that, for a fixed approximation $t>1$, existence of a (truly) sub-quadratic time algorithm for $t$-spanner construction is {\em equivalent} to subquadratic time BANN (up to constant-factor losses). 

A classic motivation for building geometric spanners is that it allows one to lift the rich literature on graph algorithms to geometric problems. Indeed, we discuss one concrete such application of our results to fast algorithms for Wasserstein-$q$ distance\footnote{We refer the reader to \cite{CP20, P26, PZ19} for surveys on the practical usefulness of $\sfW_q$ distance in applications such as statistics, data analysis, and machine learning.} approximation, including Earth Mover Distance (EMD), over metric spaces with efficient batch approximate nearest neighbor search, such as $\ell_p$ with $p\in (2,\infty]$. 


\begin{figure}[t] 
    \centering
    \begin{tikzpicture}[
        >={Stealth[length=2.8mm,width=2.1mm]},
        vertex/.style   = {black, inner sep=2pt, minimum size=7mm},
        flow/.style     = {->, black, thick},
        dotflow/.style  = {flow, densely dotted},
        edgecap/.style  = {black, font=\small, inner sep=2.5pt},
    ]

    \node[vertex] (A) at (0,0)                              {\LabA};
    \node[vertex] (B) at ({\xB},0)                          {\LabB};
    \node[vertex] (C) at ({\xC},0)                          {\LabC};
    \node[vertex] (D) at ({\xCol},{\stackShift+\gapDE})     {\LabD};
    \node[vertex] (E) at ({\xCol},{\stackShift})            {\LabE};
    \node[vertex] (F) at ({\xCol},{\stackShift-\gapEF})     {\LabF};
    
    
    \draw[flow] (A) --
        node[edgecap, above, sloped, pos=\posAB] {\ABabove}
        node[edgecap, below, sloped, pos=\posAB] {\ABbelow} (B);
    
    \draw[dotflow, transform canvas={yshift = 3pt}] (B) --
        node[edgecap, above, sloped, pos=\posBC] {\BCabove}
        node[edgecap, below, sloped, pos=\posBC] {\BCbelow} (C);

    \draw[flow, transform canvas={yshift =-3pt}] (C) --
        node[edgecap, above, sloped, pos=\posBC] {\CBabove}
        node[edgecap, below, sloped, pos=\posBC] {\CBbelow} (B);
    \draw[flow] (B) to[out=\bdOut, in=\bdIn]
        node[edgecap, above, sloped, pos=\posBD] {\BDabove}
        node[edgecap, below, sloped, pos=\posBD] {\BDbelow} (D);
    
    \draw[flow] (A) to[out=\aeOut, in=\aeIn]
        node[edgecap, above, sloped, pos=\posAE] {\AEabove}
        node[edgecap, below, sloped, pos=\posAE] {\AEbelow} (E);
    
    \draw[flow] (A) to[out=\afOut, in=\afIn]
        node[edgecap, above, sloped, pos=\posAF] {\AFabove}
        node[edgecap, below, sloped, pos=\posAF] {\AFbelow} (F);
    
    \draw[flow] (C) --
        node[edgecap, above, sloped, pos=\posCD] {\CDabove}
        node[edgecap, below, sloped, pos=\posCD] {\CDbelow} (D);
    
    \draw[flow] (C) --
        node[edgecap, above, sloped, pos=\posCE] {\CEabove}
        node[edgecap, below, sloped, pos=\posCE] {\CEbelow} (E);
    
    \draw[flow] (C) --
        node[edgecap, above, sloped, pos=\posCF] {\CFabove}
        node[edgecap, below, sloped, pos=\posCF] {\CFbelow} (F);

    \draw[flow] (A) to[out=\adOut, in=\adIn]
        node[edgecap, above, sloped, pos=\posAD] {\ADabove}
        node[edgecap, below, sloped, pos=\posAD] {\ADbelow} (D);

    \end{tikzpicture}
    \caption{Depiction of the computational landscape, featuring some important geometric problems that are known to have connections to ANN/BANN and spanners. $A\to B$ means having efficient algorithm for $A$ implies efficient algorithm for $B$ (i.e., $B$ reduces to $A$) with approximation that is preserved up to $O(\cdot)$. The reduction from Earth Mover's Distance (EMD) to ANN in \cite{AS14} holds for general metric spaces, whereas the reduction from EMD to BANN in \cite{BCJW25} holds specifically in $\ell_p$ spaces where $p \in [1, 2]$. The reduction from $\sfW_q$ to ANN in \cite{AS12} holds for $q = 1, 2$ and requires the ANN data structure to be dynamic. The reduction from ANN to BANN is the trivial one. The spanner$\to W_q$/EMD connection is via min-cost flow algorithm of \cite{CKLPPS22}. The dotted arrow captures our conceptual question and is the focus of this paper.}
    \label{fig: landscape}
\end{figure}



\subsection{Our Results}

We now state more formally our results: new algorithms that construct geometric spanners given access to a Batch Approximate Nearest Neighbor oracle in \emph{general} metric spaces, as well as a lower bound quantifying the best possible trade-offs. In particular, we define the oracle model as follows. 

\begin{definition}[Batch Approximate Nearest Neighbor Search Oracle] \label{def: bann}
    Let $\calM = ([n], \sfd)$ denote an arbitrary metric, let $c \geq 1$. We say that $\calM$ admits an efficient batch $c$-approximate nearest neighbor ($c$-BANN) search oracle $\calO$ for approximation factor $c \geq 1$ if the following is true: for any query set $Q \subseteq [n]$ and target set $X \subseteq [n]$, $\calO(Q, X, c)$ and with probability $\geq 1 - 1/n$ returns $\{ (q, x^{\ast}, \sfd(q, x^{\ast})) \}_{q \in Q}$ where $\sfd(q, x^{\ast}) \leq c \cdot \min_{x \in X} \sfd(q, x)$ for each $q \in Q$. We let $\bnntime(n)$ denote the runtime of $\calO$.  
\end{definition}

We remark that there exist many metric spaces which admit efficient Batch Approximate Nearest Neighbor Search (BANN) data structures with non-trivial approximations. Examples include $\ell_p$ spaces for $p \in [1, 2]$ \cite{AR15, ACW16, le2014algorithms} and for $p > 2$ \cite{Kush2021Near,AN25}, as well as Schatten-$1$/Schatten-$2$ \cite{Kush2021Near}, symmetric normed spaces \cite{Andoni2017Approximate}, and general norm spaces~\cite{Andoni2018Holder}.

\subsubsection{Spanner Construction and its Tightness} Our main algorithmic result is the following spanner construction via blackbox access to a BANN oracle.

\begin{theorem}\label{thm: upper-bound}
    Let $\calM = ([n], \sfd)$ be a metric, with aspect ratio $\Delta$\footnote{Here, and throughout the paper, the aspect ratio $\Delta := \max_{i, j \in [n]} \sfd(i, j) / \min_{i \neq j \in [n]} \sfd(i, j)$. That is, the ratio between the maximum and minimum pairwise distances in the finite metric}, that admits a $c$-BANN oracle $\calO$ with runtime $\bnntime$, where $c \geq 1$. Then, there exists an algorithm that constructs, for any $k \in \N$, an $O(ck)$-distortion spanner with $\Tilde{O}(kn^{1+1/2k} \log \Delta )$ edges in time $\Tilde{O}( \bnntime(n) \cdot kn^{1/k} \log \Delta)$ and succeeds with high probability. Moreover, the resulting spanner is an $O(k)$-hop spanner.   
\end{theorem}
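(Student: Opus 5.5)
We build the spanner scale by scale, with a Baswana--Sen / Thorup--Zwick style clustering, and replace every ``find the nearest cluster center'' step by a call to the $c$-BANN oracle $\calO$.

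\textbf{Scales.} After normalizing the minimum distance to $1$, consider the distance scales $r = 2^i$ for $i = 0,1,\dots,\lceil \log \Delta\rceil$. It suffices to build, for each scale $r$, an edge set $H_r$ with the following guarantee: every pair $(u,v)$ with $\sfd(u,v) \in (r/2, r]$ is connected in $H_r$ by a path of at most $O(k)$ hops and total length $O(ckr)$. Taking $H = \bigcup_r H_r$ then gives distortion $O(ck)$ and an $O(k)$-hop spanner. It also multiplies both the size and the running time by $O(\log \Delta)$. Each edge $(u,v)$ is added with weight $\sfd(u,v)$, which the oracle returns together with the neighbor.

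\textbf{One scale.} Fix $r$ and draw nested random center sets $[n] = A_0 \supseteq A_1 \supseteq \dots \supseteq A_k$. Here $A_{j+1}$ keeps each element of $A_j$ independently with probability $n^{-1/(2k)}$, so $|A_j| \approx n^{1-j/(2k)}$.

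For each level $j$, call $\calO(A_j \setminus A_{j+1},\, A_{j+1},\, c)$. This gives every $j$-level center $a$ an approximate nearest $(j+1)$-level center $p(a)$.
\begin{itemize}
\item If $\sfd(a,p(a)) \le c\,(2j+1) r$, add the edge $(a,p(a))$. This is the cluster-tree edge, and there is one per center.
\item Otherwise, $a$ is \emph{terminal}: no $(j+1)$-center lies within $(2j+1)r$ of $a$. Every point $v$ whose cluster chain reaches $a$ then needs an edge to every $j$-center of a nearby cluster.
\end{itemize}

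\textbf{Enumerating nearby centers of a terminal.} We do this by repeated BANN calls with deletions. For $t = 1,\dots,\tilde O(n^{1/(2k)})$ rounds, we query the relevant points against the $A_j$-centers not yet returned, add the returned edge if it is within distance $O(ckr)$, and remove the returned center from the target set.

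This is where the $n^{1/k}$ factor in the running time comes from. We run $O(n^{1/k})$ batched rounds of size at most $n$, and the targets within $O(jr)$ of a terminal number only $\tilde O(n^{1/(2k)})$ with high probability. That count follows from the sampling: if there were more, one of them would have survived to $A_{j+1}$, contradicting terminality up to the factor $c$.

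\textbf{Size and stretch.} The expected edge count per level is $\tilde O(n \cdot n^{1/(2k)})$, so a scale contributes $\tilde O(k n^{1+1/(2k)})$ edges.

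For stretch, take $u,v$ with $\sfd(u,v)\le r$ and walk both up their cluster chains. Each step has length at most $c(2j+1)r$, so at level $j$ both current centers are within $O(cj^2 r)$ of the originals. At the first level where $u$'s center is terminal, $v$'s center is within $O(ckr)$ of it, so the enumeration added a direct edge between the two. The level-$k$ set is empty, so a terminal level always exists. The resulting path has $O(k)$ hops and length $O(ck\cdot r)$ up to a bound on the radius growth. To keep the radius linear in $k$ rather than quadratic, tie each level's radius threshold to $r$ with additive growth, i.e.\ $(2j+1)r$ as above.

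\textbf{Main obstacle.} The crux is that $\calO$ only returns one approximate neighbor per query. It cannot report all centers in a ball, and it cannot detect terminality exactly. So the delete-and-requery enumeration must be shown to stop after $\tilde O(n^{1/(2k)})$ rounds per level while still catching every center inside the true radius; the $c$-slack enters the distance threshold. I would prove the ball-size bound by the standard geometric-sampling argument, with the radius inflated by a factor $c$, and union bound over points and scales. The $1/n$ failure probability of each oracle call is handled by boosting each call through $O(\log n)$ repetitions.
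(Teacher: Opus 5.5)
Your proposal has two genuine gaps. They are exactly the two obstacles the paper's construction is built to overcome.

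\textbf{The enumeration step cannot be batched.} You correctly identify the crux, but delete-and-requery cannot be run with the oracle as defined. A call $\calO(Q,X,c)$ has a single target set $X$ shared by the whole batch, so ``remove the returned center from the target set'' can only be done globally. A center returned to one query point then disappears for every other point near it, and those points may never get an edge to it. That destroys the completeness you need at terminal clusters. Doing the deletions per point instead requires, in general, a separate call per query point, i.e.\ $\Omega(n)$ calls per round.

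The missing idea is the paper's hashing trick. Before sampling, hash the active clusters into $B=\Theta(\tau)$ buckets and make one call per bucket. The target set is the members of the clusters in that bucket. The query set is restricted to those $v$ with non-full lists none of whose already-recorded clusters (nor its own cluster) hash to that bucket, so every answer within $cr$ is a genuinely new cluster for $v$. With $H=\Theta(\tau)$ fresh hash functions, each list ends up either full or containing every other cluster within distance $r$ of $v$ (Lemma~\ref{lem: list-completeness}). The $HB=\tilde O(n^{1/k})$ calls per iteration are where the $n^{1/k}$ in the running time actually comes from.

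\textbf{The center-to-center hierarchy does not give $O(ck)$.} Even granting an enumeration primitive, your Thorup--Zwick-type hierarchy compounds the factor $c$. Terminality certifies only $\sfd(a,A_{j+1})>(2j+1)r$, while tree edges can have length $c(2j+1)r$. By your own computation the level-$j$ radius is then up to $R_j\approx cj^2r$. At a terminal level with $j=\Theta(k)$, the two centers can be $2R_j+r=\Theta(ck^2r)$ apart, not $O(ckr)$.

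Worse, $v$'s center can then lie outside the radius $(2j+1)r$ that terminality certifies. So the sampling argument says nothing about how many $A_j$-centers the enumeration must wade through. Since each answer may be a factor $c$ farther than the nearest remaining target, catching everything within $\rho$ also requires the ball of radius $c\rho$ to be sparse. Choosing thresholds to cover this forces $\theta_j\gtrsim c(R_j+r)$ with $R_{j+1}\le R_j+c\theta_j$. The factor $c$ then compounds at every level and the radius grows exponentially in $j$. This is essentially the failure the paper's overview describes for the Thorup--Zwick attempt, and your remark about ``additive growth'' does not fix it.

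The paper avoids this with Baswana--Sen clusters. A vertex attaches to \emph{any member} $z$ of a sampled cluster in its list, with $\sfd(v,z)\le cr$, so radii grow by exactly $cr$ per iteration (Lemma~\ref{lem: close-to-center}). Because the list is fixed \emph{before} sampling, a full list contains a sampled cluster with high probability (Lemma~\ref{lem: full-lists-sampled}), while a retiring vertex's list is complete. No emptiness certificate, and hence no second factor of $c$, is ever needed.

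\textbf{Level count.} With keep-rate $n^{-1/(2k)}$ and only $k$ levels, $|A_k|\approx\sqrt n$, not empty. The paper runs $2k-1$ iterations. It then connects all remaining active vertices to each of the $O(n^{1/(2k)})$ surviving clusters with one oracle call per cluster.
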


Theorem \ref{thm: upper-bound} implies the following strong equivalence between BANN and spanner construction in general metrics. 

\begin{corollary}
    Let $\calM = ([n], \sfd)$ be a metric, with aspect ratio $\Delta$. Suppose there exists an algorithm for $c$-BANN in $\calM$ with runtime $n^{2-\delta}$ for $\delta>0$. Then there exists an algorithm that constructs an $O_\delta(c)$-distortion spanner for $\calM$ in time $O(n^{2-\Omega(\delta)}\log \Delta)$. Similarly, if there exists an algorithm to construct a $t$-spanner of $\calM$ with runtime $n^{2-\delta}$, then there exists an algorithm for $t$-BANN in $\calM$ with runtime $\tilde{O}(n^{2-\delta})$.
\end{corollary}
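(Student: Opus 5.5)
The corollary has two directions, and I would prove them separately.

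\emph{BANN $\Rightarrow$ spanner.} This is a parameter choice in Theorem~\ref{thm: upper-bound}. Assume $\bnntime(n) = n^{2-\delta}$ and fix $k = \lceil 2/\delta \rceil$, so that $n^{1/k} \le n^{\delta/2}$ and $k = O_\delta(1)$. The theorem then produces a spanner with:
\begin{itemize}
\item distortion $O(ck) = O_\delta(c)$;
\item running time $\tilde{O}(n^{2-\delta} \cdot k n^{\delta/2} \log\Delta) = \tilde{O}(n^{2-\delta/2}\log\Delta)$, and the polylogarithmic factors are absorbed into $n^{2-\Omega(\delta)}$, say $n^{2-\delta/3}$ for large $n$;
\item $\tilde{O}(n^{1+1/2k}\log \Delta)$ edges, which is certainly truly subquadratic.
\end{itemize}
The only point needing care is that the time bound hides $\mathrm{polylog}(n)$ factors but no hidden $\log\Delta$ powers beyond the stated one. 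High-probability success is inherited from the theorem.

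\emph{Spanner $\Rightarrow$ BANN.} Given $Q$ and $X$, I cannot simply build a spanner on $X \cup Q$ and run multi-source Dijkstra from $X$. The path from $q$ to its nearest source might pass through another query point $q'$, yet that path still has length at most $t\cdot \min_{x}\sfd(q,x)$, so this part is fine. The real issue is that Dijkstra only outputs a path length $L$, while the oracle must return an actual point $x^\ast$ together with $\sfd(q,x^\ast)$. This is handled by having Dijkstra record which source $x^\ast$ its label originated from. The triangle inequality then gives $\sfd(q,x^\ast) \le L \le t\,\min_x \sfd(q,x)$. We compute $\sfd(q,x^\ast)$ exactly with one distance evaluation per query.

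The steps are:
\begin{enumerate}
\item Build a $t$-spanner $H$ on the point set $Q\cup X \subseteq [n]$ in time $n^{2-\delta}$. This assumes the spanner algorithm applies to any submetric of $\calM$, which is the natural reading; alternatively, build $H$ on all of $[n]$.
\item Run a single multi-source Dijkstra on $H$ from all of $X$, tracking the origin of each label.
\end{enumerate}
The running time is $\tilde{O}(|E(H)| + n)$. Since $|E(H)|$ is at most the spanner construction time $n^{2-\delta}$, the total is $\tilde{O}(n^{2-\delta})$.

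The main obstacle, though a minor one, is the restriction to the submetric $Q\cup X$. If the spanner is only available on all of $[n]$, I would still use $H$ on $[n]$ with sources $X$. Paths through points outside $Q\cup X$ are harmless, because the label still records a genuine source $x^\ast \in X$ and the triangle-inequality argument goes through unchanged.
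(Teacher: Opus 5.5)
Your proposal is correct and follows the route the paper intends. For the first direction, instantiate Theorem~\ref{thm: upper-bound} with $k=\Theta(1/\delta)$, so the $kn^{1/k}$ overhead is only $O(n^{\delta/2}/\delta)$ and the distortion is $O(c/\delta)$. For the converse, run a single multi-source shortest-path computation from the target set $X$ on the spanner, recording each label's originating source, which gives $\sfd(q,x^\ast)\le\sfd_H(q,x^\ast)\le t\cdot\sfd(q,X)$.

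One small correction to your side remark: the $\tilde{O}$ in Theorem~\ref{thm: upper-bound} also hides $\mathrm{poly}(\log\log\Delta)$ factors, coming from $\tau\propto\log(nk\log\Delta)$ and from the final-cluster bound. These are absorbed under the usual assumption $\log\Delta\le\mathrm{poly}(n)$, but the hidden factors are not purely $\mathrm{polylog}(n)$.
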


We also prove a corresponding lower bound to Theorem \ref{thm: upper-bound}, showing that distortion $\Omega(ck)$ is required if performing $n^{1/k}$ BANN queries (even with $n^{1+1/k}$ extra exact distance computation calls).

\begin{theorem}[informal, see Theorem \ref{thm: ck-lower-bound}] \label{thm: lower-bound}
    Let $c > 1$ and let $k \geq 2$. Then for every randomized algorithm that learns at most
    \[
        \Phi \;=\; o\!\left( \frac{n^{1+1/k}}{k} \right)
    \]
    pairwise distances by querying a $c$-BANN oracle and an exact distance oracle, there is a metric space on which the algorithm incurs distortion 
    $t \;=\; \Omega(ck)$
    in expectation over the algorithm's own randomness. 
    
\end{theorem}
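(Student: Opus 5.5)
The plan is to combine two ingredients. The first is the classical girth-based lower bound for spanners. The second is an adversarial construction in which each $c$-BANN answer can be made uninformative, so that it reveals about as much as a single exact distance query.

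\textbf{Hard instance.} I would start from a graph $G$ on $n$ vertices with girth greater than $2k$ and $\Omega(n^{1+1/k}/k)$ edges. The Erdős girth conjecture is not available, so I would fall back on a probabilistic construction, or alternatively on a random bipartite graph in which short cycles are rare. The metric is then a randomized weighting of $G$ in which, for a hidden random subset of edges, the shortest-path metric is perturbed. Concretely, each edge $(u,v)$ of $G$ gets distance $1$. Every non-edge gets distance $\Theta(ck)$, truncated by the graph metric. In addition, one uniformly random hidden edge $e^\ast$ of $G$ is either kept at distance $1$ or removed (set to $\Theta(ck)$), and this bit is not revealed.

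\textbf{Simulating the BANN oracle.} The key step is that a $c$-BANN query $\calO(Q,X,c)$ can be answered adversarially. For each $q\in Q$, the adversary returns some $x\in X$ with $\sfd(q,x)\le c\cdot \min_{x'} \sfd(q,x')$, and it prefers $x$ whose edge is already known to the algorithm or whose distance is far, at the ``$c$-fuzzy'' scale. Because non-edges are at distance $\Theta(ck)\ge c$ times the nearest distance only when the true nearest neighbor is at distance about $k$, I would instead set non-edge distances so that $c$-approximation lets the oracle always answer with a point at distance about $c$ rather than revealing a distance-$1$ edge. This means scaling: graph edges get weight $1$, and a ``cloud'' of decoy points gets distance $c$ to every query. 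Then each returned triple reveals one pairwise distance. Hence $\Phi$ counts all revealed information, and by averaging over the hidden edge the algorithm learns the status of $e^\ast$ with probability at most $O(\Phi k / n^{1+1/k}) = o(1)$.

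\textbf{Distortion argument.} If the algorithm does not query $e^\ast=(u,v)$, its spanner $H$ must connect $u,v$ using other learned edges. By girth greater than $2k$, any alternative path in $G$ has at least $2k$ edges. Any path through decoys or non-edges costs at least about $c$ per hop, giving $\sfd_H(u,v)\ge \min(2k,\ c\cdot(\cdot))$. I would therefore tune the non-edge weight to $ck$ so that every alternative route costs $\Omega(ck)$ while $\sfd(u,v)=1$ in the ``kept'' world. In expectation this yields distortion $\Omega(ck)$.

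\textbf{Main obstacle.} The hardest step is making the BANN answers simultaneously valid $c$-approximations and information-poor while the metric still forces $ck$ stretch, not just $\max(c,k)$. My first attempt would be a product-style construction: replace each vertex of the high-girth graph by a cluster, and set inter-cluster distances so that both the $c$-slack of the oracle and the $k$-hop girth bound compound multiplicatively.
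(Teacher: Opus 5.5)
There are genuine gaps. The central one is that your instance does not get the multiplicative $ck$, and you leave exactly this step (your ``main obstacle'') unresolved.

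\textbf{Weighting.} If every edge of $G$ has distance $1$, a detour around $e^{\ast}=(u^{\ast},v^{\ast})$ inside $G$ costs only its hop count. So girth gives stretch $\Omega(k)$, and the oracle's slack $c$ never enters. Retuning non-edge weights cannot fix this, because the detour uses only $G$-edges; this is the $\min(2k,\cdot)$ you arrive at.

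The decoy cloud makes things worse. If a point $z$ is at distance $c$ from every other point, the triangle inequality forces $\sfd(x,y)\le 2c$ for all pairs. Then one BANN call with target set $\{z\}$ (cost $n-1$) yields a star of distortion at most $2c$.

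The missing idea is to put the factor $c$ on the graph edges themselves. Give every edge of $G$ weight $c$ except the hidden edge $e^{\ast}$, which gets weight $1$, and truncate the shortest-path metric at $\Lambda=k'c$ with $k'=\lceil k/2\rceil$. A query at $u^{\ast}$ can then legally be answered by another neighbor at distance $c\le c\cdot 1$, so the oracle hides $e^{\ast}$ for free. Meanwhile any detour needs many hops, each priced at $c$: girth controls the hop count, the weights control the price per hop, and the two multiply.

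\textbf{Girth.} Girth $>2k$ with about $n^{1+1/k}$ edges is essentially the Erd\H{o}s girth conjecture. The probabilistic construction at that girth gives only about $n^{1+1/(2k-1)}$ edges, far too few against $\Phi k$. The paper instead uses girth $\ge k+2$ with $\Omega(n^{1+1/k})$ edges, which is available unconditionally, and loses only the factor $2$ hidden in $k'$.

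\textbf{Re-routing.} ``The algorithm never learns $e^{\ast}$'' does not make $\sfd_H(u^{\ast},v^{\ast})$ large, because spanner edges need not be edges of $G$. With the weights above and $k$ large enough that $2c+1<\Lambda$, take a learned pair $(x,y)$ with $x$ adjacent to $u^{\ast}$ and $y$ adjacent to $v^{\ast}$. Then $\sfd(x,y)=2c+1$, and $u^{\ast}\to x\to y\to v^{\ast}$ costs $4c+1$ regardless of girth. Your distortion paragraph, which only considers alternative paths inside $G$, misses this.

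The paper closes the gap with a re-routing lemma. Suppose $\sfd_H(u^{\ast},v^{\ast})<\Lambda$, and replace each edge of a cheapest $H$-path by the $G$-geodesic of its endpoints, which is unique by girth. The resulting walk has weight $<\Lambda$. It must traverse $e^{\ast}$, since otherwise it would close, together with $e^{\ast}$, a cycle of at most $k'+1$ edges, below the girth. Hence some $H$-edge joins a pair at most $k'$ hops apart whose geodesic contains $e^{\ast}$. The union $S(T)$ of these geodesics over all learned pairs has at most $k'\Phi=o(m)$ edges. This, not merely missing $e^{\ast}$, is where your factor $\Phi k$ must come from.

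\textbf{Adaptivity.} Your averaging over the hidden edge must handle the fact that queries depend on earlier answers, and hence on $e^{\ast}$. The paper runs the deterministic algorithm against an $e^{\ast}$-independent generic oracle (all weights $c$) and defines $S(T)$ from that transcript. It then shows the true run coincides with the generic run whenever $e^{\ast}\notin S(T)$: close pairs are covered by this good event, and far pairs by the truncation at $\Lambda$. Only after that does it apply Yao's principle.
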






\subsubsection{Applications}
As an application of Theorem \ref{thm: upper-bound}, we show the existence of efficient algorithms for estimating the Wasserstein distance $\sfW_q$ over general metrics. The $\sfW_q$ distance is the optimal transport distance; for $q = 1$, $W_1$ is also known as Earth Mover's Distance (EMD). The $\sfW_q$ distance is defined as follows: 

\begin{definition}[Wasserstein-$q$ Distance] \label{def: wq}
    Fix a metric space $\calM = ([n], \sfd)$ and let $q \in [1, \infty)$. Let $\mu, \nu$ denote probability distributions supported on $[n]$. The Wasserstein-$q$ distance over $\calM$ is defined as
    \begin{align*}
        \sfW_{q}(\mu, \nu) := \min_{\gamma \in \Gamma(\mu, \nu) }  \left( \Ex_{(\bx, \by) \sim \gamma} \left[\sfd(\bx, \by)^q \right] \right)^{1/q}
    \end{align*}
    where $\Gamma(\mu, \nu)$ denotes the set of couplings\footnote{That is, $\Gamma(\mu, \nu) :=  \{ \gamma \in \R^{n \times n} : \sum_{i \in [n]} \gamma(i, j) = \nu(j) \text{ and } \sum_{j \in [n]} \gamma(i, j) = \mu(i)\}$. Additionally, $\supp(\gamma) := \{(x, y) \in [n] \times [n]: \gamma(x, y) > 0 \}$.} of $\mu$ and $\nu$. When $q = \infty$, we define
    \begin{align*}
        \sfW_{\infty} := \min_{\gamma \in \Gamma(\mu, \nu)} \max_{(x, y) \in \supp(\gamma)} \sfd(x, y)
    \end{align*}
\end{definition}

To estimate the $W_q$ distance, we can combine our main theorem together with a standard reduction to the problem of Minimum Cost Flow (MCF), for which we use the breakthrough algorithm of \cite{CKLPPS22} solving MCF in near-linear time. 

\begin{corollary}[$W_q$ Application] \label{cor: wasserstein-application}
    Let $\calM = ([n], \sfd)$ be a metric with aspect ratio $\Delta$, that admits a $c$-BANN oracle $\calO$ with runtime $\bnntime$, where $c \geq 1$. Then, for arbitrary discrete probability distributions $\mu, \nu \in \frac{1}{U} \Z^n$, where $U \in \N$, there exists an algorithm that computes an $O(c k^{2- 1/q})$ approximation to $\sfW_q$ over $\calM$ for $q \in [1, \infty]$. The runtime of the algorithm is $(k^2 n^{1+1/2k})^{1 + o(1)} \polylog (U \Delta) + \Tilde{O}(\bnntime(n) \cdot k n^{1/k} \log \Delta)$ and the algorithm succeeds with high probability. 
 \end{corollary}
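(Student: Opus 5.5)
Given Theorem \ref{thm: upper-bound}, the plan is a standard reduction: build the spanner, lift $\sfW_q$ on $\calM$ to a transport problem on the spanner graph, and solve that as a min-cost flow with \cite{CKLPPS22}.

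\textbf{Step 1: build the spanner.} Invoke Theorem \ref{thm: upper-bound} to obtain an $O(ck)$-distortion, $O(k)$-hop spanner $H$. It has $m=\tilde O(kn^{1+1/2k}\log\Delta)$ edges, each weighted by $\sfd$. This costs $\tilde O(\bnntime(n)\cdot kn^{1/k}\log\Delta)$ and fixes $t=O(ck)$ and hop bound $h=O(k)$.

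\textbf{Step 2: reduce to min-cost flow.} For $q=1$, $\sfW_1$ on $\calM$ is approximated within $t$ by transshipment on $H$ with edge costs $\sfd(e)$. For $q>1$, the cost is not additive along paths, so I would put cost $\sfd(e)^q$ on each edge. Consider a pair $(x,y)$ in the optimal coupling, routed along its $h$-hop spanner path $P$ of length at most $t\,\sfd(x,y)$. By convexity (power mean),
\[
\sum_{e\in P}\sfd(e)^q \le \Bigl(\sum_{e\in P}\sfd(e)\Bigr)^q \le t^q\sfd(x,y)^q .
\]
Hence the min-cost flow value $F$ satisfies $F\le t^q\,\sfW_q^q$.

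Conversely, any flow decomposes into paths. For a path $P$ from $x$ to $y$ with at most $L$ edges, the triangle inequality and power mean give
\[
\sfd(x,y)^q\le\Bigl(\sum_{e\in P}\sfd(e)\Bigr)^q\le L^{q-1}\sum_{e\in P}\sfd(e)^q .
\]
\textbf{Main obstacle: bounding $L$.} An optimal flow may use long paths, so $L$ must be controlled. I would enforce it with a layered graph of $h+1$ copies of $V$: edges of $H$ go from layer $i$ to layer $i+1$, zero-cost self-loops allow staying in place, supplies sit in layer $0$ and demands in layer $h$. This multiplies the edge count by $O(k)$, giving $O(k^2 n^{1+1/2k})$ edges up to log factors, consistent with the claimed runtime. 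Now every path has $L\le h$, so
\[
\sfW_q^q\le h^{q-1}F\le h^{q-1}t^q\,\sfW_q^q .
\]
Output $F^{1/q}$, adjusted by the appropriate deterministic scaling. Taking $q$-th roots, the approximation factor is $t\,h^{1-1/q}=O(ck)\cdot O(k^{1-1/q})=O(ck^{2-1/q})$, as claimed. The spanner paths themselves have at most $h$ hops by Theorem \ref{thm: upper-bound}, so the upper bound of the previous paragraph is realizable in the layered graph.

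\textbf{Step 3: solve and account for runtime.} Round costs $\sfd(e)^q$ to integers after scaling, with polynomial range in $\Delta^q$. Apply \cite{CKLPPS22}, which runs in $m^{1+o(1)}\polylog(U\Delta)$ time (the factor $q$ is absorbed in the polylog). Capacities are in $\frac1U\Z$, so scale by $U$. Adding the spanner construction time gives the stated bound.

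For $q=\infty$, use bottleneck transport on the same layered graph: binary search over thresholds $r$ (powers of $2$ up to $\Delta$), keep only edges of weight at most $r$, and test feasibility with a max-flow. Reachability within $h$ hops using edges of weight at most $r$ forces the endpoints to be within distance $hr$, giving factor $O(ck)$, which is at most $O(ck^{2})$. Success holds with high probability, inherited from the spanner construction.
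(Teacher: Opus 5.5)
Your core reduction is the paper's. You build the $O(k)$-hop $O(ck)$-spanner and form the $(h+1)$-layer graph with zero-cost stay edges and (rounded) costs $\sfd(e)^q$. You bound the optimum from above by superadditivity of $x\mapsto x^q$, and from below by path decomposition plus the power-mean inequality, giving $t\,h^{1-1/q}=O(ck^{2-1/q})$.

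\textbf{The gap is in Step~3.} The claim that ``the factor $q$ is absorbed in the polylog'' is false. After scaling by $\sfd_{\min}^q$, the integer costs range up to about $\Delta^q$. So the solver of \cite{CKLPPS22} takes $m^{1+o(1)}\log U\cdot q\log\Delta$ time. The corollary, however, claims a $q$-independent runtime for every $q\in[1,\infty)$. For $q\gg\polylog(U\Delta)$ your algorithm does not meet the stated bound.

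The missing idea is to cap the exponent at $\log U$, which is what Lemma~\ref{lem: wq-sandwich} is for.
\begin{itemize}
\item Since $U\mu,U\nu$ are integral, the transportation problem has an optimal $\sfW_q$ coupling with entries in $\frac1U\Z$. Every pair in its support therefore carries mass at least $1/U$, which gives $\sfW_\infty\le U^{1/q}\sfW_q$.
\item Combined with monotonicity of $\sfW_q$ in $q$, for $q>\log U$ this yields $\sfW_{\log U}\le\sfW_q\le\sfW_\infty\le 2\sfW_{\log U}$.
\item So for $q>\log U$ (including $q=\infty$) you run your layered min-cost flow with exponent $\log U$ instead. The cost range drops to $\Delta^{\log U}$, the solver overhead becomes $\polylog(U\Delta)$, and the resulting factor $O(ck^{2-1/\log U})$ is at most $O(ck^{2-1/q})$.
\end{itemize}

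You could instead hand large $q$ to your bottleneck routine. That still needs the same integrality fact, and the switch threshold must also force $k^{1/q}=O(1)$, e.g.\ $q\ge\max\{\log U,\log k\}$. The reason is that routine only achieves $O(ck^2)$, which exceeds $O(ck^{2-1/q})$ by $k^{1/q}$.

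Your separate $q=\infty$ routine (threshold search plus max-flow on the layered graph) is a legitimate alternative to the sandwich argument, but its factor is misstated. The generic $h$-hop $t$-spanner property only bounds a spanner path's total weight by $t\,\sfd(x,y)$, so one edge on it may weigh nearly $t\,\sfd(x,y)$. Feasibility is then guaranteed only at threshold $r\approx t\,\sfW_\infty$, and the extracted coupling has bottleneck at most $hr=O(ck^2)\,\sfW_\infty$, not $O(ck)$. That is still the $O(ck^2)$ the corollary asks for at $q=\infty$. Getting $O(ck)$ would require the specific fact that Algorithm~\ref{alg: fast-spanners} inserts only edges of weight at most $cr$ at scale $r$.
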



As a concrete instance of Theorem ~\ref{thm: upper-bound} and Corollary \ref{cor: wasserstein-application}, we obtain the best known results in the setting of high-dimensional $\ell_p$ space when $p > 2$. To do so, we use the ANN algorithm from \cite{AN25} to obtain the best known (i) time-size-distortion geometric spanner tradeoffs and (ii) time-approximation tradeoffs for approximate $\sfW_q$ over $\ell_p$ when $p > 2$. 

\begin{theorem}[Theorem 1.7 of \cite{AN25}]
    For any $d \geq 1$ and $\epsilon > 0$, $p > 1$, there exists an ANNS data structure under $X \subset (\R^d, \ell_p)$ where $|X| = n$ achieving $O(\log p / \epsilon)$ approximation, $n^{\epsilon} d^{O(1)}$ query time, and $n^{1+\epsilon} d^{O(1)}$ space and preprocessing time.  
\end{theorem}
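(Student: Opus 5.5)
This result is imported from \cite{AN25}, and the paper uses it only as a black box. Still, here is how I would attempt a self-contained proof.

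The plan is to reduce $\ell_p$ to an $\ell_\infty$-type problem with a randomized max-stable embedding, and then run an Indyk-style data-dependent decision tree for $\ell_\infty$. In that tree, the parameter $\epsilon$ trades space and query time against the approximation.

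\textbf{Step 1 (reduction to a decision problem).} By the standard ball-carving / ring-cover reduction, it suffices to solve the $(r, O(\log p/\epsilon)\, r)$-near neighbor decision problem. This costs only polylogarithmic overhead.

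\textbf{Step 2 (embedding).} Fix $r$. Draw independent exponential variables $E_1,\dots,E_d$ and map $x \mapsto (x_i / E_i^{1/p})_{i \in [d]}$. By max-stability, for any fixed $x-y$ the quantity $\max_i |x_i - y_i|/E_i^{1/p}$ is distributed as $\|x-y\|_p \cdot E^{-1/p}$ for a single exponential $E$. Hence near pairs stay near in $\ell_\infty$ with constant probability, and far pairs do not shrink by more than a constant factor except with small probability.

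I would repeat this with $O(\log n)$ independent copies and a voting or union scheme. The goal is a two-sided guarantee that holds up to constant factors in the probability, rather than a low-distortion embedding, which does not exist for these norms.

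\textbf{Step 3 (the $\ell_\infty$ structure).} On the embedded points I would build Indyk's recursive tree. At each node, either a coordinate threshold splits off a $\rho = n^{-\epsilon}$-balanced part at the cost of duplicating few points, or no such split exists. In the latter case all points lie in a small $\ell_\infty$ ball, and that ball certifies an approximate answer. In Indyk's analysis the approximation is governed by how fast coordinate-wise tails grow across recursion levels. I expect that for images of $\ell_p$ under Step 2 these tails are controlled by $p$ rather than by $d$, which would replace $\log\log d$ with $O(\log p)$. The extra $1/\epsilon$ factor comes from allowing $n^\epsilon$ branching, which yields $n^{1+\epsilon} d^{O(1)}$ space and $n^\epsilon d^{O(1)}$ query time.

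\textbf{Main obstacle.} The hard step is the interface between Steps 2 and 3. The embedding is only distributionally good per pair, while the tree analysis needs a worst-case balance argument over the whole dataset. I would first try to prove directly a ``balanced cut or small $\ell_p$ ball'' dichotomy for the randomized coordinates: use a Hölder-type averaging over coordinates to show that if no balanced threshold cut exists, then the point set has small $\ell_p$ diameter up to an $O(\log p/\epsilon)$ factor.

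If that fails, the fallback is to replace Step 2 with Mazur-type maps combined with the Hölder-homeomorphism framework of \cite{Andoni2018Holder}. That route gives some non-trivial approximation, but likely loses the sharp $O(\log p)$ dependence.
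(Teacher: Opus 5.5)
The paper gives no proof of this statement. It is quoted from \cite{AN25} and used purely as a black box, through the trivial ANN-to-BANN reduction, to instantiate Corollary~\ref{cor: ellp-instantiation}. So your first sentence is the paper's entire treatment and is all the paper needs. Your self-contained sketch, however, does not prove the theorem, and it already breaks at Step~2, before the interface issue you flag. For fixed $z = x-y$ you have $\Pr[\max_i |z_i| E_i^{-1/p} \le t] = \exp(-\|z\|_p^p/t^p)$. So a point at $\ell_p$-distance $R$ from the query lands within $\ell_\infty$-distance $R/K$ with probability exactly $e^{-K^p}$. That is small for one pair, but a near-neighbor structure must keep \emph{all} $n$ far points out of the near range. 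Even if you allowed up to $n^{\epsilon}$ collapsed points and checked candidates exactly (which Indyk's tree, returning a single point, does not do), you would need $K^p \ge (1-\epsilon)\ln n$ up to lower-order terms. That is a contraction loss of order $(\ln n)^{1/p}$. Repeating $O(\log n)$ times and voting or keeping the best candidate does not help. Repetition only boosts the chance that the true near neighbor survives, while every copy independently contains about $n e^{-K^p}$ collapsed far points, any of which the $\ell_\infty$ structure may legitimately return. For bounded $p$ (say $p=3$) this is a polylogarithmic factor, so Step~2 cannot give $O(\log p/\epsilon)$. It becomes harmless only once $p$ is of order $\log\log n$ or larger.

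Step~3 then asserts, rather than proves, the heart of the theorem. Indyk's structure on $\ell_\infty^d$ with space about $d n^{\rho}\log n$ has approximation $O(\log_\rho \log d)$ and query time $O(d\log n)$. Taking $\rho = 1+\epsilon$ gives approximation $O(\log\log d/\epsilon)$; that is where the $1/\epsilon$ really comes from, not from $n^{\epsilon}$ branching. The max-stable map keeps the dimension $d$, so the best your Steps~2--3 can honestly give is roughly $(\log n)^{1/p}\cdot O(\log\log d/\epsilon)$. Your expectation that on max-stable images $\log\log d$ becomes $O(\log p)$ is exactly the content of the cited result, and you give no argument for it. The Hölder-averaging ``balanced cut or small $\ell_p$ ball'' dichotomy is left undeveloped, and the Mazur-map fallback, as you concede, does not reach $O(\log p)$. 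It helps to see where the difficulty actually lies. If $p \ge \log_2 d$, then $\|x\|_\infty \le \|x\|_p \le 2\|x\|_\infty$, so Indyk's bound $O(\log\log d/\epsilon) \le O(\log p/\epsilon)$ applies directly and the statement is essentially free. All the work is in the regime $p \ll \log d$. There something genuinely new is needed, for instance a way to make an $\ell_\infty$-type analysis pay only for an effective dimension $2^{O(p)}$ instead of $d$, and nothing in your sketch supplies it.
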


\begin{corollary}[$\ell_p$ Instantiation] \label{cor: ellp-instantiation}
    Let $d \geq 1, \epsilon > 0, p > 1$ and $X \subset (\R^d, \ell_p)$ with aspect ratio $\Delta$ and $|X| = n$. Then, for any $k \in \N$,
    \begin{enumerate}[label=(\roman*)]
        \item There exists an algorithm that constructs an $O(k \log p /\epsilon)$-distortion (undirected) spanner with $\Tilde{O}(kn^{1+1/2k} \log \Delta )$ edges in time $n^{1+\epsilon} d^{O(1)} \cdot kn^{1/k} \log \Delta$ and succeeds with high probability. Moreover, the spanner is an $O(k)$-hop spanner. 
        \item For arbitrary discrete probability distributions $\mu, \nu \in \frac{1}{U} \Z^n$, where $U \in \N$, there exists an algorithm that computes an $O(k^{2- 1/q} \log p/ \epsilon)$ approximation to $\sfW_q$ over $\calM$ for $q \in [1, \infty]$. The runtime of the algorithm is $(k^2 n^{1+1/2k})^{1 + o(1)} \polylog (U\Delta)+ n^{1+\epsilon} d^{O(1)} \cdot k n^{1/k} \log \Delta$ and the algorithm succeeds with high probability.
    \end{enumerate} 
    
\end{corollary}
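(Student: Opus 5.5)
The corollary follows by plugging the ANN data structure of \cite{AN25} (Theorem 1.7 of \cite{AN25}) into Theorem~\ref{thm: upper-bound} and Corollary~\ref{cor: wasserstein-application} as the oracle $\calO$. The only real work is to show that this single-query ANN structure gives a $c$-BANN oracle in the sense of Definition~\ref{def: bann}, with $c = O(\log p/\epsilon)$ and $\bnntime(n) = n^{1+\epsilon} d^{O(1)}$. The black-box results then give both parts, with the constants absorbed.

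\textbf{Step 1: building the oracle.} Fix a call $\calO(Q, X, c)$ with $Q, X \subseteq [n]$, identifying $[n]$ with the point set $X \subset \R^d$. Build the ANN data structure of \cite{AN25} on the target set $X$; this costs $|X|^{1+\epsilon} d^{O(1)} \le n^{1+\epsilon} d^{O(1)}$. Then issue one query per $q \in Q$, at $n^{\epsilon} d^{O(1)}$ each, for $n^{1+\epsilon} d^{O(1)}$ in total. For each returned point $x^\ast$, compute $\|q - x^\ast\|_p$ exactly in $O(d)$ time, so the oracle outputs the required triples $(q, x^\ast, \sfd(q,x^\ast))$.

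\textbf{Step 2: success probability.} The \cite{AN25} guarantee is presumably per query with constant or $1-1/\poly(n)$ probability, whereas Definition~\ref{def: bann} needs all $|Q|$ answers correct simultaneously with probability $\ge 1 - 1/n$. I would handle this in one of two ways:
\begin{itemize}
\item if the structure succeeds with high probability per query, take a union bound over the at most $n$ queries;
\item otherwise, build $O(\log n)$ independent copies and, for each $q$, keep the returned candidate with the smallest exact distance $\|q - x^\ast\|_p$.
\end{itemize}
Keeping the minimum is valid because each candidate's distance is computed exactly and only an upper bound $c \cdot \min_{x \in X}\|q-x\|_p$ is needed. Amplification costs an $O(\log n)$ factor, which the $\tilde O$ in Theorem~\ref{thm: upper-bound} absorbs; alternatively, $n^{\epsilon}$ can be reparametrized to swallow it. This bookkeeping is the main point to check, because the statement of \cite{AN25} is quoted without its failure probability.

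\textbf{Step 3: part (i).} With $c = O(\log p/\epsilon)$, Theorem~\ref{thm: upper-bound} gives:
\begin{itemize}
\item distortion $O(ck) = O(k\log p/\epsilon)$;
\item $\tilde O(kn^{1+1/2k}\log\Delta)$ edges;
\item $O(k)$ hops;
\item time $\tilde O(n^{1+\epsilon}d^{O(1)} \cdot kn^{1/k}\log\Delta)$.
\end{itemize}
Polylogarithmic factors can be absorbed into $n^{\epsilon}$ by running with $\epsilon/2$, which changes the approximation only by a constant factor. Every edge of the spanner is an undirected pair with its exact weight, which gives the undirected claim.

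\textbf{Step 4: part (ii).} Apply Corollary~\ref{cor: wasserstein-application} with the same oracle. It yields an $O(ck^{2-1/q}) = O(k^{2-1/q}\log p/\epsilon)$ approximation to $\sfW_q$. The running time is $(k^2n^{1+1/2k})^{1+o(1)}\polylog(U\Delta)$ plus the oracle-dependent term $\tilde O(n^{1+\epsilon}d^{O(1)} kn^{1/k}\log\Delta)$. The same absorption of logarithmic factors as in Step 3 gives the stated bound.
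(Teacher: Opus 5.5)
Your proposal is correct and follows the route the paper intends; the paper gives no separate proof and presents the corollary as a direct instantiation. That route is: turn the \cite{AN25} structure into a $c$-BANN oracle via the trivial reduction $\bnntime = \rho_{\calM} + |Q|\kappa_{\calM} = n^{1+\epsilon}d^{O(1)}$ with $c = O(\log p/\epsilon)$, then invoke Theorem~\ref{thm: upper-bound} and Corollary~\ref{cor: wasserstein-application} as black boxes. Your extra bookkeeping is more careful than the paper, which simply drops the $\tilde{O}$ from the stated runtimes: you amplify the per-call success probability, and you rescale $\epsilon$ so the hidden factors disappear. Those factors are polylogarithmic in $nk\log\Delta$, so they are absorbed by $n^{\epsilon/2}$ under the implicit assumption $\log k, \log\log\Delta = O(\log n)$.
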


Similarly, best known algorithms can be obtained for other spaces with specialized ANN algorithms, including Schatten-$1$/Schatten-$2$ \cite{Kush2021Near}, symmetric normed spaces \cite{Andoni2017Approximate}, and general norm spaces~\cite{Andoni2018Holder}.\footnote{Some of these algorithms have polynomial-in-$n$ preprocessing times (rather than near-linear), but a standard reduction can reduce preprocessing time to near-linear, at the cost of increasing query time (still sublinear).}


\subsection{Related Work} \label{sec: related-work}

\paragraph{Geometric spanners.}
There is a long line of work on geometric spanner constructions, beginning with classical constructions in low-dimensional Euclidean space \cite{Chew86, Keil88,KG92} and extending more recently to high-dimensional settings \cite{HIS13,AZ23,JPSWZ26}. In particular, \cite{AZ23} gives subquadratic directed Steiner spanners in high-dimensional Euclidean space, while \cite{JPSWZ26} gives nearly optimal directed Steiner spanners for high-dimensional $\ell_p$ spaces with $p\in[1,2]$. For $p>2$, however, there remains a gap between existential and subquadratic-time constructions. Recent work \cite{KP25} shows the existence of spanners with $O(t)$ distortion and $\widetilde O(n^{1+1/t^q})$ size, where $q=\frac{p}{p-1}$, improving upon the size-distortion tradeoff of the standard greedy construction \cite{ADDJS93}, but does not provide a subquadratic construction-time bound. Our work instead asks what spanner guarantees can be obtained in general metric spaces from black-box access to efficient Batch Approximate Nearest Neighbor Search.

\paragraph{Approximate nearest neighbor search as a primitive.}
Approximate nearest neighbor search and techniques such as locality-sensitive hashing have been extensively studied in high-dimensional spaces \cite{IM98,DIIM04,AI06,AR15,AR16}. Beyond nearest neighbor search itself, these techniques have also been used as algorithmic tools for approximate minimum spanning trees~\cite{HIM12}, hierarchical clustering~\cite{MVW21}, and geometric spanner constructions \cite{HIS13,AZ23,JPSWZ26}. 


\paragraph{Wasserstein-$q$ Approximation in General Metric Spaces.}

There has been much prior work on efficient algorithms for optimal transport under general and geometric cost functions. A large body of work studies additive approximations and entropically regularized optimal transport \cite{Cut13,AWR17,DGK18,LHJ19}, while fewer results are known for relative approximations to Wasserstein-$q$ distance over arbitrary metric spaces. 

The recent work \cite{LRSS25} gives, for every constant $q\ge2$, an $O(\log n)$-approximation to $W_q$ in $O(n^2\log U\log\Delta\log n)$ time, while \cite{ACRY26} gives a $(4+\epsilon)$-approximation for all $q\in[1,\infty]$. For $q=1$ over uniform distributions, \cite{AS14} gives algorithms for metric bipartite matching using dynamic approximate nearest neighbor search, and there is also a literature on embeddings, sketches, and streaming algorithms for EMD \cite{NS07, ADBIW09}. In contrast, our reduction uses static offline BANN and first constructs a bounded-hop spanner before reducing to minimum-cost flow.

\paragraph{Wasserstein-$q$ Approximation in Low-Dimensional $\ell_p$ Spaces.}

Faster algorithms are known in low-dimensional Euclidean spaces. For $q=1$, several works give subquadratic and near-linear-time approximation schemes for geometric transportation \cite{AFPVX17,KNP19, FL19}. For $q=2$ over uniform distributions, \cite{AP06} gives a $(1+\epsilon)$-approximation in $\widetilde O((n/\epsilon)^{3/2})$ time for planar point sets, while \cite{LR21} improves this to $\widetilde O(n^{5/4}\operatorname{poly}(1/\epsilon))$. Related work also gives subquadratic algorithms for $W_\infty$ over uniform distributions through bottleneck matching \cite{LR21}.

\paragraph{Wasserstein-$q$ Approximation in High-Dimensional $\ell_p$ Spaces.}

Prior work has also studied EMD through embeddings, sketching, and nearest-neighbor search, including in high-dimensional settings \cite{IT03,KN06,AIK08,BDIRW20,CJLW22,JWZ24}. In particular, \cite{AIK08} gives an $O(\log s\log d)$-distortion embedding for sets of cardinality at most $s$ in dimension $d$, together with corresponding lower bounds for sketching and communication. More recently, geometric spanner constructions have led to truly subquadratic algorithms for EMD and Wasserstein distances. The high-dimensional Euclidean construction of \cite{HIS13} gives an $O(c)$-distortion spanner with $\widetilde O(n^{1+1/c^2})$ edges, which can be combined with minimum-cost flow to obtain an $O(c)$ approximation for EMD. The work of \cite{AZ23} subsequently gives $(1+\epsilon)$-approximate directed Steiner spanners of size $n^{2-\Omega(\epsilon^2)}$ and a $(1+\epsilon)$-approximation for high-dimensional Euclidean EMD in $n^{2-\Omega(\epsilon^2)}$ time. Recent work of \cite{JPSWZ26} proves nearly optimal $(1+\epsilon)$-approximate directed Steiner spanners for high-dimensional $\ell_p$, for $p\in[1,2]$, with $\widetilde O(n^{2-\Omega(\epsilon)})$ edges and nearly matching construction time. As an application, they provide faster algorithms for Wasserstein-$q$ distances for $q\in[1,\infty)$. 

For $q=1$, \cite{BCJW25} instead bypasses spanner construction by reducing $(1+\epsilon)$-approximate EMD in high-dimensional $\ell_p$, for $p\in[1,2]$, to $(1+\epsilon)$-approximate bichromatic closest pair. Combined with~\cite{ACW16}, this gives a $(1+\epsilon)$-approximation in $n^{2-\widetilde{\Omega}(\epsilon^{1/3})}$ time, improving upon the previous $n^{2-\Omega(\epsilon^2)}$ bound of \cite{AZ23}.

\subsection{Technical Overview} \label{sec: technical-overview}

We now provide an overview of the relevant ideas and conceptual contributions made by this work. We first discuss our upper bound (Theorem \ref{thm: upper-bound}), then the lower bound construction (Theorem \ref{thm: lower-bound}), and then finally the application to Wasserstein distance approximation (Corollary \ref{cor: wasserstein-application}). 

\subsubsection{Upper Bound}

We start with the goal of answering Question \ref{q:spannerBANN}. Our first observation is that any such algorithm \emph{must} recover efficient algorithms for graph spanner constructions. This is due to the fact that any graph $G$ with $n$ vertices and $m$ edges trivially admits a BANN algorithm with runtime $\tilde O(m)$ that simply runs multi-source shortest paths from the query set. Thus, a natural starting point is to adapt existing spanner or distance oracle constructions for graph metrics. 

A natural candidate algorithm is the  classic Thorup-Zwick distance oracle construction \cite{TZ05}, especially due to its use of ``nearest neighbor type'' operations. Introducing an idea, described below, we can obtain a form of Theorem~\ref{thm: upper-bound}, but the overall approximation comes out to be $\approx c^k$.


To obtain a better (optimal) dependence on $k$ of Theorem \ref{thm: upper-bound}, we develop 
a geometric version of the Baswana-Sen graph spanner construction \cite{BS07} instead. Next, we describe the steps taken to reach that construction and the intuition that led there. 


\paragraph{A First Attempt: Thorup-Zwick \cite{TZ05}.} The Thorup-Zwick distance oracle construction is a natural starting point since it is built almost entirely out of nearest neighbor queries. One samples a nested sequence $[n] = A_0 \supseteq A_1 \supseteq \dots \supseteq A_{k-1}$, keeping each point of $A_i$ in $A_{i+1}$ independently with probability $n^{-1/k}$. We let $p_i(v)$ denote the point of $A_i$ nearest to $v$, and define the \emph{bunch} of $v$ to be 
\begin{align*}
    B(v) := \bigcup_{i} \left \{ u \in A_i \mid \sfd(v, u) < \sfd(v, p_{i+1}(v)) \right\},
\end{align*}
the points of each level that are strictly closer to $v$ than $v$'s pivot at the next level. The spanner consists of the edges from every $v$ to its pivots and to every vertex in its bunch. Note that the bunches have small size in expectation. To see this, consider listing all $n$ points in order of increasing distance from $v$. Each point is promoted to $A_{i+1}$ independently with probability $n^{-1/k}$, so one expects to encounter $n^{1/k}$ points before the first promoted one, giving $|B(v)| = O(kn^{1/k})$ in expectation. In the original construction, the pivots and bunches are found via brute force distance computations which ultimately contribute to the quadratic runtime.

The distortion guarantee of the implicitly stored spanner comes from an alternating search. To connect $u$ and $v$ at distance $\sfd(u,v)$, one checks whether $p_i(u) \in B(v)$. If so, the path $u \to p_i(u) \to v$ lies in the spanner and we stop, and if not the roles of $u$ and $v$ are swapped and $i$ is incremented. Observe that $A_{k-1} \subseteq B(w)$ for every vertex $w$, so the process must terminate after at most $k-1$ steps. The point is that failure of the check is informative. By definition of the bunch, $p_i(u) \notin B(v)$ means $\sfd(v, p_i(u)) \geq \sfd(v, p_{i+1}(v))$, so 
\begin{align*}
    \sfd(v, p_{i+1}(v)) \, \leq \, \sfd(v, p_i(u)) \, \leq \, \sfd(u,v) + \sfd(u, p_i(u))
\end{align*}
Thus, $\sfd(v, p_{k-1}(v)) \leq (k-1) \cdot \sfd(u, v)$ by a telescoping sum, and we can conclude that the spanner path $u \to p_{k-1}(v) \to v$ has length $\sfd(u, p_{k-1}(v)) + \sfd(v, p_{k-1}(v)) \leq (2k-1 ) \cdot \sfd(u, v)$. Observe that the size of the implicitly stored spanner is simply $\widetilde{O}(n^{1+1/k})$.

Now, consider an implementation of the Thorup-Zwick distance oracle via $c$-BANN oracle calls. Any such algorithm must compute suitable approximate versions of the pivots and bunches via $c$-BANN. Finding approximate pivots amounts to a simple BANN call on target sets $A_0, \ldots, A_{k-1}$, however, it is not immediately clear how to compute the vertex bunches $B(v)$ quickly. 

A $c$-BANN oracle does not return $p_{i+1}(v)$, but only some $\tilde{p}_{i+1}(v)$ with $\sfd(v, \tilde{p}_{i+1}(v)) \leq c \cdot \sfd(v, A_{i+1})$, and bunches are accordingly defined by thresholding against this approximate distance. There are two natural ways to consider defining this threshold
\begin{enumerate}
    \item Thresholding at $\sfd(v, \tilde{p}_{i+1}(v))$ lets a bunch extend a factor $c$ past the first sampled point. In this case, however, we can no longer guarantee that the bunches have bounded size (and therefore, that the spanner size is bounded).
    \item Thresholding instead at $\sfd(v, \tilde{p}_{i+1}(v))/c \leq \sfd(v, A_{i+1})$ retains the guarantee of bounded bunch sizes. In this case, we can compute bunches efficiently, using a ``hashing trick'' that we will discuss below, in The Crux paragraph in the discussion of the Baswana-Sen construction. However, the Thorup-Zwick construction necessarily incurs exponential-in-$k$ distortion. This is because a failed check only certifies $\sfd(v, \tilde{p}_{i+1}(v)) \leq c \cdot \sfd(v, p_i(u))$, and the recursion above degrades to
    \begin{align*}
        \sfd(v, \tilde{p}_{i+1}(v)) \, \leq \, c (\sfd(u,v) + \sfd(u, p_i(u)))
    \end{align*}
    which unrolls to give a distortion exponential in $k$. The difficulty comes from the chained levels that force the factor of $c$ to be paid $k$ times. 
\end{enumerate}

\paragraph{A Second Attempt: Baswana-Sen Spanners \cite{BS07}.} The Baswana-Sen construction allows us to charge the factor $c$ cost additively rather than multiplicatively $k$ times. The original construction \cite{BS07} for unweighted graphs maintains a partition of a shrinking set of \emph{active} vertices into \emph{clusters}, each with a designated center. Initially every vertex forms its own cluster, then in each of $k-1$ iterations, every cluster is sampled independently with probability $n^{-1/k}$. 

An active vertex $v$ lying in an unsampled cluster does one of two things. If $v$ has a neighbor in some sampled cluster, it adds one edge to the nearest such neighbor and joins that cluster. Otherwise $v$ is \emph{retired}, and adds one edge to each of its adjacent clusters. Observe that at iteration $i$, every cluster has radius at most $i$, since it grows by at most one per iteration. Furthermore, a retired vertex has $O(n^{1/k})$ adjacent clusters in expectation, since each cluster was sampled independently with probability $n^{-1/k}$ and none of $v$'s adjacent clusters was sampled. This gives $O(k)$ distortion with size $O(n^{1+1/k})$ in expectation. 

Now, we consider the changes needed to run this in a metric space with a BANN oracle. First, in a metric space all pairs are adjacent, so adjacency must be replaced by a distance scale. We fix $r = 2^{\ell}$ for $\ell \in [\log \Delta]$, run the entire construction independently at each scale, and return the union of the resulting edge sets. A pair $x,y$ is then handled at the scale with $r/2 < \sfd(x,y) \leq r$. Second, the step ``join the nearest sampled cluster'' is precisely a batch nearest neighbor query. At iteration $i$ of scale $r$, letting $S_{\ell, i}$ denote the vertices lying in sampled clusters and $U_{\ell,i}$ the active vertices lying in unsampled clusters, the single call $\calO(U_{\ell, i}, S_{\ell, i}, c)$ serves every vertex at once. If the returned point $u$ satisfies $\sfd(v,u) \leq c^2 r$ then $v$ adds the edge $(v,u)$ and joins $u$'s cluster, and otherwise $v$ is retired. A retired vertex certifies that no cluster within distance $cr$ of it was sampled. This is because any such close cluster being sampled would then surface a close vertex $u$, preventing $v$ from being retired in the first place. Exactly as in the original construction, a retired vertex therefore has at most $\Tilde{O}(n^{1/k})$ clusters within distance $cr$ of it, with high probability.

\paragraph{The Crux: the ``hashing trick''.} Two difficulties stand in the way. First, when a vertex is retired, it is not feasible to find and add a suitable short edge to each nearby cluster. An analogous step to that in the original construction would entail one oracle call per cluster, of which there are potentially very many. On the other hand, a single call with target set all of $U_{\ell, i}$ returns only one point, and hence only one edge. Second, the call $\calO(U_{\ell, i}, S_{\ell, i}, c)$ is made to play two roles at once. (1) It must supply the attachment edge of $v$, and (2) it must certify that no sampled cluster lies near $v$. Because the oracle answers only up to a factor of $c$, the only way to conclude that no sampled cluster lies within distance $cr$ of $v$ (the radius out to which the retirement step must reach) is to retire $v$ when the returned distance exceeds $c^2 r$. Every shorter answer must then be accepted as an attachment edge, so cluster radii grow by $c^2 r$ per iteration and the construction described so far incurs distortion $O(c^2 k)$.

Both difficulties are resolved by the same device, \emph{isolating} clusters by hashing. At iteration $i$, and prior to sampling, every active vertex $v$ assembles a \emph{candidate list} $L_v$ of at most $\tau$ pairs $(C, z)$ with $z \in C$ and $\sfd(v, z) \leq cr$. To build these lists, we hash the active clusters $\calC_{\ell,i}$ into $B = \Theta(\tau)$ buckets under a uniformly random hash function, and call the oracle once per bucket $j$. Here, the target set is the active vertices whose cluster hashes to $j$, and the query set consists of those $v$ whose list is not yet full and none of whose \emph{known} clusters hashes to $j$. Any answer of length at most $cr$ is appended to $L_v$, and the restriction on the query set guarantees that it belongs to a cluster $v$ has not yet recorded, so no oracle call is wasted. A cluster within distance $r$ of $v$ that is still missing from $L_v$ avoids all of the at most $\tau + 1$ known clusters with constant probability, in which case its bucket is queried and the oracle must answer with a point at distance at most $cr$. Each repetition of the hashing therefore adds a new cluster to $L_v$ with constant probability, and $\Tilde{\Theta}(\tau)$ repetitions suffice for every list to either fill up or contain every cluster within distance $r$ (Lemma \ref{lem: list-completeness}).

Only now do we sample, and $v$ attaches to a sampled cluster of $L_v$ if one exists, retiring otherwise. Because attachment is drawn from within a list that was fixed beforehand, the attachment edge has already been certified at threshold $cr$ and no second approximation is incurred. That is, cluster radii grow by $cr$ per iteration rather than $c^2 r$. The dichotomy above is precisely what the two cases require. If $L_v$ is full, it holds $\tau$ distinct clusters, each sampled independently, so with high probability one of them is sampled and $v$ attaches (Lemma \ref{lem: full-lists-sampled}). If instead $v$ retires, then $L_v$ was not full and is therefore complete, so the edges added at retirement reach every cluster within distance $r$ of $v$ (Lemma \ref{lem: hashing-correctness}). 




\subsubsection{Lower Bound}

We now describe the ideas behind Theorem \ref{thm: lower-bound}, that any reduction from spanner construction to BANN must incur $\Omega(ck)$ distortion, if the algorithm only learns  $o(n^{1+1/k}/k)$ pairs via the BANN oracle and an exact pairwise distance oracle. We note that the upper bound does not use any (exact) distance oracle calls, but we find it natural to allow the reduction to use a proportionate number of distance oracle calls.

To establish the lower bound we exhibit an explicit hard metric, as well as adversarial answers to the BANN calls.
Our hard instance needs to simultaneously incorporate two sources of approximation:
\begin{enumerate}
    \item The $\Omega(c)$ distortion: from the $c$-BANN oracle hiding the true nearest neighbor. 
    \item The $\Omega(k)$ distortion: from the reduction using too few oracle calls.
\end{enumerate}

  The main challenge is to combine these two sources of approximation such that their distortion composes {\em multiplicatively} rather than additively. 
  
  The hard instance is as follows. First of all, note that any added spanner edge weight should always be equal to the distance between its endpoints in the original metric (smaller weight would violate the approximation guarantee, and a larger weight provides no advantages). Therefore we can assume the algorithm only adds edges between pairs that are queried against the oracles, and its spanner has at most $n^{1+1/k}$ edges (this is how many distances we learn from the allowed number of oracle calls). Now, let $G$ be a graph of girth $\Omega(k)$ with $\omega(n^{1+1/k})$ edges, where all edges have weight $c$ except for one  \emph{special} edge $e^{*} = (u^{*}, v^{*})$ chosen uniformly at random, whose weight is 1. The metric $\sfd$ is the usual shortest path distance. Then $\sfd(u^{*}, v^{*}) = 1$, while the large girth pushes every other route in $G$ to $\Omega(ck)$ length. This is where the factors multiply. Girth lower bounds the number of hops a detour needs and the weights fix the price of a hop. It is unlikely that the edge is found directly because there are not enough distance oracle calls, and an adversarial $c$-BANN can always hide $e^*$ as well (returning another incident edge when, say, querying $u^*$). 

However, simply \emph{hiding} the edge $e^* $is not enough. The spanner algorithm can lower the $u^{*}\to v^{*}$ distance without ever seeing $e^{*}$. Specifically, there may potentially exist a path from $u^*$ to $v^*$ in the spanner of weight $o(ck)$ since the constructed spanner does not have to be a subgraph of $G$ itself and may have edges between, e.g., neighbors of $u^*,v^*$. 

To overcome this challenge, we characterize such alternative paths in Lemma~\ref{lem:re-routing}. In particular, we show that a spanner without $e^{*}$ that routes $u^{*}$ to $v^{*}$ at weight $o(ck)$ must contain an edge between a pair whose shortest path in $G$ passes through $e^{*}$ and has (unweighted) length $o(k)$. Then let $S$ be the set of the edges between all such pairs and the edges along their shortest path in $G$. Then the contrapositive of the above implies as long as $e^*$ is not in $S$, then the spanner cannot re-route $u^*$ to $v^*$ with distortion $o(ck)$. The spanner can only contain $o(n^{1+1/k}/k)$ edges and each shortest path is of length $o(k)$, so $|S| \leq o(n^{1+1/k}/k) \cdot k = o(n^{1+1/k})$. Since $e^*$ is uniformly distributed over $\Omega(n^{1+1/k})$ edges of $G$, one can expect $e \not \in S$ with high probability.



Finally, two sources of randomness, from $e^{*}$ and the spanner algorithm,  remain to be properly handled. The algorithm is adaptive, so its transcript depends on $e^{*}$. We therefore first assume the algorithm is deterministic and run it against the generic policy, which answers as though every edge weighed $c$, and plant $e^{*}$ outside $S$ only afterwards (with high probability). Then the generic run and the true run cannot be told apart, which enables us to apply the above analysis and lower bound the distortion of deterministic algorithms. Yao's principle allows us to extend the lower bound to randomized algorithms.

\subsection{Organization of the Paper}

Section \ref{sec: preliminaries} contains relevant definitions and preliminaries that are used throughout the paper. Section \ref{sec: fast-spanners} contains the algorithm and proof of Theorem \ref{thm: upper-bound}. Section \ref{sec: ck-lower-bound} contains the proof of Theorem \ref{thm: lower-bound}. Section \ref{sec: wasserstein-application} outlines the application of Theorem \ref{thm: upper-bound} to fast approximations of Wasserstein-$q$ distance, containing the proof of Corollary \ref{cor: wasserstein-application}.

\section{Preliminaries} \label{sec: preliminaries}

\subsection{Spanners}

\begin{definition}[$t$-Spanner]
    Let $\calM = ([n], \sfd)$ be a metric space. Graph $H = ([n], E_H)$ is a $t$-spanner, for $t \geq 1$, if 
    \begin{align*}
        \forall i, j \in [n]. \; \; \sfd_H(i, j) \leq t \cdot \sfd(i, j)
    \end{align*}
    where $\sfd_H$ denotes the shortest path metric in $H$.  
\end{definition}

\begin{definition}[$k$-Hop $t$-Spanner]
    Let $\calM = ([n], \sfd)$ be a metric space. Graph $H = ([n], E_H)$ is a $k$-hop $t$-spanner, for $k, t \geq 1$, if 
    \begin{align*}
        \forall i, j \in [n]. \; \; \exists (i, u_1), \ldots, (u_{k-1}, j) \in E_H. \; \; \sfd_H(i, u_1) + \sum_{\ell =1}^{k-2} \sfd_H(u_{\ell}, u_{\ell+1}) + \sfd_H(u_{k-1}, j) \leq t \cdot \sfd(i, j)
    \end{align*}
    That is, for any $i, j \in [n]$, there exists a path of length exactly $k$ between them in $H$ whose total distance is at most $t$ times that of the original distance $\sfd(i, j)$. Here, $\sfd_H$ denotes the shortest path metric in $H$. 
\end{definition}

\subsection{Approximate Nearest Neighbor Search}

In this work, we formulate Approximate Nearest Neighbor search as an online problem, while all other problems are thought to be offline. 

\begin{definition}[Approximate Nearest Neighbor Search]
    Let $\calM$ be a set of $n$ points in a metric space $(\calX, \sfd)$ and let $c \geq 1$. Then, the Approximate Nearest Neighbor Search problem asks to design a data structure $\calS$ which for any target set $T \subseteq \calM$, on any query $q \in \calX$, returns a point $p' \in T$ such that
    \begin{align*}
        \sfd(q, p') \leq c \cdot \min_{p \in T} \sfd(q, p)
    \end{align*}
    We let $\rho_{\calM}$ and $\kappa_{\calM}$ denote the preprocessing and query times of $\calS$ respectively. 
\end{definition}

\begin{definition}[Batch Approximate Nearest Neighbor Search]
    Let $\calM$ be a set of $n$ points in a metric space $(\calX, \sfd)$ and let $c \geq 1$. Then, the Batch Approximate Nearest Neighbor Search problem asks to design an algorithm $\calA$ which, on any target set $T \subseteq \calM$ and query set $Q \subseteq \calX$, returns the set of tuples $\{ (q, p', \sfd(q, p') ) \}_{q \in Q}$ such that
    \begin{align*}
        \forall q \in Q. \; \; \sfd(q, p') \leq c \cdot \min_{p \in T} \sfd(q, p)
    \end{align*}
    We let $\tau_{\calM}$ denote the runtime of $\calA$.  
\end{definition}

As illustrated in Figure \ref{fig: landscape}, one can use data structures for ANN to construct an algorithm for BANN. That is, given an ANN data structure with preprocessing time $\rho_{\calM}$ and query time $\kappa_{\calM}$, there trivially exists an algorithm for BANN with runtime $\bnntime = \rho_{\calM} + |Q| \cdot \kappa_{\calM}$. 

 We note $\bnntime$ is a function of the sizes of the query and target sets. Throughout this work, our algorithms will only ever invoke a BANN algorthm with query set $Q \subseteq \calM$. We make the natural assumption that $\bnntime$ is a monotonic function. Therefore, we write all our runtimes in terms of $\bnntime(n)$ where we have upper bounded $|Q|, |T| \leq n$. 

 \subsection{Wasserstein-$q$ distance}

We've already defined the Wasserstein-$q$ distance in Definition \ref{def: wq}. Here, we state and prove a standard lemma about $\sfW_q$ distance that will be useful in the proof of Corollary \ref{cor: wasserstein-application}. 
 
\begin{lemma} \label{lem: wq-sandwich}
    Let $\calM = ([n], \sfd)$ be a metric space, let $U \geq 1$ be an integer, and let $\mu, \nu \in \frac{1}{U} \Z^n$ be distributions on $[n]$. Then, for every $q \geq 1$,
    \begin{align*}
        \sfW_q(\mu, \nu) \; \leq \; \sfW_{\infty}(\mu, \nu) \; \leq \; U^{1/q} \cdot \sfW_q(\mu, \nu)
    \end{align*}
    In particular, taking $q = \log U$ gives $\sfW_{\log U}(\mu, \nu) \leq \sfW_{\infty}(\mu, \nu) \leq 2 \cdot \sfW_{\log U}(\mu, \nu)$.
\end{lemma}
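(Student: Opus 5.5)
The first inequality follows by comparing expectations with maxima under one coupling. The second follows by choosing an optimal $\sfW_q$ coupling whose masses are all multiples of $1/U$. Throughout, $\Gamma(\mu,\nu)$ is the set of couplings from Definition \ref{def: wq}.

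\textbf{First inequality.} Let $\gamma^\ast \in \Gamma(\mu,\nu)$ attain $\sfW_\infty(\mu,\nu)$. The minimum exists because $\Gamma(\mu,\nu)$ is a compact polytope and the map $\gamma \mapsto \max_{\supp(\gamma)} \sfd$ takes finitely many values. For every $(x,y) \in \supp(\gamma^\ast)$ we have $\sfd(x,y)^q \le \sfW_\infty(\mu,\nu)^q$. Averaging under $\gamma^\ast$ gives
\[
\Ex_{(\bx,\by)\sim\gamma^\ast}\left[\sfd(\bx,\by)^q\right] \le \sfW_\infty(\mu,\nu)^q .
\]
Minimizing over couplings then yields $\sfW_q(\mu,\nu) \le \sfW_\infty(\mu,\nu)$.

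\textbf{Second inequality.} The key step is to find an optimal coupling for $\sfW_q^q$ whose entries lie in $\frac{1}{U}\Z$.

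\begin{itemize}
\item The quantity $\sfW_q(\mu,\nu)^q$ is the optimum of a linear program: minimize $\sum_{i,j}\gamma(i,j)\,\sfd(i,j)^q$ over the transportation polytope $\Gamma(\mu,\nu)$.
\item This polytope is nonempty and bounded, so the optimum is attained at a vertex.
\item After the change of variables $\gamma' = U\gamma$, the constraints read: row sums equal $U\mu(i) \in \Z$ and column sums equal $U\nu(j) \in \Z$.
\item The constraint matrix is the incidence matrix of the complete bipartite graph $K_{n,n}$, which is totally unimodular. Since the right-hand side is integral, every vertex is integral.
\end{itemize}

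Hence there is an optimal $\gamma$ with $U\gamma(i,j) \in \Z_{\ge 0}$ for all $i,j$. In particular, every $(x,y) \in \supp(\gamma)$ has $\gamma(x,y) \ge 1/U$.

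For any such pair,
\[
\frac{1}{U}\,\sfd(x,y)^q \;\le\; \gamma(x,y)\,\sfd(x,y)^q \;\le\; \sum_{i,j}\gamma(i,j)\,\sfd(i,j)^q \;=\; \sfW_q(\mu,\nu)^q .
\]
Therefore $\max_{(x,y)\in\supp(\gamma)} \sfd(x,y) \le U^{1/q}\,\sfW_q(\mu,\nu)$. Because $\gamma$ is a feasible coupling, $\sfW_\infty(\mu,\nu)$ is at most this maximum, which proves the second inequality.

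\textbf{The special case.} Take $q = \log U$ with the logarithm in base $2$, and assume $U \ge 2$ so that $q \ge 1$. Then $U^{1/q} = 2$, and the claimed bound $\sfW_{\log U} \le \sfW_\infty \le 2\,\sfW_{\log U}$ follows.
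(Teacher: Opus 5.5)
Your proof is correct and follows the paper's argument. For the first inequality you average under a $\sfW_\infty$-optimal coupling. For the second you take an optimal $\sfW_q$ coupling with entries in $\frac{1}{U}\Z$ and bound the heaviest pair in its support, which has mass at least $1/U$. You justify that integral coupling via total unimodularity of the transportation constraints, while the paper cites integrality of min-cost flow; these are the same fact. Your remark that $q=\log_2 U\ge 1$ requires $U\ge 2$ is a small point the paper leaves implicit.
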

 
\begin{proof}
    We write $\mathrm{supp}(\Gamma) = \{ (i,j) : \gamma_{ij} > 0 \}$. Then
    \begin{align*}
        \sfW_q(\mu, \nu)^q &= \min_{\gamma \in \Gamma(\mu, \nu)} \; \sum_{i, j \in [n]} \gamma_{ij} \cdot \sfd(i,j)^q \\
        \sfW_{\infty}(\mu, \nu) &= \min_{\gamma \in \Gamma(\mu, \nu)} \; \max_{(i,j) \in \mathrm{supp}(\gamma)} \sfd(i,j)
    \end{align*}
    For the first inequality, let $\gamma$ be optimal for $\sfW_{\infty}$. Since $\gamma$ is a distribution on $[n]^2$ and $\sfd(i,j) \leq \sfW_{\infty}(\mu, \nu)$ for every $(i,j) \in \mathrm{supp}(\gamma)$,
    \begin{align*}
        \sfW_q(\mu, \nu)^q \; \leq \; \sum_{i, j \in [n]} \gamma_{ij} \cdot \sfd(i,j)^q \; \leq \; \max_{(i,j) \in \mathrm{supp}(\gamma)} \sfd(i,j)^q \; = \; \sfW_{\infty}(\mu, \nu)^q
    \end{align*}
 
    For the second inequality, we use the following integrality fact: since $U\mu, U\nu \in \Z_{\geq 0}^n$, the transportation problem with supplies $U\mu$ and demands $U\nu$ has an integral optimal solution by the integrality of minimum-cost flow \cite[Theorem 9.10]{AMO93}. Scaling by $U$, there is an optimal coupling $\gamma^{\star}$ for $\sfW_q$ with $\gamma^{\star} \in \frac{1}{U} \Z^{n \times n}$. Let $(i^{\star}, j^{\star}) \in \mathrm{supp}(\gamma^{\star})$ maximize $\sfd$ over $\mathrm{supp}(\gamma^{\star})$. Then, 
    \begin{align*}
        \sfW_q(\mu, \nu)^q = \sum_{i, j \in [n]} \gamma^{\star}_{ij} \cdot \sfd(i,j)^q \; \geq \; \gamma^{\star}_{i^{\star} j^{\star}} \cdot \sfd(i^{\star}, j^{\star})^q \; \geq \; \frac{1}{U} \cdot \sfd(i^{\star}, j^{\star})^q \; \geq \; \frac{1}{U} \cdot \sfW_{\infty}(\mu, \nu)^q
    \end{align*}
    where the final inequality holds because $\gamma$ is a feasible coupling for $\sfW_{\infty}$, whose objective value on $\gamma$ is exactly $\sfd^{\star}$. Taking $q$-th roots gives $\sfW_{\infty}(\mu, \nu) \leq U^{1/q} \cdot \sfW_q(\mu, \nu)$. Finally, for $q = \log_2 U$ we have $U^{1/q} = 2$.
\end{proof}

\newcommand{\otime}{\mathrm{T}_{\calO}}

\section{Proof of Theorem \ref{thm: upper-bound}} \label{sec: fast-spanners}

Our algorithm follows the description provided in the technical overview (Section \ref{sec: technical-overview}). We begin by proving that for every scale $r$, the shortest path metric ($\sfd_G$) in the spanner $G = ([n], E_r)$ returned by Algorithm \ref{alg: fast-spanners} has stretch at most $O(ck)$. The proof of Theorem \ref{thm: upper-bound} then proceeds by performing a careful accounting of the size of the resulting spanner and the total construction time. Throughout the proof, we will assume the oracle $\calO$ reports correctly. We will union bound over the failure probability at the very end, in the proof of Theorem \ref{thm: upper-bound}. 

For a subset $P \subseteq [n]$ and point $v \in [n]$, we let $\sfd(v, P) := \min_{z \in P} \sfd(v, z)$. For a given $\ell \in [\log \Delta]$, and an active point $v \in [n]$, we let $C_i(v)$ denote the active cluster containing $v$ in the $i$th iteration. Every active point $v$ additionally maintains a \emph{candidate list} $L_v$ of pairs $(C, z)$ with $C \in \calC_{\ell, i}$ and $z \in C$. We write $C \in L_v$ for the clusters occurring in it, and we call $L_v$ \emph{full} when $|L_v| = \tau$, for some $\tau$ to be set later. Finally, for a hash function $h$ and a set of clusters $F$, we write $h(F) := \{ h(C) ~|~ C \in F\}$.


\begin{algorithm}
    \caption{BANN-Spanner}
    \label{alg: fast-spanners}
    \KwIn{$n, c, k, \Delta, \calO$}

    Initialize $\tau \gets \kappa n^{1/(2k)} \log (10^3 nk \log \Delta), B \gets \kappa \tau, H \gets \kappa \tau$ for $\kappa$ a sufficiently large constant \\
    ~\ \\
    \ForEach{$\ell \in [\log \Delta]$}{
        Initialize $r \gets 2^{\ell}$, $E_r \gets \emptyset$, $A_{\ell, 0} \gets [n]$, $\calC_{\ell, 0} \gets \{ \{x\} ~|~ x \in [n]\}$ with $\mathrm{Center}(\{x \}) = x$ \\
        ~\ \\
        \ForEach{$i \in \{0, \ldots, 2k-2\} $}{
            Let $L_v \gets \emptyset$ for every $v \in A_{\ell, i}$ \\
            ~\ \\


            \ForEach{$t \in [H]$}{ \label{line: hashing-start}
                Hash the active clusters $\calC_{\ell, i}$ via uniformly random hash function $h \sim \calH$ into $B$ buckets \\
                Let $F_v \gets \{ C_{\ell, i}(v) \} \cup \{ C ~|~ C \in L_v \}$ for every $v \in A_{\ell, i}$ \\
                ~\ \\
                \ForEach{$j \in [B]$}{
                    $X_j \gets \{ u \in A_{\ell, i} ~|~ h(C_{\ell, i}(u)) = j \}$ \\
                    $Q_j \gets \{ v \in A_{\ell, i} ~|~ |L_v| < \tau, ~ j \notin h(F_v) \}$ \\
                    ~\ \\
                    $\{ (v, u, \sfd(v, u)) \}_{v \in Q_j} \gets \calO(Q_j, X_j, c)$ \\
                    ~\ \\
                    \ForEach{$v \in Q_j$}{
                        \If{$\sfd(v, u) \leq cr$ and $|L_v| < \tau$}{
                            Append $(C_{\ell, i}(u), u)$ to $L_v$ \label{line: hashing-end}
                        }
                    }
                }
            }
            ~\ \\
            Sample clusters from $\calC_{\ell, i}$ into $\calS_{\ell, i}$ independently with probability $n^{-1/(2k)}$ \label{line: sampling} \\
            Let $\calC_{\ell, i+1} \gets \calS_{\ell, i}$ \\
            Let $\calU_{\ell, i} \gets \calC_{\ell, i} \setminus \calS_{\ell, i}$ denote the set of unsampled clusters \\
            ~\ \\
            Let $S_{\ell, i} \gets \{ v \in A_{\ell, i} ~|~ C_{\ell, i}(v) \in \calS_{\ell, i} \}$ denote the set of sampled vertices\\
            Let $U_{\ell, i} \gets A_{\ell, i} \setminus S_{\ell, i}$ denote the set of active but unsampled vertices \\
            Let $Z_{\ell, i} \gets \emptyset$ denote the set of retired vertices \\
            ~\ \\
            \ForEach{$v \in U_{\ell, i}$}{
                \If{$(C, z) \in L_v$ for some $C \in \calS_{\ell, i}$}{
                    Add edge $(v, z)$ with weight $\sfd(v, z)$ to $E_r$ \label{line: attach-edge} \\
                    Add $v$ to the cluster $C \in \calC_{\ell, i+1}$ \label{line: join-to-cluster}
                }
                \Else{
                    $Z_{\ell, i} \gets Z_{\ell, i} \cup \{v\}$ \\
                    Add edge $(v, z)$ with weight $\sfd(v, z)$ to $E_r$ for every $(C, z) \in L_v$ \label{line: retire-edges}
                }
            }
            ~\ \\
            Let $A_{\ell, i+1} \gets A_{\ell, i} \setminus Z_{\ell, i}$ \\
        }
        ~\ \\
        \ForEach{$C \in \calC_{\ell, 2k-1}$}{
            $\{ (v, u , \sfd(v, u)) \}_{v \in A_{\ell, 2k-1} \setminus C} \gets \calO(A_{\ell, 2k-1}\setminus C, C, c)$ \\
            ~\ \\
            \ForEach{$v \in A_{\ell, 2k-1} \setminus C$}{
                \If{$\sfd(v, u) \leq cr$}{
                    Add edge $(v, u)$ with weight $\sfd(v, u)$ to $E_r$
                }
            }
        }
    }
    ~\ \\
    Return $\bigcup_{r \in \{ 2^\ell \}_{\ell=1}^{\log \Delta}} E_{r}$
\end{algorithm}

We assume, without loss of generality, that the minimum distance $\min_{x \neq y \in [n]} \sfd(x, y) = 1$ by scaling. We define event $\calE_1$ to be the event that for all $\ell \in [\log \Delta]$, $i \in \{0, \ldots, 2k-2\}$, and $v \in A_{\ell, i}$, the list $L_v$ produced in Lines \ref{line: hashing-start} - \ref{line: hashing-end} is either full or contains every cluster $C \in \calC_{\ell, i}$ with $C \neq C_{\ell, i}(v)$ and $\sfd(v, C) \leq r$. The subsequent lemma proves that event $\calE_1$ occurs with high probability.

\begin{lemma} \label{lem: list-completeness}
    Event $\calE_1$ holds with probability at least $0.999$.
\end{lemma}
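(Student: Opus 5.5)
We fix a scale $\ell$, an iteration $i$, and a vertex $v \in A_{\ell,i}$. We show that the bad event for this triple has probability at most $1/(10^3 nk\log\Delta)$, and then take a union bound over at most $n \cdot 2k \cdot \log\Delta$ triples. The bad event is that after the $H$ rounds of Lines \ref{line: hashing-start}--\ref{line: hashing-end}, the list $L_v$ is not full and some cluster $C^\ast \ne C_{\ell,i}(v)$ with $\sfd(v,C^\ast)\le r$ is missing from it. Throughout we condition on all oracle calls answering correctly; that failure probability is handled separately at the end of the proof of Theorem \ref{thm: upper-bound}.

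\textbf{Progress per round.} Consider a round $t \in [H]$ at whose start $L_v$ is not full and some cluster $C$ with $\sfd(v,C)\le r$ is absent from $F_v$. Note that $F_v$ is determined before $h$ is drawn, and $|F_v| \le \tau+1$. Since $h$ is fresh and uniform on $B=\kappa\tau$ buckets, a union bound gives
\[
\Pr\bigl[h(C)\in h(F_v\setminus\{C\})\bigr] \le (\tau+1)/B \le 1/2 .
\]
Condition on the complementary event, and let $j=h(C)$. If $L_v$ fills up earlier in the round, $v$ is full and we are done. Otherwise, within this round $L_v$ only receives clusters from buckets other than $j$. The query condition $j\notin h(F_v)$ therefore still holds when bucket $j$ is processed, so $v \in Q_j$. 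The target set $X_j$ contains a point $z \in C$ with $\sfd(v,z)\le r$, so the oracle returns some $u$ with $\sfd(v,u)\le cr$, and the pair $(C_{\ell,i}(u),u)$ is appended to $L_v$.

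That appended cluster is genuinely new. It hashes to $j\notin h(F_v)$, so it is neither $C_{\ell,i}(v)$ nor any cluster already in $L_v$. Hence in every round in which $v$ is still ``bad'' (not full, and some close cluster missing), $|L_v|$ grows by at least one with probability at least $1/2$. This holds conditionally on the entire history, because each round uses an independent hash function.

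\textbf{Concentration.} Couple the rounds with independent Bernoulli$(1/2)$ successes. If $v$ is bad after all $H$ rounds, then it was bad in every round, so at most $\tau-1$ of the $H$ rounds were successes. With $H=\kappa\tau$ and $\tau \ge \kappa\log(10^3nk\log\Delta)$, a Chernoff bound gives
\[
\Pr[\mathrm{Bin}(H,1/2) < \tau] \le \exp(-\Omega(\kappa\tau)) \le (10^3 nk\log\Delta)^{-2}
\]
for $\kappa$ sufficiently large. A union bound over all $(\ell,i,v)$ then bounds the total failure probability by $10^{-3}$.

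\textbf{Main obstacle.} The delicate step is the within-round argument. We must check that when bucket $j$ is queried, $v$ is still in $Q_j$, and that the returned cluster is new. The query condition refers to the snapshot $F_v$ fixed at the start of the round, while $L_v$ grows during the round. Arguing via the snapshot $F_v$ together with the fact that $h$ is a function (so the cluster returned from bucket $j$ hashes to $j$ and hence lies outside $F_v$) settles both points. We also need a martingale-style domination rather than plain independence, since whether $v$ is bad depends on earlier rounds; the conditional success bound of $1/2$ gives this domination directly.
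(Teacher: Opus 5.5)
Your proposal is correct and follows essentially the same route as the paper. The shared steps are: a per-round progress claim, a stochastic domination of the number of successful rounds by a binomial, a Chernoff bound, and a union bound over the triples $(\ell,i,v)$. In the progress claim, a fresh hash separates the missing cluster from the snapshot $F_v$, which both places $v$ in $Q_j$ and forces the returned cluster to be new.

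The differences are minor. You use a per-round success probability of $1/2$ instead of $3/4$. Your per-triple tail bound $(10^3 nk\log\Delta)^{-2}$ makes the union bound over $2nk\log\Delta$ triples come out below $10^{-3}$. The paper's stated per-triple bound $1/(10^3 nk\log\Delta)$ would only give $0.002$. Your opening sentence also quotes that weaker bound, but your concentration step supersedes it.
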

\begin{proof}
    Fix an $\ell \in [\log \Delta]$, an iteration $i \in \{0, \ldots, 2k-2\}$, and a vertex $v \in A_{\ell, i}$. Call $v$ \emph{unfinished} at the start of a repetition $t \in [H]$ if $L_v$ is not full and some cluster $C \neq C_{\ell, i}(v)$ with $\sfd(v, C) \leq r$ is missing from $L_v$. We first argue that, conditioned on any history of the preceding repetitions in which $v$ is unfinished, $|L_v|$ strictly increases during repetition $t$ with probability at least $3/4$.

    Fix a missing cluster $C_{\star} \neq C_{\ell, i}(v)$ with $\sfd(v, C_{\star}) \leq r$, and observe that $C_{\star} \notin F_v$ while $|F_v| \leq \tau + 1$. Because $h$ is drawn afresh in repetition $t$,
    \begin{align*}
        \Pr \left[ h(C_{\star}) \notin h(F_v) ~\middle|~ \text{iterations } 1, \ldots, t-1 \right] \geq 1 - \frac{|F_v|}{B} \geq 1 - \frac{\tau + 1}{\kappa \tau} \geq \frac{3}{4}
    \end{align*}
    Write $j = h(C_{\star})$ and suppose that this event occurs. If $L_v$ became full earlier in repetition $t$ then $|L_v|$ has already increased and there is nothing to prove, so assume otherwise; then $v \in Q_j$. The bucket $X_j$ contains $C_{\star}$ and is in particular nonempty, so oracle $\calO$ is called and returns a $u \in X_j$ with $\sfd(v, u) \leq c \cdot \sfd(v, X_j) \leq c \cdot \sfd(v, C_{\star}) \leq cr$, and the pair $(C_{\ell, i}(u), u)$ is appended to $L_v$. This pair is new: $h(C_{\ell, i}(u)) = j \notin h(F_v)$, so $C_{\ell, i}(u)$ was not in $L_v$ at the start of repetition $t$, and any cluster appended to $L_v$ earlier in repetition $t$ was drawn from a bucket other than $j$ and so does not hash to $j$. Hence $|L_v|$ increases, as claimed.

    Let $Y_t$ indicate that repetition $t$ increased $|L_v|$, or that $v$ was already finished at its start. By the claim, $Y_t = 1$ with probability at least $3/4$ conditioned on any history of the preceding repetitions. Therefore, 
    \begin{align*}
        \forall m \in [H]. \; \; \Pr \left[ \sum_{t=1}^{H} Y_t < m \right] \leq \Pr \left[ \text{Bin}(H, 3/4) < m \right]
    \end{align*}
    by a simple inductive argument. If $v$ is unfinished after all $H$ repetitions then $\sum_{t} Y_t < \tau$, as $|L_v| \leq \tau$ throughout. Therefore, since $H = \kappa \tau \geq 4 \tau$,
    \begin{align*}
        \Pr \left[ v \text{ unfinished after } H \text{ repetitions} \right] \leq \Pr \left[ \text{Bin}(H, 3/4) < \tau \right] \leq \exp \left( - \Omega(H) \right) \leq \frac{1}{10^3 nk \log \Delta}
    \end{align*}
    where the final step uses $\tau \geq \kappa \log(10^3 nk \log \Delta)$. Taking a union bound over the at most $2nk \log \Delta$ many tuples $(\ell, i, v)$ proves the claim.
\end{proof}

We now define event $\calE_2$ to be the event that for all $\ell \in [\log \Delta]$, $i \in \{0, \ldots, 2k-2\}$, and $v \in A_{\ell, i}$ whose list $L_v$ is full, some cluster of $L_v$ is sampled into $\calS_{\ell, i}$.

\begin{lemma} \label{lem: full-lists-sampled}
    Event $\calE_2$ holds with probability at least $0.999$.
\end{lemma}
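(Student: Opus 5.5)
The plan is to fix a single tuple $(\ell, i, v)$, bound the failure probability for it, and then take a union bound. Fix $\ell \in [\log \Delta]$, $i \in \{0, \ldots, 2k-2\}$ and $v \in A_{\ell, i}$. The key structural observation is the order of operations in Algorithm \ref{alg: fast-spanners}: the list $L_v$ is completely determined by Lines \ref{line: hashing-start}--\ref{line: hashing-end} \emph{before} the sampling step in Line \ref{line: sampling}. That sampling uses fresh coins, independent of the hash functions, of the oracle's randomness, and of everything that produced $\calC_{\ell, i}$ and $L_v$. So I would condition on the entire history up to (but not including) Line \ref{line: sampling}. On this conditioning, $\calC_{\ell, i}$ and $L_v$ are fixed, and the sampling decisions for the clusters of $\calC_{\ell, i}$ remain i.i.d.\ Bernoulli$(n^{-1/(2k)})$.

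Next I would verify that a full list contains $\tau$ \emph{distinct} clusters. This is the one point needing care, and it reuses the argument from Lemma \ref{lem: list-completeness}. Every append in repetition $t$ comes from a bucket $j \notin h(F_v)$, where $F_v$ contains all clusters already in $L_v$ at the start of the repetition. Appends within the same repetition come from distinct buckets, since $v$ is queried at most once per bucket. Hence each appended cluster differs from all previously recorded ones.

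Given distinctness, on the event that $L_v$ is full the probability that none of its $\tau$ clusters is sampled is
\begin{align*}
    \left(1 - n^{-1/(2k)}\right)^{\tau} \leq \exp\left(-\tau n^{-1/(2k)}\right) = \exp\left(-\kappa \log(10^3 nk\log\Delta)\right) \leq \frac{1}{10^3 nk \log \Delta}
\end{align*}
for $\kappa \geq 1$. This bound holds conditionally on any history, so it holds unconditionally as well.

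Finally, I would union bound over the at most $2nk\log\Delta$ tuples $(\ell, i, v)$ (tuples whose list is not full impose no constraint). This gives total failure probability at most $2/10^3 \leq 10^{-3}$ after taking $\kappa$ slightly larger, say $\kappa \geq 2$, so that each term is at most $1/(10^3 \cdot 2nk\log\Delta)$. I expect the main obstacle to be stating the independence cleanly. One must argue that the randomness of $L_v$, including its dependence on earlier iterations and the adaptive oracle answers, is independent of the Line \ref{line: sampling} coins, and also confirm the distinctness of the clusters in $L_v$. The rest is a routine calculation.
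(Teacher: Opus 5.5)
Your proposal is correct and follows the paper's proof essentially step for step. You condition on everything before Line \ref{line: sampling}, note that a full list holds $\tau$ distinct clusters each sampled independently with probability $n^{-1/(2k)}$, bound the miss probability by $(1-n^{-1/(2k)})^{\tau} \leq (10^3 nk\log\Delta)^{-\kappa}$, and union bound over at most $2nk\log\Delta$ tuples. The differences are cosmetic: you spell out why the clusters in $L_v$ are distinct (the paper asserts this, relying on the reasoning in Lemma \ref{lem: list-completeness}), and you make the choice of $\kappa$ explicit.
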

\begin{proof}
    Fix an $\ell \in [\log \Delta]$, an iteration $i \in \{0, \ldots, 2k-2\}$, and a vertex $v \in A_{\ell, i}$ whose list is full. The lists are built in Lines \ref{line: hashing-start} - \ref{line: hashing-end}, which precede the sampling of Line \ref{line: sampling}. Conditioned on $L_v$, the $\tau$ clusters it contains are distinct members of $\calC_{\ell, i}$, each of which is placed into $\calS_{\ell, i}$ independently with probability $n^{-1/(2k)}$. Thus,
    \begin{align*}
        \Pr[\exists ~ \ell \in [\log \Delta], i \in \{0, \ldots, 2k-2\}, v \in A_{\ell, i} &: L_v \text{ is full and } L_v \cap \calS_{\ell, i} = \emptyset] \\
        &\leq \sum_{\ell = 1}^{\log \Delta} \sum_{i = 0}^{2k-2} \sum_{v \in A_{\ell, i}} \Pr \left[ L_v \cap \calS_{\ell, i} = \emptyset ~\middle|~ L_v \text{ is full} \right] \\
        &\leq \sum_{\ell} \sum_{i} \sum_{v \in A_{\ell, i}} \left( 1 - n^{-1/(2k)} \right)^{\tau} \\
        &\leq 2 n k \log \Delta \cdot \left( \frac{1}{10^3 nk \log \Delta} \right)^{\kappa} \leq 0.001
    \end{align*}
    thereby proving the claim.
\end{proof}

The two events together imply Lemma \ref{lem: hashing-correctness}, which shows that retired vertices will be connected to \emph{all} clusters at distance at most $r$. 

\begin{lemma} \label{lem: hashing-correctness}
    Conditioned on $\calE_1 \cap \calE_2$, the following holds for every $\ell \in [\log \Delta]$, $i \in \{0, \ldots, 2k-2\}$, $v \in Z_{\ell, i}$, and cluster $C \in \calC_{\ell, i}$ with $C \neq C_{\ell, i}(v)$ and $\sfd(v, C) \leq r$. The edge set $E_r$ contains an edge $(v, z)$ where $z \in C$ and $\sfd(v, z) \leq cr$.
\end{lemma}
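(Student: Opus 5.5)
The argument is a short chain through the two events, used in the order $\calE_2$ then $\calE_1$. Fix $\ell$, $i$, a retired vertex $v \in Z_{\ell,i}$, and a cluster $C \in \calC_{\ell,i}$ with $C \neq C_{\ell,i}(v)$ and $\sfd(v, C) \le r$.

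First, $v$ was retired, so by the \textbf{else} branch of the attachment loop no pair $(C', z) \in L_v$ has $C' \in \calS_{\ell,i}$. Under $\calE_2$, every full list contains a sampled cluster. Hence $L_v$ is not full.

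Second, apply $\calE_1$. Since $L_v$ is not full, it contains every cluster $C' \in \calC_{\ell,i}$ with $C' \neq C_{\ell,i}(v)$ and $\sfd(v, C') \le r$. In particular $C \in L_v$, so some pair $(C, z)$ appears in $L_v$. This is the one point to check carefully: the lists are frozen before sampling and are not modified afterwards. So the list to which $\calE_1$ refers is exactly the list consulted at retirement. Note also that $v \in A_{\ell,i}$, since $Z_{\ell,i} \subseteq U_{\ell,i} \subseteq A_{\ell,i}$, so both events apply to $v$.

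Third, read off the pair's properties from how it was appended in Lines \ref{line: hashing-start}--\ref{line: hashing-end}. A pair is appended only in the form $(C_{\ell,i}(u), u)$, where $u$ is the oracle's answer. So $z = u \in C$, with $C = C_{\ell,i}(u)$ a cluster of the current partition $\calC_{\ell,i}$. The append condition also gives $\sfd(v,z) \le cr$.

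Finally, Line \ref{line: retire-edges} adds the edge $(v,z)$ for every pair $(C,z) \in L_v$. Hence $(v,z) \in E_r$, with $z \in C$ and $\sfd(v,z) \le cr$.

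I expect no real obstacle. The only subtlety is bookkeeping: the cluster recorded in $L_v$ must coincide with the cluster $C$ as a member of $\calC_{\ell,i}$. Clusters only change after sampling, at Line \ref{line: join-to-cluster}, which comes after the lists are built, so the two agree.
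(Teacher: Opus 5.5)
Your proposal is correct and follows the paper's proof: retirement together with $\calE_2$ forces $L_v$ to be non-full, then $\calE_1$ gives a pair $(C, z) \in L_v$ with $z \in C$ and $\sfd(v,z) \le cr$, and Line~\ref{line: retire-edges} adds the edge $(v,z)$. Your extra bookkeeping (the lists are frozen before sampling, and $Z_{\ell,i} \subseteq A_{\ell,i}$) is left implicit in the paper but is accurate.
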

\begin{proof}
    Fix any such tuple of $(\ell, i, v, C)$. Because $v \in Z_{\ell, i}$, no cluster of $L_v$ was sampled into $\calS_{\ell, i}$, and so $L_v$ is not full by event $\calE_2$. Event $\calE_1$ then guarantees that $C \in L_v$, that is, $(C, z) \in L_v$ for some $z \in C$ with $\sfd(v, z) \leq cr$. Since $v$ is retired, Line \ref{line: retire-edges} adds this edge to $E_r$, thereby proving the claim.
\end{proof}

\begin{lemma} \label{lem: close-to-center}
    Let $\ell \in [\log \Delta], i \in \{0, \ldots, 2k-1\}$ and consider any cluster $C \in \calC_{\ell, i}$. Then, $\sfd_G(v, \mathrm{Center}(C)) \leq icr$ for every $v \in C$.
\end{lemma}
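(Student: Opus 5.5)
We argue by induction on $i$, for a fixed scale $\ell$ (so $r = 2^\ell$). The invariant is that every member $v$ of a cluster $C \in \calC_{\ell, i}$ has a path in $G = ([n], E_r)$ to $\mathrm{Center}(C)$ of total weight at most $icr$. The path is built from attachment edges added in Line \ref{line: attach-edge}. Here $\sfd_G$ denotes the shortest path metric of the graph on edge set $E_r$; the final returned graph contains $E_r$, so the bound also holds there.

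\textbf{Base case ($i = 0$).} Every cluster is a singleton $\{x\}$ with $\mathrm{Center}(\{x\}) = x$. Hence $\sfd_G(x, \mathrm{Center}(\{x\})) = 0 \leq 0 \cdot cr$.

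\textbf{Inductive step.} Assume the claim for $i \leq 2k-2$ and fix $C' \in \calC_{\ell, i+1}$. By construction $\calC_{\ell, i+1} = \calS_{\ell, i}$, so $C'$ arises from a sampled cluster $C \in \calC_{\ell, i}$. It keeps the center of $C$ and is enlarged only by vertices $v \in U_{\ell, i}$ that join in Line \ref{line: join-to-cluster}. We treat the two kinds of members separately.

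\emph{Old members} $v \in C$. The inductive hypothesis gives $\sfd_G(v, \mathrm{Center}(C)) \leq icr \leq (i+1)cr$.

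\emph{New members} $v$. Such a $v$ joined because some $(C, z) \in L_v$ with $C \in \calS_{\ell, i}$, and the algorithm added the edge $(v, z)$ of weight $\sfd(v, z)$ to $E_r$. Pairs enter $L_v$ only in Line \ref{line: hashing-end}, and only when the oracle answer satisfies $\sfd(v, u) \leq cr$; therefore $\sfd(v, z) \leq cr$. The pair was recorded as $(C_{\ell, i}(z), z)$, so $z$ lies in the iteration-$i$ cluster $C$. Applying the inductive hypothesis to $z$ and the triangle inequality in $\sfd_G$,
\[
\sfd_G(v, \mathrm{Center}(C')) \leq \sfd(v, z) + \sfd_G(z, \mathrm{Center}(C)) \leq cr + icr = (i+1)cr .
\]

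The only delicate point is bookkeeping. We must check that $z$ belongs to the iteration-$i$ version of $C$, and not to a vertex that joined during the same round. This holds because the lists are frozen before sampling and store $C_{\ell, i}(u)$. We must also check that the center is unchanged when the cluster passes from $\calC_{\ell, i}$ to $\calC_{\ell, i+1}$, which is implicit in the algorithm. No probabilistic event is needed: the argument is deterministic.
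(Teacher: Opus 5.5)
Your proof is correct and follows the paper's argument essentially verbatim: induction on $i$ with the singleton base case, old members handled directly by the inductive hypothesis, and new members handled by the attachment edge of weight at most $cr$ to some $z \in C$ followed by the inductive hypothesis applied to $z$. Your extra bookkeeping remarks make explicit two points the paper leaves implicit: first, $z$ lies in the iteration-$i$ cluster because lists record $C_{\ell,i}(u)$ and are fixed before sampling; second, centers carry over unchanged from $\calC_{\ell,i}$ to $\calC_{\ell,i+1}$.
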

\begin{proof}
    We proceed by induction. Consider a fixed $\ell \in [\log \Delta]$. The base case is when $i = 0$. Here, $\calC_{\ell, 0} = \{ \{x \} ~|~ x \in X\}$, so the claim is trivially satisfied.

    For the inductive step, assume that the claim holds for $i > 0$, then we show that it also holds for $i+1$. Consider a cluster $C' \in \calC_{\ell, i+1}$ produced by a sampled cluster $C \in \calS_{\ell, i}$. A point $v \in C'$ was either already in $C$ or joined to $C'$ in Lines \ref{line: attach-edge} - \ref{line: join-to-cluster} in iteration $i$. If $v \in C$, then $\sfd_G(v, \mathrm{Center}(C')) = \sfd_G(v, \mathrm{Center}(C)) \leq icr \leq (i+1)cr$ by the inductive hypothesis. Otherwise, if $v$ joins $C'$, it must have been the case that $(C, u) \in L_v$ for some point $u \in C$, and every pair appended to a candidate list satisfies $\sfd(v, u) \leq cr$. Therefore,
    \begin{align*}
        \sfd_G(v, \mathrm{Center}(C')) &\leq \sfd_G(v, u) + \sfd_G(u, \mathrm{Center}(C')) \\
        &= \sfd(v, u) + \sfd_G(u, \mathrm{Center}(C)) \\
        &\leq c r + i c r \\
        &= (i+1)c r
    \end{align*}
    where the second inequality invokes the inductive hypothesis to conclude that $\sfd_G(u, \mathrm{Center}(C)) \leq icr$. 
\end{proof}

\begin{lemma} \label{lem: distortion}
    Conditioned on $\calE_1 \cap \calE_2$, for every $\ell \in [\log \Delta]$ and $r = 2^{\ell}$, any pair of points $x, y \in [n]$ with $\sfd(x, y) \leq r$ is joined in $E_r$ by a path of at most $4k-1$ edges and total weight at most $(4k-1)cr$. In particular, $\sfd_G(x, y) \leq (4k-1)c r$.
\end{lemma}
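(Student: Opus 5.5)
Fix $\ell$ and $r = 2^\ell$, and a pair $x, y$ with $\sfd(x,y) \leq r$. The idea is to follow the Baswana--Sen analysis on the evolving clusters. We track the first moment at which one of $x, y$ stops being ``clustered-together-apart''. Concretely, the argument splits according to what happens in each iteration $i \in \{0,\dots,2k-2\}$.

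\textbf{Case 1: $x$ and $y$ lie in the same cluster.} Suppose at some iteration $i \leq 2k-1$ both are active and $C_{\ell,i}(x) = C_{\ell,i}(y) = C$. By Lemma \ref{lem: close-to-center}, the path through $\mathrm{Center}(C)$ along cluster edges has weight at most $2icr \leq 2(2k-1)cr$. For the hop count, I will strengthen the induction in Lemma \ref{lem: close-to-center} to record that each $v \in C$ reaches the center via a path of at most $i$ edges, each of weight at most $cr$. This holds because every attach edge comes from a list entry of weight $\leq cr$. So the path has at most $2i \leq 4k-2$ edges.

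\textbf{Case 2: one of them retires first.} Otherwise, let $i$ be the first iteration in which one of them, say $x$, lies in an unsampled cluster and retires ($x \in Z_{\ell,i}$). Up to that point both were active in distinct clusters, so $C := C_{\ell,i}(y) \neq C_{\ell,i}(x)$ and $\sfd(x,C) \leq \sfd(x,y) \leq r$.

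Lemma \ref{lem: hashing-correctness} then gives an edge $(x,z)$ with $z \in C$ and $\sfd(x,z) \leq cr$. We route $x \to z \to \mathrm{Center}(C) \to y$, using the hop-refined Lemma \ref{lem: close-to-center} twice. This gives at most $1 + 2i \leq 4k-3$ edges and weight at most $(2i+1)cr \leq (4k-3)cr$.

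A subtlety arises if $x$ and $y$ both retire in the same iteration. That is still covered, since $x$'s retirement edges alone suffice. If one of them instead attaches while the other retires, the retiring vertex's edges are added regardless, and $C_{\ell,i}(y)$ is the relevant active cluster at iteration $i$, so the same route works.

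\textbf{Case 3: neither retires through iteration $2k-2$.} Then both are in $A_{\ell,2k-1}$. If they share a cluster we are in Case 1 with $i = 2k-1$: at most $4k-2$ edges and weight $(4k-2)cr$. Otherwise, the final loop over $C = C_{\ell,2k-1}(y)$ queries $x$ against $C$. The oracle returns $u \in C$ with $\sfd(x,u) \leq c\,\sfd(x,C) \leq cr$, so the edge $(x,u)$ is added. The route $x \to u \to \mathrm{Center}(C) \to y$ then has at most $1 + 2(2k-1) = 4k-1$ edges and weight at most $(4k-1)cr$.

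The main obstacle is the bookkeeping that makes the cases exhaustive. In particular, one must check that ``first iteration where they separate into retired/non-retired'' is well defined. A vertex that attaches joins the sampled cluster, which is exactly the active cluster of the next iteration, so clusters stay well defined and $C_{\ell,i}(y)$ is the active cluster needed by Lemma \ref{lem: hashing-correctness}. One must also carry the edge-count strengthening of Lemma \ref{lem: close-to-center}, which follows by the same induction. The final bound $\sfd_G(x,y) \leq (4k-1)cr$ is immediate from the path.
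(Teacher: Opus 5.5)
Your proposal is correct and follows essentially the same argument as the paper. You use the same three cases: a shared active cluster; the first iteration in which one endpoint retires, handled via the retirement-edge lemma with $C_{\ell,i}(y)\neq C_{\ell,i}(x)$ and $\sfd(x,C_{\ell,i}(y))\le r$; and the final exhaustive BANN round over the surviving clusters. You also make the same observation that the induction for the cluster-radius lemma adds one edge per level, which yields the hop and weight bounds $2i$, $1+2i\le 4k-3$, and $1+2(2k-1)=4k-1$.
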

\begin{proof}
    Fix an $\ell \in [\log \Delta]$ with $r = 2^{\ell}$, and a pair $x, y \in [n]$ with $\sfd(x, y) \leq r$. Note first that the induction of Lemma \ref{lem: close-to-center} prepends a single edge per level, so for $C \in \calC_{\ell, i}$ and $v \in C$ the path it guarantees from $v$ to $\mathrm{Center}(C)$ uses at most $i$ edges. Consequently any two points of a common cluster in $\calC_{\ell, i}$ are joined through its center by at most $2i \leq 4k-2$ edges of total weight at most $2icr \leq (4k-2)cr$.

    If there exists an $i \in \{0, \ldots, 2k-2\}$ for which $x, y$ lie in the same active cluster in $\calC_{\ell, i}$, then by Lemma \ref{lem: close-to-center}, we have $\sfd_G(x, y) \leq (4k-2)cr$.

    Otherwise, consider the first iteration $i$ when at least one of $x, y$ is \emph{retired} into $Z_{\ell, i}$. Without loss of generality, assume that $x \in Z_{\ell, i}$. We may assume that $C_{\ell, i}(y) \neq C_{\ell, i}(x)$, since otherwise Lemma \ref{lem: close-to-center} applies as above. As $\sfd(x, C_{\ell, i}(y)) \leq \sfd(x, y) \leq r$, Lemma \ref{lem: hashing-correctness} guarantees that $E_r$ contains an edge $(x, z)$ where $z \in C_{\ell, i}(y)$ and $\sfd(x, z) \leq cr$. Therefore,
    \begin{align*}
        \sfd_G(x, y) &\leq \sfd_G(x, z) + \sfd_G(z, y) \\
        &\leq \sfd(x, z) + 2(2k-2)c r \tag{Lemma \ref{lem: close-to-center}} \\
        &\leq (4k-3)cr
    \end{align*}
    realized by the path $x \to z \to \mathrm{Center}(C_{\ell, i}(y)) \to y$, of at most $1 + 2(2k-2) \leq 4k-1$ edges.

    The final case occurs if both $x, y$ are never retired in the first $2k-1$ iterations. If $C_{\ell, 2k -1}(y) = C_{\ell, 2k-1}(x)$, then Lemma \ref{lem: close-to-center} applies. Otherwise, the final step of the algorithm exhaustively calls the BANN oracle $\calO$ on the final clusters/remaining active vertices. Because $x \notin C_{\ell, 2k-1}(y)$, oracle $\calO$ will return a point $z \in C_{\ell, 2k-1}(y)$ with $\sfd(x, z) \leq cr$ and add the corresponding edge. Therefore, $\sfd_G(x, y) \leq \sfd_G(x, z) + \sfd_G(z, y) \leq cr + 2(2k-1)cr = (4k-1)cr$ as before, along a path of at most $1 + 2(2k-1) = 4k-1$ edges.
\end{proof}

\begin{lemma}\label{lem: final-clusters}
    With probability at least $0.999$, for all $\ell \in [\log \Delta]$, the final number of active clusters $|\calC_{\ell, 2k-1}| \leq 10 n^{1/(2k)} + 10 \log (1000 \log \Delta)$.
\end{lemma}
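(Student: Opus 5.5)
Every active cluster in $\calC_{\ell, 2k-1}$ descends from a singleton $\{x\}$ in $\calC_{\ell,0}$ by being sampled in each of the $2k-1$ rounds $i = 0, \ldots, 2k-2$. The sampling decisions in Line~\ref{line: sampling} are fresh independent coins, each with probability $n^{-1/(2k)}$. We therefore bound the number of survivors by a sum of independent indicators and apply a Chernoff bound, followed by a union bound over scales.

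Fix $\ell$. Clusters are never merged or created: $\calC_{\ell,i+1} = \calS_{\ell,i}$, and unsampled vertices only join existing sampled clusters. Hence each cluster of $\calC_{\ell,i}$ is identified by its center, which is a point $x\in[n]$ whose singleton was sampled in rounds $0,\ldots,i-1$. We couple the process with independent coins $b_{x,i}\sim\mathrm{Bern}(n^{-1/(2k)})$ for all $x\in[n]$ and $i\in\{0,\ldots,2k-2\}$, letting the cluster centered at $x$ be sampled in round $i$ iff $b_{x,i}=1$. This is legitimate because the sampling step only requires independent coins for the current clusters. The hashing randomness and the list contents affect cluster membership but not which centers survive, so the centers are independent of everything else.

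Then $|\calC_{\ell,2k-1}| = \sum_{x\in[n]} \prod_{i=0}^{2k-2} b_{x,i}$. This is a sum of $n$ independent indicators, each with mean $n^{-(2k-1)/(2k)}$, so the expectation is $\mu = n^{1/(2k)}$. By the multiplicative Chernoff bound in the form $\Pr[X \ge 6\mu + s] \le 2^{-(6\mu+s)}$, or equivalently $\Pr[X\ge R]\le 2^{-R}$ for $R\ge 6\mu$, taking $R = 10n^{1/(2k)} + 10\log(1000\log\Delta)$ gives failure probability at most $2^{-10\log(1000\log\Delta)} \le (1000\log\Delta)^{-10}$, which is far below $1/(1000\log\Delta)$. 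A union bound over the $\log\Delta$ scales gives total failure probability at most $0.001$.

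The main obstacle is justifying independence cleanly. Sampling is described adaptively on a random collection of clusters, so we need the survival of a lineage to be a product of fresh coins independent across lineages. Identifying clusters by their centers, as above, handles this. With that in place the Chernoff calculation is routine. If the $\log$ is natural rather than base $2$, the same bound still holds with room to spare, via $\Pr[X\ge R]\le e^{-R}$ for $R\ge 2e\mu$.
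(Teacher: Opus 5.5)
Your proof is correct and is the paper's argument: $|\calC_{\ell,2k-1}|$ is $\mathrm{Bin}(n, n^{-(2k-1)/(2k)})$ with mean $n^{1/(2k)}$, then a Chernoff bound and a union bound over the $\log\Delta$ scales; your center-indexed coupling with the coins $b_{x,i}$ makes explicit the independence that the paper asserts without comment. One harmless slip is in your closing aside: the standard Chernoff bound gives $\Pr[X \ge R] \le e^{-R}$ once $R \ge e^2\mu$, not $R \ge 2e\mu$, but your $R \ge 10\mu$ meets this, and the $2^{-R}$ bound already suffices for either base of the logarithm.
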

\begin{proof}
    Fix a scale $\ell \in [\log \Delta]$. Then, observe that $|\calC_{\ell, 2k-1}| \sim \text{Bin}(n, n^{(1-2k)/(2k)})$. By a Chernoff bound then,
    \begin{align*}
        \Pr \left[ |\calC_{\ell, 2k-1}| \geq 10 n^{1/(2k)} + 10 \log (10^3 \log \Delta) \right] &\leq \exp \left( - \log (10^3 \log \Delta) \right) = \frac{1}{10^3 \log \Delta}
    \end{align*}
    Taking a union bound over all scales $\ell$ gives the final claim.
\end{proof}

We now have the necessary components to prove Theorem \ref{thm: upper-bound}.

\begin{proof}[Proof of Theorem \ref{thm: upper-bound}]
    We first count the number of calls to oracle $\calO$. For given scale $\ell \in [\log \Delta]$ and iteration $i \in \{0, \ldots, 2k-2\}$, Algorithm \ref{alg: fast-spanners} makes at most $HB$ calls to $\calO$, all of them in Lines \ref{line: hashing-start} - \ref{line: hashing-end}. Since $HB = \kappa^2 \tau^2 = \tilde{O}(n^{1/k})$, this results in $\tilde{O}(k n^{1/k} \log \Delta)$ calls to oracle $\calO$. Conditioned on the event of Lemma \ref{lem: final-clusters}, there are at most $O(n^{1/(2k)} + \log \log \Delta)$ final clusters. Then, Algorithm \ref{alg: fast-spanners} makes at most $O(n^{1/(2k)} + \log \log \Delta)$ final calls to oracle $\calO$ per scale. All remaining computation in Algorithm \ref{alg: fast-spanners} can be done in $O(n)$ time per call to $\calO$ (to construct the query and target sets), together with $O(nk \log \Delta)$ additional time. Because $\bnntime(n) = \Omega(n)$, the total runtime of Algorithm \ref{alg: fast-spanners} is therefore $\tilde{O}(\bnntime(n) \cdot k n^{1/k} \log \Delta)$.

    \textbf{Size. } The oracle calls of Lines \ref{line: hashing-start} - \ref{line: hashing-end} insert no edges, they only populate the lists $L_v$. For a fixed scale $\ell$ and iteration $i$, each $v \in U_{\ell, i}$ contributes a single edge in Line \ref{line: attach-edge} if it attaches and at most $|L_v| \leq \tau$ edges in Line \ref{line: retire-edges} if it retires, so the iteration inserts at most $O(n \tau k)$ edges. Conditioned on the event of Lemma \ref{lem: final-clusters}, the final round inserts at most $O(n^{1 + 1/(2k)} + n \log \log \Delta)$ edges per scale. Therefore, the size of the spanner is bounded by $\tilde{O}(k n^{1 + 1/(2k)} \log \Delta)$. 
    

    \textbf{Hops and Distortion. } Consider a pair $x, y \in [n]$ and let $\ell^{\star}$ denote the scale at which $\sfd(x, y) \in [2^{\ell^{\star}}, 2^{\ell^{\star} + 1}]$. Then, by Lemma \ref{lem: distortion}, $\sfd_G(x, y) \leq (4k-1)c \cdot 2^{\ell^{\star} + 1}$. Furthermore, because Algorithm \ref{alg: fast-spanners} constructs spanner $G$ by weighting any inserted edges by only true metric distances, it is clear that $\sfd_G(x, y) \geq \sfd(x, y)$. Therefore,
    \begin{align*}
        \forall x, y \in [n]. \; \; \sfd(x, y) \leq \sfd_G(x, y) \leq O(c k) \cdot \sfd(x, y)
    \end{align*}
    $G$ is a $(4k-1)$-hop spanner by Lemma \ref{lem: distortion}.

    \textbf{Success Probability.} Algorithm \ref{alg: fast-spanners} succeeds when events $\calE_1, \calE_2$ succeed, all calls to the BANN succeed, and the event of Lemma \ref{lem: final-clusters} holds. Note that in the definition of the BANN oracle (Definition \ref{def: bann}), we assume that the oracle has failure probability $\leq 1/n$ (which can always be achieved via independent repetitions of an algorithm with constant failure probability). By a union bound then, Algorithm \ref{alg: fast-spanners} fails with probability at most $0.01$.
\end{proof}

\section{Proof of Theorem \ref{thm: lower-bound}} \label{sec: ck-lower-bound}
In this section, we prove an $\Omega(ck)$ lower bound for the distortion factor of spanners constructed with $c$-BANN oracles and query budget $\Phi \coloneqq o(n^{1+1/k} / k)$. We first fix the model for spanner construction algorithms. 

\paragraph{Oracle query model} An adversary fixes a metric $\calM = ([n], \sfd)$. An algorithm $\calA$ interacts with it adaptively through two oracles.
    \begin{itemize}
        \item \textbf{$c$-BANN oracle.} A call $\calO(Q, X, c)$ with $Q, X \subseteq [n]$ produces an unordered answer list with one entry per query point, the entry for $q \in Q$ being $(q, \varphi(q), \sfd(q, \varphi(q)))$, where $\varphi(q) \in X$ is \emph{any} point satisfying
        \[
            \sfd(q, \varphi(q)) \;\leq\; c \cdot \min_{x \in X \setminus \{q\}} \sfd(q, x).
        \]
        The call costs $|Q|$.
        \item \textbf{Distance oracle.} On a call $\calO_{\sfd}(x, y)$ the oracle returns $\sfd(x,y)$ exactly. The call costs $1$.
    \end{itemize}
     
    \paragraph{The spanner algorithm $\calA$.} The two oracles above are $\calA$'s only access to the metric $\calM$ and it is also given $n$, $c$ and $k$. $\calA$ has a total budget $\Phi$ for cost of oracle calls. It outputs a weighted spanner for $[n]$: $H = \left([n], E_H \subseteq \binom{[n]}{2}, w_H: E_H \to \mathbb{R}\right)$. A spanner may not contract distances, so $w_H(e = (x,y)) \geq \sfd(x,y)$ on every edge, and lowering a weight to $\sfd(x,y)$ only helps $\calA$. Therefore, WLOG, $w_H(x,y) = \sfd(x,y)$. Thus $\sfd_H \geq \sfd$ always, and the distortion of $H$ is
    \[
        t \;=\; \max_{x \neq y} \frac{\sfd_H^w(x,y)}{\sfd(x,y)},
    \]
    where $\sfd_H^w$ is the shortest path metric of $H$. We assume, as part of the model, that $\calA$ may include a pair as an edge of $H$ only if it has learned that pair's distance from one of the two oracles; writing $T$ for the set of pairs named by the answers $\calA$ has received, this reads $E_H \subseteq T$. This is the natural formalisation of constructing a spanner from oracle queries: without it $\calA$ may output edges whose weights it cannot certify. Each edge of $H$ therefore costs $\calA$ at least one unit of budget, and $|E_H| \leq \Phi$. 

\begin{theorem}[Lower Bound] \label{thm: ck-lower-bound}
    Let $c > 1$, let $k \geq 2$, and let $\Phi = \Phi(n)$ be a budget with
    \[
        \Phi(n) \;=\; o\!\left( \frac{n^{1+1/k}}{k} \right).
    \]
    Then for every deterministic algorithm $\calA$ of budget at most $\Phi$ there is an instance on which $\calA$ incurs distortion $t = \Omega(ck)$; and for every randomized algorithm of budget at most $\Phi$ there is an instance on which its expected distortion, over its own randomness, is $\Omega(ck)$. 
\end{theorem}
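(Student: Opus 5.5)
We follow the outline in the technical overview. The first step is to fix a graph $G$ on $[n]$ with girth $g = \Omega(k)$ and $m = \omega(\Phi k)$ edges. Concretely, we take $g \geq 2\lfloor k/2\rfloor$ or so, and $m = \Omega(n^{1+1/k})$ up to the constants allowed by known explicit high-girth constructions. If only weaker constructions are available, we take girth $\Theta(k)$ with $m \ge n^{1+1/(ak)}$ and rescale $k$ by a constant, which only affects constants.

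For each edge $e^\ast \in E(G)$ we define the metric $\sfd_{e^\ast}$ as the shortest-path metric of $G$, where $e^\ast$ has weight $1$ and every other edge has weight $c$. We also define the generic metric $\sfd_\circ$, in which all edges have weight $c$. The adversary answers exact distance queries truthfully. For BANN queries it uses a generic policy: on $\calO(Q,X,c)$, each $q$ receives a $\sfd_\circ$-nearest point $\varphi(q)\in X\setminus\{q\}$, with ties broken canonically.

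The first key step is indistinguishability. We define $S$ as the set of edges lying on a $G$-shortest path, of hop length $< g/2$, between the endpoints of some pair in $T_\circ$, the transcript of pairs learned by the deterministic $\calA$ run against the generic policy. Since $|T_\circ| \le \Phi$ and each such path has fewer than $g/2 = O(k)$ edges (unique by girth), we get $|S| = O(\Phi k) = o(m)$. We then claim that if $e^\ast\notin S$, the run on $\sfd_{e^\ast}$ produces the same transcript as the generic run.

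For exact queries this holds because $\sfd_{e^\ast}(x,y)\neq \sfd_\circ(x,y)$ forces every shorter route to use $e^\ast$. A route with weight below $c\cdot \mathrm{hop}_G(x,y)$ through $e^\ast$ must have at most about $\mathrm{hop}_G(x,y)+1/c\cdot$const hops. By girth, for hop distance $<g/2$ this path is the unique geodesic, which contains $e^\ast$ and so would place $e^\ast \in S$. For pairs at hop distance $\ge g/2$ both distances are $\Theta(ck)$, and we will handle them by truncating: we cap the metric, or we choose the answers so that only pairs at hop distance $< g/2$ have changed values. The slack $(c-1)$ makes this work because a path using $e^\ast$ saves only $c-1$.

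For BANN answers, the returned $\varphi(q)$ with its $\sfd_\circ$-distance remains valid under $\sfd_{e^\ast}$, since distances only decrease by at most a factor $c$: every edge weight drops from $c$ to at least $1$. Hence $\sfd_\circ \le c\,\sfd_{e^\ast}$, so $\varphi(q)$ is still a $c$-approximate answer. Its reported distance is correct whenever the pair avoids $S$, which holds because the pair itself lies in $T_\circ$. By induction over the adaptive interaction, the transcripts coincide, and so does $H$.

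The second key step is the re-routing lemma. Since $E_H\subseteq T_\circ$, we want to show $\sfd_H(u^\ast,v^\ast)=\Omega(ck)$ when $e^\ast\notin S$. Take any $H$-path from $u^\ast$ to $v^\ast$. Each of its edges $(x,y)$ with $e^\ast\notin S$ has weight $\sfd_{e^\ast}(x,y)\ge$ roughly $c\cdot\min(\mathrm{hop}_G(x,y), g/2)-O(c)$. Replacing each edge by its $G$-geodesic yields a closed walk in $G$ together with $e^\ast$, so the walk has $\Omega(g)$ hops in total, or else it includes some edge of length $\ge g/2$. This gives weight $\Omega(cg)=\Omega(ck)$, while $\sfd(u^\ast,v^\ast)=1$.

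The main obstacle is this accounting: we must make "$\sfd_{e^\ast}(x,y)\approx c\cdot\mathrm{hop}$ unless the geodesic uses $e^\ast$" rigorous. We also need to make sure that the concatenated walk cannot cancel back and forth to produce fewer than $g-1$ effective hops. We plan to argue via reduced walks, using the fact that a non-backtracking cycle must have length $\ge g$.

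Finally, we choose $e^\ast$ uniformly. Because $S$ is determined by the generic run alone, $\Pr[e^\ast\in S]\le |S|/m=o(1)$, so some $e^\ast$ gives distortion $\Omega(ck)$ for deterministic $\calA$; in fact the expected distortion is $\Omega(ck)$. Yao's principle, applied to this fixed input distribution, then extends the bound to the expected distortion of randomized algorithms.
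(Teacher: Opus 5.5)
Your proposal is correct and is essentially the paper's argument: the generic all-weight-$c$ run, the set $S$ of edges on the unique short geodesics of learned pairs with $|S| = O(\Phi k) = o(m)$, inductive indistinguishability of the true and generic runs when $e^\ast \notin S$, a re-routing argument that closes a short cycle through $e^\ast$, and Yao's principle. The paper settles what you left open as follows. It takes girth $\geq k+2$ with $\Omega(n^{1+1/k})$ edges via the Erd\H{o}s bound; your rescaling fallback is not constant-factor harmless, since it changes the exponent of $n$ in the budget condition. It caps the metric at $\Lambda = c\lceil k/2\rceil$; the cap is genuinely needed rather than a choice of answers, because exact distance queries must be answered honestly. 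Finally, the cancellation you worry about is a non-issue: any $u^\ast$--$v^\ast$ walk avoiding $e^\ast$ contains a simple path, and that path together with $e^\ast$ forms a cycle, so no non-backtracking argument is required.
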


\subsection{The instance and the adversary's policy}
\label{sec:instance-and-policy}
To construct the instance for our lower bound, we need a graph with high girth in the following classic result. 

\begin{lemma}[{\protect\cite{erdos1963regulare}}]
    \label{lem:girth-lower}
    For any integer $\lambda\geq 3$ and $n$ sufficiently large, there exists a graph over $n$ nodes with girth at least $\lambda$ and $\Omega\left(n^{1+1/(\lambda-2)}\right)$ edges.
\end{lemma}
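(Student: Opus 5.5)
This is the classical Erd\H{o}s high-girth result, and I would prove it by the standard probabilistic deletion method. Fix $\lambda \geq 3$ and set
\[
    \theta \;:=\; \frac{1}{\lambda-2}, \qquad G \sim G(n,p), \qquad p \;:=\; \alpha\, n^{\theta - 1},
\]
where $\alpha > 0$ is a small constant to be chosen. The plan is to show that $G$ has many edges but comparatively few cycles of length less than $\lambda$. Deleting one edge from each such short cycle then leaves a graph of girth at least $\lambda$ that still has $\Omega(n^{1+\theta})$ edges.

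\textbf{Step 1: edges.} The expected edge count is
\[
    \Ex[|E|] \;=\; p\binom{n}{2} \;=\; \Theta(\alpha\, n^{1+\theta}).
\]
A Chernoff bound shows $|E| \geq \tfrac12 \Ex[|E|]$ with probability $1 - o(1)$; alternatively one can work with expectations throughout.

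\textbf{Step 2: short cycles.} For $3 \leq j \leq \lambda - 1$, the number of potential $j$-cycles is at most $n^j/(2j)$, and each is present with probability $p^j$. So the expected number of cycles of length less than $\lambda$ is at most
\[
    \sum_{j=3}^{\lambda-1} (np)^j \;=\; \sum_{j=3}^{\lambda-1} \alpha^j n^{j\theta} \;\leq\; \lambda\, \alpha^{3}\, n^{(\lambda-1)\theta}.
\]
Here I use $np = \alpha n^{\theta} \geq 1$, so the sum is dominated by its top term up to a factor $\lambda$, and $\alpha \leq 1$.

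\textbf{Step 3: comparing exponents.} The top exponent is
\[
    (\lambda-1)\theta \;=\; \frac{\lambda-1}{\lambda-2} \;=\; 1 + \theta,
\]
which is the same order as the edge count. This exponent match is exactly why $\theta = 1/(\lambda-2)$ is the right choice. Since the two quantities have the same order in $n$, the constants decide the outcome. The edge count scales as $\alpha$ and the cycle count as $\alpha^3$ (more carefully, $\sum_j \alpha^j \leq 2\alpha^3$ when $\alpha \leq 1/2$). Choosing $\alpha$ small enough in terms of $\lambda$, say $\alpha \leq 1/(4\sqrt{\lambda})$, gives
\[
    \Ex\bigl[|E| - \#\{\text{cycles of length} < \lambda\}\bigr] \;\geq\; c_\lambda\, n^{1+\theta}
\]
for some constant $c_\lambda > 0$. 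Removing one edge from every short cycle destroys all cycles of length less than $\lambda$. Some outcome achieves at least this expectation, which yields a graph with girth at least $\lambda$ and $\Omega(n^{1+1/(\lambda-2)})$ edges; the hidden constant depends on $\lambda$.

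\textbf{Main obstacle.} The delicate step is Step 3. Because the exponents tie exactly, a naive bound on the cycle count could exceed the edge count by a constant factor. The fix is to exploit the different powers of $\alpha$ (linear for edges, cubic for cycles). If $\lambda$ is allowed to grow with $n$, that dependence would have to be tracked explicitly. For the paper's use with $\lambda = \Theta(k)$, it is enough that the constant is positive, or one can simply take $p = n^{\theta-1}/\lambda$ and check that the bound still goes through.
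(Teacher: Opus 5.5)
Your proof is correct. Forbidding cycles of length at most $\lambda-1$ makes the top cycle term scale like $n^{(\lambda-1)\theta}=n^{1+\theta}$, the same order as the edge count. The $\alpha$ versus $\alpha^3$ comparison then breaks the tie, so deleting one edge per short cycle leaves girth at least $\lambda$ and $\Omega(n^{1+1/(\lambda-2)})$ edges.

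The paper gives no proof of its own; it cites Erd\H{o}s--Sachs. That classical argument is deterministic and extremal. Roughly, among graphs of bounded degree and girth at least $\lambda$, an edge-maximal one is shown to be $d$-regular once $n$ exceeds a constant multiple of $(d-1)^{\lambda-2}$, by adding or switching edges between far-apart low-degree vertices. That route buys regularity: every vertex has degree about $n^{1/(\lambda-2)}$. Your probabilistic deletion route is shorter and self-contained, but controls only the total edge count. Lemma~\ref{lem:ck-generic} uses nothing beyond the girth and $|E(G)|$, so your version suffices for the paper.

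One point to tighten concerns the constant. The lemma is applied with $\lambda=k+2$, and the explicit $1/k$ in the budget of Theorem~\ref{thm: ck-lower-bound} suggests $k$ may grow with $n$. In that case $\Phi k=o(m)$ needs $m\ge c_0 n^{1+1/k}$ with $c_0$ independent of $k$.

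\begin{itemize}
\item Your final choice $\alpha\le 1/(4\sqrt{\lambda})$ gives a constant of order $\lambda^{-1/2}$.
\item Your fallback $p=n^{\theta-1}/\lambda$ is worse, giving a constant of order $1/\lambda$.
\end{itemize}

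Your own parenthetical already fixes this. Take $\alpha=1/4$ and bound termwise: $\sum_{j=3}^{\lambda-1}\alpha^j n^{j\theta}\le n^{(\lambda-1)\theta}\alpha^3/(1-\alpha)\le 2\alpha^3 n^{1+\theta}$. This needs neither $np\ge 1$ nor $n$ large in terms of $\lambda$. The expected edge count is $\alpha n^{1+\theta}(n-1)/(2n)\ge \alpha n^{1+\theta}/4$ for $n\ge 2$. So the expected surplus is at least $(1/16-1/32)\,n^{1+\theta}=n^{1+\theta}/32$, uniformly in $\lambda$.
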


\paragraph{The instance for the lower bound} Fix $c > 1$ and an integer $k \geq 2$, and put $k' \coloneqq \lceil k/2 \rceil$. Let weighted graph $G = ([n], E, w)$ be the following: Starting from a graph with girth $\geq k+2$ and $\Omega(n^{1+1/k})$ edges, which Lemma~\ref{lem:girth-lower} with $\lambda = k+2$ supplies for $n$ large enough and whose girth is $\geq 2k' + 1$ ($k + 2 \geq 2k' + 1$ since $k' = \lceil k/2\rceil$), sample one special edge $e^{*} = (u^{*}, v^{*})$ uniformly from $E$ and assign weights
\[
    w(e) \;=\; \begin{cases} 1 & e = e^* \\ c & \text{otherwise,} \end{cases}
\]
let $\sfd_G^w$ be the shortest path metric of $G$, and let the instance metric be the truncation:
\[
    \sfd(x, y) \;:=\; \min\{ \sfd_G^w(x,y), \; \Lambda \}, \text{where } \Lambda := k'c.
\]
Truncating a metric at a constant preserves the triangle inequality, so $\sfd$ is indeed a metric. Since only one edge is light, a path or a cycle with $s$ edges weighs at least $(s-1)c + 1$; we use this repeatedly below. In particular $\sfd(u^{*}, v^{*}) = 1$, as any other route between $u^{*}$ and $v^{*}$ has at least two edges, none of them $e^{*}$, and so weighs at least $2c$. Note that the instance has aspect ratio exactly $\Lambda = \Theta(ck)$, so no algorithm can be worse than the bound we prove; this is inherent to girth-based constructions, and the point is that a budget-bounded algorithm is forced all the way up to it.

\paragraph{The adversary's policy} Let $\sfd_G$ be the (unweighted) graph distance function of $G$. For a query point in the query set $q \in Q$ with target set $X$:
\begin{itemize} 
    \item If $q \in X$, return $q$ and report distance $0$
    \item Otherwise, let $j:=\min_{x\in X}\sfd_G(q,x)\geq 1$. If $j = 1$, scan $q$'s neighbors in $X$ in an arbitrary fixed order and return the first one joined to $q$ by an edge of weight $c$, if one exists. Otherwise, return the target neighbor joined by $q$ by the special edge of distance $1$. If $j \geq 2$, return a point of $X$ that is closest to $q$ in $\sfd_G$, breaking ties according to an arbitrary fixed order.
\end{itemize}

These answers are legal for every $c > 1$: the returned point is at distance at most $\min\{jc, \Lambda\}$ from $q$ while every point of $X$ is at distance at least $\min\{(j-1)c + 1, \Lambda\}$ from $q$, and $\min\{jc, \Lambda\} \leq c \cdot \min\{(j-1)c + 1, \Lambda\}$ since $c > 1$ and $jc \leq (j-1)c^2 + c$.

For all distance oracle queries, the adversary has no choice but to be honest.

\subsection{The proof of Theorem~\ref{thm: ck-lower-bound}}
One can expect edges in the output of the $c$-BANN oracle and the distance oracle calls made by the spanner algorithm $\calA$ to miss the special edge $e^* = (u^*, v^*)$ of weight $1$ with high probability if $\calA$ respects the budget. The following lemma states that some edge with desired properties must exist to make up for missing the special edge assuming the spanner $H$ output by $\calA$ still has small distortion.  

\begin{lemma}
\label{lem:re-routing}
    Suppose the output spanner $H$ satisfies $\sfd_H(u^*, v^*) < \Lambda$. Then there must exist some edge $(x,y) \in E_H$ that has both of the following properties:
    \begin{enumerate}[label=(\alph*)]
        \item $x$ and $y$ have a unique shortest path in $G$ and $\sfd_G(x,y) \leq k'$; and
        \item this shortest path runs through the edge $e^*$.
    \end{enumerate}
\end{lemma}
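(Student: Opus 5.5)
We argue by contradiction on the weights along a short spanner path. Take a path $P = (u^* = z_0, z_1, \ldots, z_m = v^*)$ in $H$ with total weight $\sum_{s} \sfd(z_{s-1}, z_s) < \Lambda = k'c$. Every edge of this path then has $\sfd(z_{s-1}, z_s) < \Lambda$, so the truncation is inactive and $\sfd(z_{s-1}, z_s) = \sfd_G^w(z_{s-1}, z_s)$.

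For each edge, fix a shortest weighted path $\pi_s$ in $G$ between its endpoints. Since every $G$-edge weighs at least $1$ and all but one weigh $c$, a $G$-path with $h$ hops weighs at least $(h-1)c+1$. Hence $\pi_s$ has at most $k'$ hops. Concatenating the $\pi_s$ gives a walk $W$ in $G$ from $u^*$ to $v^*$ of weight $< k'c$.

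The plan is to show that some $\pi_s$ uses $e^*$ and that this $\pi_s$ is in fact the unique shortest unweighted path, i.e.\ $\sfd_G(z_{s-1},z_s) = |\pi_s| \le k'$ with $e^*$ on it.

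First, suppose no $\pi_s$ uses $e^*$. Then $W$ is a walk in $G - e^*$ from $u^*$ to $v^*$ with all edges of weight $c$, so it has fewer than $k'$ hops. Reducing $W$ to a simple path and adding $e^*$ closes a cycle of length at most $k' < 2k'+1 \le \mathrm{girth}(G)$, a contradiction. Hence some $\pi_s$ contains $e^*$; call its endpoints $x = z_{s-1}$, $y = z_s$. This $\pi_s$ has $h \le k'$ hops, and it is a simple path through $e^*$ (shortest paths are simple).

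It remains to show that $\pi_s$ is the unique shortest unweighted path between $x$ and $y$. Suppose $\sigma \neq \pi_s$ is another $x$–$y$ path with at most $h \le k'$ hops. The symmetric difference of two distinct $x$–$y$ paths contains a cycle. That cycle has length at most $|\pi_s| + |\sigma| \le 2k' < 2k'+1$, contradicting the girth bound. So any two $x$–$y$ paths of length at most $k'$ coincide. Thus $\pi_s$ is the unique path of at most $k'$ hops, which gives $\sfd_G(x,y) = h \le k'$, and the unique unweighted shortest path is $\pi_s$, which contains $e^*$. This establishes (a) and (b) for the edge $(x,y) \in E_H$.

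The main obstacle is the hop-counting in the first step. The bound $(h-1)c+1 \le$ weight $< k'c$ gives $h \le k'$ per edge, but for the cycle argument I need the entire walk avoiding $e^*$ to have fewer than $k'$ hops. This holds because such a walk has weight exactly $c$ times its hop count, and that weight is $< k'c$. The argument also requires $W$ to be reducible to a simple $u^*$–$v^*$ path not using $e^*$. Such a path is nontrivial, since $u^* \neq v^*$, and it has length at least $1$; since it avoids $e^*$ it is not the edge $e^*$ itself. Adding $e^*$ to it therefore yields a genuine cycle of length at least $3$. I would double-check that this cycle's length is at most $k' < 2k'+1$, which is where the girth hypothesis from the instance, obtained from Lemma~\ref{lem:girth-lower}, is used.
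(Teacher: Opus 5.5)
Your proof is correct and follows essentially the same route as the paper. You concatenate $G$-shortest paths for the edges of a light $H$-path into a walk of weight $<\Lambda$, use the girth bound to force $e^*$ onto one segment, and use the girth again for uniqueness, since two distinct $x$--$y$ paths of at most $k'$ hops would close a cycle of at most $2k'$ edges. The differences are cosmetic: you absorb the case $e^*\in E_H$ into the general argument instead of treating it separately, and you bound the hop count of an $e^*$-free walk where the paper bounds the weight of a cycle through $e^*$.
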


\begin{proof}
    If $e^{*} \in E_H$ the conclusion holds with $\{x,y\} = \{u^{*}, v^{*}\}$, so assume $e^{*} \notin E_H$. The proof alternates between the original graph $G$ and the spanner $H$. 
    
    Let $u^{*} = z_0, z_1, \dots, z_r = v^{*}$ be the cheapest weighted path between $u^*, v^*$ in $H$, so that each consecutive pair $\{z_i, z_{i+1}\}$ is an edge of $H$ and
    \[
        \sum_{i=0}^{r-1} \sfd(z_i, z_{i+1}) \;=\; \sfd_H(u^{*}, v^{*}) \;<\; \Lambda .
    \]
    Each term is at most the whole sum, so $\sfd(z_i, z_{i+1}) < \Lambda$ for every $i$. A path with $s$ edges weighs at least $(s-1)c + 1$, so a path of weight less than $\Lambda = k'c$ has $s < k' + 1 - 1/c$, hence $s \leq k'$; thus $\sfd_G(z_i, z_{i+1}) \leq k'$. Therefore, every pair $z_i,z_{i+1}$ has a unique shortest path in $G$ since, otherwise, there would be a cycle of weight less than $2\Lambda = 2k'c$ which indicates a cycle of $\leq 2k'$ edges in $G$ contradicting the girth of $G$. We will pick $x,y$ from pairs among $z_i,z_{i+1}$ for $i \in [0,r-1]$ which, according to the above, guarantees property (a) in the lemma statement. We name the shortest paths between $z_i,z_{i+1}$ in $G$ as follows:
    \[
        P_i \;:=\; P_{z_i z_{i+1}}.
    \]
    Concatenate $P_0, P_1, \dots, P_{r-1}$ end to end, i.e., the last point of $P_i$ and the first point of $P_{i+1}$ are the same point $z_{i+1}$. The resulting path in $G$ from $u^{*}$ to $v^{*}$ is denoted by $W$ . Note that $W$ could contain cycles and it has weight
    \[
        w(W) \;=\; \sum_{i=0}^{r-1} w(P_i) \;=\; \sum_{i=0}^{r-1} \sfd(z_i, z_{i+1}) \;=\; \sfd_H(u^{*}, v^{*}) \;<\; \Lambda .
    \]
    The $H$-path has now been left behind. The rest of the proof takes place entirely inside $G$.

    We now prove that $W$ has to traverse $e^{*}$ in $G$. Suppose it does not. Then $W$ together with $e^{*}$ is a closed walk traversing $e^{*}$ exactly once, and therefore contains a simple cycle through $e^{*}$, of weight at most $w(W) + 1 < \Lambda + 1$, which indicates the cycle has at most $\lceil(\Lambda + 1) /c \rceil = \lceil k' + 1/c \rceil = k'+1$ edges, using $c > 1$. This is a contradiction since $G$ has girth at least $2k'+1$. So $W$ does traverse $e^{*}$.

   We now locate $x$ and $y$. Walk along $W$ from $u^{*}$ until $e^{*}$ is traversed for the first time, and let $i$ be the index of the segment being traversed at that moment: $W$ was assembled out of $P_0, \dots, P_{r-1}$, and this says $e^{*}$ lies on $P_i$. That index names the pair,
    \[
        \{x,y\} \;:=\; \{z_i, z_{i+1}\} .
    \]
    This gives property (b).
\end{proof}

Re-routing the special edge in the spanner $H$ requires that $H$ contains some particular pair $(x,y)$ as stated in the above lemma. Now we define the following notation of $S$ such that if $e^* \in S(H)$, then some edge in $H$ can play the role of $x,y$ in lemma~\ref{lem:re-routing} and save $H$ from missing $e^*$.

\begin{definition}
Write $P_{xy}$ for the shortest path from $x$ to $y$ in $G$. It is well defined and does not depend on $e^{*}$: a path with $s$ edges weighs at least $(s-1)c + 1 > s'c$ whenever $s > s'$, so the weighted shortest paths are exactly the hop-shortest ones no matter which edge is special, and when the two points are at most $k'$ edges apart the girth of $G$ makes the hop-shortest path unique. Call a pair of points $a,b$ close if its endpoints are at most $k'$ edges apart in $G$ ($\sfd_G(a,b) \leq k'$). For any set $F$ of pairs of points, let $S(F)$ be the set of edges that the close pairs of $F$ have their shortest paths run through:
\[
    S(F) \;:=\; \big\{\, e \in E(G) \;:\; e \text{ lies on } P_{xy} \text{ for some close pair } \{x,y\} \in F \,\big\} \;\subseteq\; E(G) .
\]
\end{definition}
We will abuse the notation of $S(\cdot)$ and let it take in a graph or a set of edges. For example, $S(H)$ and $S(E_H)$ will both mean applying $S(\cdot)$ to the pairs indicated by the edge set of $H$. Note that $S$ is monotone, $F \subseteq F'$ implying $S(F) \subseteq S(F')$.

\begin{observation}
\label{obs:S-coverage}
For any graph $G = (V, E)$, and subset of pairs $F \subset V \times V$,
\[
     |S(F)| \leq |F| k' .
\]
\end{observation}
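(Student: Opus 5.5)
The bound follows by charging each edge of $S(F)$ to a close pair of $F$ whose shortest path contains it, and noting that each close pair can be charged at most $k'$ times. By definition,
\[
    S(F) \;=\; \bigcup_{\{x,y\} \in F \text{ close}} E(P_{xy}),
\]
where $E(P_{xy})$ is the set of edges of the path $P_{xy}$. A union bound on cardinalities gives
\[
    |S(F)| \;\leq\; \sum_{\{x,y\} \in F \text{ close}} |E(P_{xy})|.
\]

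The only fact needed is that $|E(P_{xy})| \leq k'$ for every close pair. Here $P_{xy}$ is the hop-shortest path between $x$ and $y$; it is unique by the definition preceding the observation. So its number of edges equals $\sfd_G(x,y)$, and closeness means exactly $\sfd_G(x,y) \leq k'$. Each summand is therefore at most $k'$.

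The number of close pairs in $F$ is at most $|F|$, so $|S(F)| \leq |F| k'$. Ordered versus unordered pairs only helps: $\{x,y\}$ and $\{y,x\}$ give the same path, so treating $F$ as a set of ordered pairs can only overcount.

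There is no real obstacle. The one point to check is that $P_{xy}$ is well defined and uses exactly $\sfd_G(x,y)$ edges. For close pairs the paper already establishes this using the girth bound $2k'+1$. For pairs that are not close nothing is needed, since they contribute nothing to $S(F)$.
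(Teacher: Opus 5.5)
Your proof is correct and follows the paper's argument: pairs that are not close contribute nothing to $S(F)$. Each close pair contributes the edges of its single shortest path $P_{xy}$, which is unique because $G$ has girth at least $2k'+1$ and has at most $k'$ edges, so a union bound gives $|S(F)| \leq |F| k'$.
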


\begin{proof}
 A pair of $F$ that is not close contributes nothing to $S(F)$. A close pair contributes the edges of a single shortest path, because two shortest paths between points at most $k'$ edges apart would close a cycle of at most $2k'$ edges, contradicting the girth of $G$. Each of these paths contributes at most $k'$ edges by definition, so $|S(F)| \leq |F| k'$.
\end{proof}

Now we start taking a closer look at the spanner algorithm's behavior. We will show the algorithm's behavior is fixed (with high probability) before $e^{*}$ is planted, and a good event will say that $e^{*}$ avoids not just $H$ but everything $H$ could re-route through.

\begin{lemma}\label{lem:ck-generic}
    Let $G$ be a graph with girth $\geq 2k'+1$ and $m \coloneqq |E(G)| = \Omega(n^{1+1/k})$, let $k \geq 2$, $c > 1$, and let $\calA$ be a deterministic algorithm with budget $\Phi$. If $\Phi k = o(n^{1+1/k})$, then with probability $1 - o(1)$ over the uniform choice of $e^{*}$, $\calA$ incurs distortion $t \geq \Lambda = \Omega(ck)$ on the pair $(u^{*}, v^{*})$.
\end{lemma}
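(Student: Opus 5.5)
The plan is to simulate $\calA$ against a \emph{generic} world that does not depend on $e^{*}$. Then we argue that, unless $e^{*}$ falls into a small set determined by this generic run, the real run is identical to it. In that case Lemma~\ref{lem:re-routing} forbids a cheap $u^{*}$–$v^{*}$ path.

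Concretely, define the generic metric $\sfd^{\mathrm{gen}}(x,y) := \min\{c\,\sfd_G(x,y), \Lambda\}$, which is what $\sfd$ would be if every edge had weight $c$. Define the generic policy as the adversary's policy of Section~\ref{sec:instance-and-policy} with every edge treated as having weight $c$. Run the deterministic $\calA$ against it. This produces a fixed transcript, a fixed set $T^{\mathrm{gen}}$ of learned pairs with $|T^{\mathrm{gen}}| \leq \Phi$, and a fixed output $H^{\mathrm{gen}}$ with $E_{H^{\mathrm{gen}}} \subseteq T^{\mathrm{gen}}$. None of these depends on $e^{*}$. Define the bad set $\mathcal{B} := S(T^{\mathrm{gen}}) \cup T^{\mathrm{gen}}$, intersected with $E(G)$. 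By Observation~\ref{obs:S-coverage}, $|\mathcal{B}| \leq \Phi k' + \Phi = O(\Phi k) = o(n^{1+1/k}) = o(m)$. Since $e^{*}$ is uniform over $E(G)$, we get $\Pr[e^{*} \in \mathcal{B}] = o(1)$.

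The key step is to show that if $e^{*} \notin \mathcal{B}$, then the real run coincides with the generic run. I would prove this by induction over the queries of $\calA$. Assume the transcripts agree so far, so the next query is the same.

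For a distance query $(x,y)$, the pair lies in $T^{\mathrm{gen}}$. If $\sfd_G(x,y) \leq k'$, then $P_{xy}$ does not contain $e^{*}$, because $e^{*} \notin S(T^{\mathrm{gen}})$. Since $P_{xy}$ is the unique weighted shortest path regardless of which edge is special, the true distance is $\min\{c\,\sfd_G(x,y),\Lambda\} = \sfd^{\mathrm{gen}}(x,y)$. If $\sfd_G(x,y) > k'$, every path has weight at least $k'c = \Lambda$, so both values are truncated to $\Lambda$.

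For a BANN query $q$ with target set $X$, the real adversary only deviates from the generic one when $j=1$ and every neighbor of $q$ in $X$ is joined to $q$ by the special edge. That is, it deviates exactly when the generic answer is the pair $\{q,\varphi(q)\} = e^{*}$. But that pair belongs to $T^{\mathrm{gen}}$, and $e^{*} \notin T^{\mathrm{gen}}$, so this case does not occur. The reported distance of the returned pair then agrees by the distance argument above, since the pair is in $T^{\mathrm{gen}}$. Hence the real output is $H = H^{\mathrm{gen}}$.

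To conclude, suppose for contradiction that $\sfd_H(u^{*},v^{*}) < \Lambda$. Lemma~\ref{lem:re-routing} then yields an edge $(x,y) \in E_H \subseteq T^{\mathrm{gen}}$ that is a close pair whose shortest path contains $e^{*}$, so $e^{*} \in S(T^{\mathrm{gen}})$, a contradiction. Therefore $\sfd_H(u^{*},v^{*}) \geq \Lambda = \Lambda \cdot \sfd(u^{*},v^{*})$, and the distortion is at least $\Lambda = k'c = \Omega(ck)$.

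I expect the main obstacle to be the indistinguishability induction, and specifically checking carefully that reported distances (not just identities) match. This needs the uniqueness of $P_{xy}$ independent of $e^{*}$ and the truncation at $\Lambda$ to handle the far pairs. It also needs the BANN tie-breaking: for $j \geq 2$ the hop-closest choice is the same in both worlds, because the adversary's rule uses $\sfd_G$, which does not depend on weights.
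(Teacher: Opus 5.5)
Your proposal is correct and follows the paper's proof essentially step for step. It uses the same generic (all-weights-$c$) run and the same bad set $S(T)$ of size at most $\Phi k'$; adding $T^{\mathrm{gen}}\cap E(G)$ is redundant, since any such pair is a close pair whose shortest path is itself. It then uses the same indistinguishability induction via uniqueness of short paths and truncation at $\Lambda$, and the same final appeal to Lemma~\ref{lem:re-routing}.

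One small wording fix: for $j=1$ the real policy also deviates when the special-edge neighbor is merely \emph{first} in the fixed order, not only when it is the sole neighbor in $X$. Your restatement ``exactly when the generic answer is $e^{*}$'' already covers this case, so the argument stands.
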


\begin{proof}
    Let a generic $c$-BANN oracle answers a query $(Q,X,c)$ as follows: For a query $q \in Q$ with $\min_{x \in X} \sfd_G(q,x) \geq 2$, it returns the point of $X$ closest to $q$ in $G$, under the same fixed tie-breaking rule as in Section~\ref{sec:instance-and-policy}, and reports distance $\min\{jc, \Lambda\}$; a query $q$ with $\min_{x \in X} \sfd_G(q,x) = 1$ returns the first neighbor in the fixed order and reports distance $c$; a distance query on a pair $j$ edges apart reports $\min\{jc, \Lambda\}$. This is exactly the policy of Section~\ref{sec:instance-and-policy} evaluated as though every edge weighed $c$. 

    Run the spanner algorithm $\calA$ with the generic $c$-BANN oracle and the distance oracle. The generic run is a fixed, special-edge-free computation and $\calA$ produces a spanner $H$ with at most $\Phi$ edges. The run also generates a transcript that contains all outputs by the oracles. Every item of the transcript therefore names a pair of points: the query point together with the answer point. Collect them and denote the set of pairs by
    \[
        T \;:=\; \big\{\, \text{the pair named by an item of the transcript} \,\big\} , \qquad |T| \;\leq\; \Phi.
    \]
    Since $E_H \subseteq T$ and $S$ is monotone, so 
    \[
        S(H) \;\subseteq\; S(T) ;
    \]
    and by Observation~\ref{obs:S-coverage}, so
    \[
        |S(T)| \;\leq\; k' \, |T| \;\leq\; \Phi k' .
    \]

    The set $S(T)$ is fixed by $G$ and algorithm $\calA$, before the special edge is drawn. Now place the special edge: Sample $e^{*}$ uniformly among the $m = \Omega(n^{1+1/k})$ edges (a larger $m$ only helps). Since $S(T)$ was fixed in advance and holds at most $\Phi k'$ of the $m$ edges,
    \[
        \Pr\big[\, e^{*} \notin S(T) \,\big] \;\geq\; 1 - \frac{\Phi k'}{m} \;=\; 1 - o(1),
    \]
    using $\Phi k' \leq \Phi k = o(m)$. Call $e^{*} \notin S(T)$ the good event, and assume it from now on. Since $S(H) \subseteq S(T)$, it also gives $e^{*} \notin S(H)$.

    We now argue that a true run of $\calA$ against the policy of Section~\ref{sec:instance-and-policy} produces exactly the same spanner as the generic run assuming the good event. Argue by induction: if everything of the generic run and the true run so far has agreed, the deterministic $\calA$ is in the same state and issues the same query. Then the oracles of both policies name the same answer point for the query point, selecting by graph distance in $G$ under the same tie-breaking order, since the only differences between the two policies are (1) the $c$-BANN oracle of the true run would avoid outputting pair $u^*,v^*$; (2) If the oracles output a pair whose shortest path $G$ contains $e^*$, then the output distances of the two policies are different. The good event guarantees that $e^{*}$, and every \emph{close} pair whose shortest path contains $e^{*}$, is never learned by the algorithm, which settles all queried pairs at hop distance at most $k'$. A queried pair at hop distance $j > k'$ whose shortest path uses $e^{*}$ is not covered by the good event, but it is covered by the truncation: the generic policy reports $\min\{jc, \Lambda\} = \Lambda$, while the true distance is at least $(j-1)c + 1 \geq k'c + 1 > \Lambda$ and so is also reported as $\Lambda$. The two cases are exhaustive, and the boundary is tight --- at $j = k'$ with $e^{*}$ on the path the true distance $(k'-1)c + 1$ is below $\Lambda$, so there it is the good event and not the truncation that saves the argument; this is exactly why $\Lambda$ is set to $k'c$. And $\calA$ adds the same edges to the output spanner $H$ according to the outputs of the oracles. Therefore, the transcripts $T$, $S(T)$, and output spanners of the generic and the true run are the same.

    We are ready to proceed to the conclusion. First, assuming the good event, the special edge is not in the output: were $e^{*}$ an edge of $H$, it would be a close pair, so its shortest path (the edge $e^{*}$ itself) would lie in $S(H) \subseteq S(T)$, which the good event forbids. Second, suppose for contradiction that $\sfd_H(u^{*}, v^{*}) < \Lambda$. The hypothesis of Lemma~\ref{lem:re-routing} is met, so it produces a pair of $H$ at most $k'$ edges apart whose shortest path runs through $e^{*}$; that is exactly the statement $e^{*} \in S(H)$, and $S(H) \subseteq S(T)$, again contradicting the good event. Therefore
    \[
        \sfd_H(u^{*}, v^{*}) \;\geq\; \Lambda \;=\; k'c \;\geq\; kc/2 .
    \]
\end{proof}

With Lemma~\ref{lem:re-routing} and Lemma~\ref{lem:ck-generic}, we are ready to show the lower bound for deterministic algorithms. To extend the lower bound to randomized algorithms, we 
will apply Yao's Principle.

\begin{lemma}[Yao's principle] \label{lem:yao}
    Let $\{I_\omega\}_{\omega \in \Omega}$ be a finite family of instances, each consisting of a metric on $[n]$ \emph{together with} a fixed rule by which the two oracles answer, and let $\calD$ be a distribution on $\Omega$. Write $t(\calA, I)$ for the distortion of the output of an algorithm $\calA$ run on instance $I$. Suppose that
    \[
        \E_{\omega \sim \calD}\big[\, t(\calA, I_\omega) \,\big] \;\geq\; \beta
        \qquad \text{for every \emph{deterministic} algorithm $\calA$ of budget $\Phi$.}
    \]
    Then for every randomized algorithm $\calR$ of budget $\Phi$, meaning one that respects the budget $\Phi$ for every random string, there is a single $\omega \in \Omega$ with $\E\big[t(\calR, I_\omega)\big] \geq \beta$, the expectation being over $\calR$'s coins alone.
\end{lemma}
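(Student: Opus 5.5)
The plan is the standard averaging argument. A randomized algorithm $\calR$ of budget $\Phi$ is a probability distribution over random strings $\rho$. Fixing any $\rho$ yields a deterministic algorithm $\calR_\rho$, and $\calR_\rho$ respects the budget $\Phi$ because $\calR$ does so for every random string. So each $\calR_\rho$ falls under the hypothesis, and $\E_{\omega \sim \calD}[t(\calR_\rho, I_\omega)] \geq \beta$ for every $\rho$. One modelling point must be checked here. Each instance $I_\omega$ fixes the oracle answering rule, for example the adversary's policy of Section~\ref{sec:instance-and-policy}, and this rule is deterministic given the query. Hence the run of $\calR$ on $I_\omega$ with coins $\rho$ is exactly the run of $\calR_\rho$ on $I_\omega$, and $t(\calR, I_\omega)$ is a random variable in $\rho$ alone.

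Next, average over $\rho$ as well:
\begin{align*}
    \E_{\omega \sim \calD}\Big[ \E_{\rho}\big[ t(\calR_\rho, I_\omega) \big] \Big] \;=\; \E_{\rho}\Big[ \E_{\omega \sim \calD}\big[ t(\calR_\rho, I_\omega) \big] \Big] \;\geq\; \beta .
\end{align*}
Exchanging the order of expectation is justified by Tonelli, since $t \geq 1$ is nonnegative and $\Omega$ is finite. If the random strings are infinite, one can first reduce to finitely many by noting that a bounded-budget run reads only finitely many coins, or one can simply invoke Tonelli on the product measure.

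Finally, the left-hand side is an average over $\omega \sim \calD$ of the quantity $g(\omega) := \E_\rho[t(\calR_\rho, I_\omega)] = \E[t(\calR, I_\omega)]$. An average of at least $\beta$ forces some $\omega$ in the (finite) support of $\calD$ with $g(\omega) \geq \beta$. This is the required single instance.

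I do not expect a real obstacle. The only care needed is the bookkeeping above: the oracle rule must be part of the instance and deterministic, so that fixing coins really does give a deterministic algorithm on the same instance, and the budget constraint must hold for every random string, so that each $\calR_\rho$ is admissible.
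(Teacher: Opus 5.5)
Your proposal is correct and follows the paper's argument. Fixing the coins gives a budget-$\Phi$ deterministic algorithm; you swap the expectations over coins and instances, and then note that some instance is at least the average. The only difference is how you justify the interchange: you use Tonelli with $t \geq 1$, whereas the paper argues that budget-$\Phi$ deterministic algorithms form a finite set, so all expectations are finite sums. Either justification suffices.
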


\begin{proof}
    A randomized algorithm is a distribution over deterministic ones: fixing its random string $r$ leaves a deterministic algorithm $\calR_r$, and $\calR_r$ still has budget $\Phi$ because $\calR$ respects the budget for every random string. Moreover the deterministic algorithms of budget $\Phi$ form a \emph{finite} set here: each reads at most $\Phi$ entries, query and target sets are subsets of $[n]$, and in the family below the reported distances range over a finite set, so the whole transcript space is finite and such an algorithm is a finite decision tree. Every expectation below is therefore a finite sum. Now
    \[
        \max_{\omega \in \Omega} \E_r\big[t(\calR_r, I_\omega)\big]
        \;\geq\; \E_{\omega \sim \calD}\,\E_r\big[t(\calR_r, I_\omega)\big]
        \;=\; \E_r\,\E_{\omega \sim \calD}\big[t(\calR_r, I_\omega)\big]
        \;\geq\; \beta,
    \]
    where the first step holds because a maximum is at least an average, the second exchanges two finite sums, and the third applies the hypothesis to each fixed $\calR_r$. Any $\omega$ attaining the maximum proves the claim. 
    
\end{proof}

Now we are ready to prove Theorem~$\ref{thm: ck-lower-bound}$. 
\begin{proof}[Proof of Theorem~\ref{thm: ck-lower-bound}]
    Take the graph $G$ of Lemma~\ref{lem:girth-lower}, of girth $\geq k + 2 \geq 2k'+1$ and edges $m = \Omega(n^{1+1/k})$. The one condition of Lemma~\ref{lem:ck-generic} is exactly the hypothesis on the budget:
    \[
        \Phi k \;=\; o\!\left( \frac{n^{1+1/k}}{k} \cdot k \right) \;=\; o\big(n^{1+1/k}\big) \;=\; o(m).
    \]
    So every deterministic algorithm has $t \geq k'c \geq kc/2 = \Omega(ck)$ with probability $1 - o(1)$ over the randomness of $e^{*}$, and in particular on some instance, which proves the deterministic case. 

    For randomized algorithms, apply Lemma~\ref{lem:yao} to the uniform distribution over the family $\{I(e^{*})\}_{e^{*} \in E(G)}$, where each instance $I(e^*)$ consists of the metric $\sfd$ together with the answering policy of Section~\ref{sec:instance-and-policy} which is a fixed function of the query set $Q$, the target set $X$, and $e^{*}$. Since distortion is always at least $1$, Lemma~\ref{lem:ck-generic} gives, for every deterministic $\calA$ of budget $\Phi$,
    \[
        \E_{e^{*}}\big[t(\calA, I(e^{*}))\big] \;\geq\; k'c \cdot \big(1 - o(1)\big) \;=\; \Omega(ck),
    \]
    so Lemma~\ref{lem:yao} applies with $\beta = \Omega(ck)$. It follows that any randomized algorithm $\calR$ with budget $\Phi$ must incur expected distortion $\Omega(ck)$ on at least one instance in the family. Equivalently, for every such randomized algorithm $\calR$, there exists some choice of the special edge $e^{*} \in E(G)$ for which
    \[ \E_{\calR}\big[t(\calR,I(e^{*}))\big] \;=\; \Omega(ck). \]
    This proves the randomized case and completes the proof of Theorem~\ref{thm: ck-lower-bound}.

\end{proof}

\section{Application to $\sfW_q$ Distance Approximation} \label{sec: wasserstein-application}

In this section we prove Corollary \ref{cor: wasserstein-application} using a relatively standard reduction from bounded-hop spanner constructions to $\sfW_q$ approximation using minimum cost flow. We will be concerned with the algorithmic task of not only Wasserstein distance approximation, but rather the task of reporting a coupling $\gamma$ of distributions $\mu, \nu$ whose cost is a good multiplicative approximation of $\sfW_{q}(\mu, \nu)$.

We first briefly outline the connection between the problems of Minimum Cost Flow and Wasserstein distance computation. We then state and prove useful lemmas, and finally, we present the proof of Corollary \ref{cor: wasserstein-application}.

\begin{definition}[Minimum Cost Flow]
    Let $G = ([n], \vec{E})$ be a directed graph with $m := |\vec{E}|$, integral edge costs $c_e \in \{-C, \ldots, C\}$, integral capacity upper and lower bounds $u_e^{+}, u_e^{-} \in \{0, \ldots, U\}$ with $u_e^{-} \leq u_e^{+}$, and integral vertex demands $d \in \Z^n$. Then, the minimum cost flow problem asks to solve the following optimization problem
    \begin{align*}
        &\min_{f \in \R^{m}} \sum_{e \in \vec{E}} f_e \cdot c_e \\
        \text{Subject to } &\,\forall e \in \vec{E}, \, \, \, u_e^- \leq f_e \leq u_e^+ \\
        &\, \forall v \in [n], \, \, \, \sum_{u: (u, v) \in \vec{E}} f_{(u, v)} - \sum_{w: (v, w) \in \vec{E}} f_{(v, w)} = d_v \\
        &\sum_{v \in [n]} d_v = 0
    \end{align*}
\end{definition}

We now show how to reduce computation of $\sfW_q(\mu, \nu)$ over an arbitrary metric space $\calM$ to the problem of minimum cost flow. In what follows, we write $\mathrm{cost}(f) = \sum_{e} f_e c_e$. 



\begin{lemma} \label{lem: mcf-reduction}
    Let $\calM = ([n], \sfd)$ be a metric space with aspect ratio $\Delta$, $q \in [1, \infty)$, and $\mu, \nu \in \frac{1}{U} \Z^n$ denote discrete probability distributions over $[n]$ for some $U \in \N$. Let undirected weighted graph $H = ([n], E_H)$ be a $k$-hop $\sfD$-spanner of $\calM$, with the property that edge $(i,j) \in E_H$ has weight $\sfd(i,j)$. We let $\sfd_H$ denote the shortest path metric in $H$ and define $\sfd_{min} := \min \{ \sfd(i, j) : (i, j) \in E_H, i \neq j\}$. Then, consider the following minimum cost flow instance over directed graph $G = ([n] \times \{0, \ldots, k\}, \vec{E})$ with edge set $\vec{E}$ defined as follows:
    \begin{itemize}
        \item For each $(i, j) \in E_H$ and $\ell \in \{0, \ldots, k-1\}$, insert edges $(i, \ell) \to (j, \ell+1)$ and $(j, \ell) \to (i, \ell+1)$ into $\vec{E}$, each with cost $\lceil \sfd(i, j)^q / \sfd_{min}^q \rceil$ and capacity $U$. 
        \item For each $i \in [n]$ and $\ell \in \{0, \ldots, k-1\}$, insert edge $(i, \ell) \to (i, \ell+1)$ into $\vec{E}$, with cost $0$ and capacity $U$. 
        \item For each $i \in [n]$, set $d_{(i, 0)} := -U \cdot \mu(i)$ and $d_{(i, k)} := U \cdot \nu(i)$. For all $\ell \in \{1, \ldots, k-1\}$, set $d_{(i, \ell)} = 0$. 
    \end{itemize}
    Then the instance is feasible, and the optimal flow satisfies 
    \begin{align*}
        k^{1-q} \cdot \sfW_q(\mu, \nu)^q \, \leq \, \frac{\sfd_{min}^q}{U} \cdot \mathrm{cost}(f^\star) \, \leq \, 2 \sfD^q \cdot \sfW_q(\mu, \nu)^q.
    \end{align*}
    Moreover, from any feasible flow $f$ one can extract in $O(|\vec{E}| \cdot k + n)$ additional time a coupling $\gamma_f \in \Gamma(\mu, \nu)$, supported on at most $|\vec{E}|$ pairs, with $$\sfW_q(\mu, \nu)^q \leq \sum_{i,j} \gamma_f(i,j) \sfd(i,j)^q \leq 2k^{q-1}\sfD^q \sfW_q(\mu, \nu)^q$$
    
\end{lemma}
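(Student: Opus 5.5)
We prove the two inequalities for the optimal flow by relating layered flows to couplings, then extract a coupling from an arbitrary feasible flow by path decomposition. Write $\tilde c(i,j) := \lceil \sfd(i,j)^q/\sfd_{min}^q \rceil$ for the cost of a layer-crossing copy of $(i,j) \in E_H$. Since $\sfd(i,j) \geq \sfd_{min}$ on every edge of $H$, the ratio inside the ceiling is at least $1$, so
\begin{align*}
\sfd(i,j)^q \;\leq\; \sfd_{min}^q \, \tilde c(i,j) \;\leq\; 2\,\sfd(i,j)^q .
\end{align*}
This factor-$2$ rounding loss is the only source of the $2$ in the statement.

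For the upper bound and feasibility, we start from an optimal coupling $\gamma^\star$ for $\sfW_q$. By the integrality argument in the proof of Lemma \ref{lem: wq-sandwich}, we may take $U\gamma^\star$ integral. For each pair $(i,j)$ in the support, the $k$-hop guarantee gives a path of exactly $k$ steps in $H$ (we pad with stay edges $(i,\ell)\to(i,\ell+1)$ if fewer hops are needed) whose total weight is at most $\sfD\,\sfd(i,j)$. We lift this path to $(i,0)\to\cdots\to(j,k)$ and route $U\gamma^\star(i,j)$ units along it. The demands are then met exactly. For capacities, total flow is $U$, so each edge carries at most $U$. The cost of one such path is
\begin{align*}
\sum_s \tilde c(e_s) \;\leq\; \frac{2}{\sfd_{min}^q}\sum_s \sfd(e_s)^q \;\leq\; \frac{2}{\sfd_{min}^q}\Big(\sum_s \sfd(e_s)\Big)^q \;\leq\; \frac{2\,\sfD^q\,\sfd(i,j)^q}{\sfd_{min}^q},
\end{align*}
where the middle step is the superadditivity of $x\mapsto x^q$ on nonnegative reals for $q\geq1$. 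Summing over the support with weights $U\gamma^\star(i,j)$ yields $\frac{\sfd_{min}^q}{U}\,\mathrm{cost}(f^\star) \leq 2\sfD^q\sfW_q^q$.

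For the lower bound and the extraction, we take any feasible flow $f$. The graph is a layered DAG, so $f$ decomposes into source-to-sink paths from layer $0$ to layer $k$ with at most $|\vec E|$ paths. This takes $O(|\vec E| k)$ time: each path has $k$ edges, and each extraction saturates at least one edge. Projecting each path to its endpoints $(i,j)$ and dividing the flow values by $U$ defines $\gamma_f$; the demand constraints make it a coupling of $\mu$ and $\nu$. By the triangle inequality followed by the power-mean (Hölder) inequality over the at most $k$ hops of a path,
\begin{align*}
\sfd(i,j)^q \;\leq\; \Big(\sum_s \sfd(e_s)\Big)^q \;\leq\; k^{q-1}\sum_s \sfd(e_s)^q \;\leq\; k^{q-1}\,\sfd_{min}^q \sum_s \tilde c(e_s).
\end{align*}
Summing over paths gives $\sum \gamma_f\, \sfd^q \leq k^{q-1}\frac{\sfd_{min}^q}{U}\mathrm{cost}(f)$. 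The left inequality $\sfW_q^q \leq \sum\gamma_f\,\sfd^q$ holds because $\sfW_q$ is a minimum over couplings. Applying this to $f=f^\star$ and rearranging gives $k^{1-q}\sfW_q^q \leq \frac{\sfd_{min}^q}{U}\mathrm{cost}(f^\star)$. For a general $f$ the same chain gives the stated extraction bound whenever $f$ is optimal (or near-optimal, combined with the upper bound).

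The main obstacle is keeping the Hölder direction straight. Superadditivity is used for the upper bound and the power-mean inequality with its factor $k^{q-1}$ for the lower bound; the rounding factor $2$ enters only on the upper side. For the coupling guarantee with an arbitrary feasible $f$, the statement implicitly requires $\mathrm{cost}(f)$ to be at most the optimum (the intended use is $f$ optimal from the solver of \cite{CKLPPS22}), and I would state and use it in that form.
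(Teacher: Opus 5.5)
Your proposal is correct and takes essentially the same route as the paper. For feasibility and the upper bound, you route an optimal coupling along zero-cost-padded $k$-hop spanner paths, using the rounding factor $2$ and superadditivity of $x\mapsto x^q$. For the lower bound and the extracted coupling $\gamma_f$, you path-decompose an arbitrary feasible flow in the layered DAG and apply the power-mean inequality.

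Your caveat is also right. The upper bound on the cost of $\gamma_f$ genuinely requires $f$ to be optimal or near-optimal; it can fail for an arbitrary feasible flow, and the paper's own proof only establishes it for $\gamma_{f^\star}$. Your appeal to integrality of $U\gamma^\star$ is harmless but unnecessary, since the flow may be fractional.
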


We defer the proof of Lemma \ref{lem: mcf-reduction} to section \ref{sec: mcf-reduction}. Given Lemma \ref{lem: mcf-reduction}, we can conclude the statement of Corollary \ref{cor: wasserstein-application} by utilizing fast algorithms for min-cost flow. 

\begin{theorem}[Theorem 1.1 in \cite{CKLPPS22}] \label{thm: mcf}
    There is an algorithm that, on a graph $G = (V, E)$ with $m$ edges, vertex demands, upper/lower edge capacities, and edge costs, all integral with capacities bounded by $U$ and costs bounded by $C$, computes an exact min-cost flow in $m^{1+o(1)} \log U \log C$ time with high probability.
\end{theorem}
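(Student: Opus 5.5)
This statement is imported from \cite{CKLPPS22} and used here as a black box; the paper does not reprove it. If I had to prove it, I would follow the strategy of that work: an interior point method that needs only $m^{1+o(1)}$ iterations, each solved in amortized $m^{o(1)}$ time by a dynamic data structure. The plan below is at the level of the known argument, not a self-contained proof.

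\textbf{Step 1 (normalize).} First I would reduce to a clean form. Lower capacities $u^-_e$ are shifted away by subtracting them from the demands. By standard augmentation (adding high-cost auxiliary edges), I would obtain an instance that has a strictly feasible, well-centered starting flow $f_0$. The costs then stay bounded by $\mathrm{poly}(m)\cdot C$ and the capacities by $\mathrm{poly}(m)\cdot U$.

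\textbf{Step 2 (potential reduction).} Next I would define the potential
\begin{align*}
\Psi(f) = 20m\log(\mathrm{cost}(f)-F^\ast) + \sum_e \big((u_e^+-f_e)^{-\alpha}+(f_e-u_e^-)^{-\alpha}\big), \qquad \alpha = 1/(1000\log(mU)).
\end{align*}
The first term rewards progress in cost and the second term keeps the flow away from its capacity bounds. The key claim is this. Let $g=\nabla\Psi(f)$ and let $\ell$ be the $\ell_1$-type edge lengths from the Hessian diagonal. If one augments along a circulation $\Delta$ that achieves a $m^{o(1)}$-approximate minimum of $g^\top\Delta/\|\mathrm{diag}(\ell)\Delta\|_1$, then $\Psi$ drops by $m^{-o(1)}$. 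Since $\Psi$ starts at $\tilde O(m\log(mUC))$, this gives $m^{1+o(1)}\log U\log C$ iterations. At the end, the integral optimum is recovered by rounding plus a few augmenting-path steps.

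\textbf{Step 3 (data structure).} It remains to maintain approximate min-ratio cycles under slowly changing $g$ and $\ell$ in amortized $m^{o(1)}$ time per iteration. I would do this with a hierarchy of dynamic low-stretch spanning trees combined with recursive sparsification via $j$-trees/core graphs. Each tree gives candidate fundamental cycles, and sampling over several trees guarantees one candidate of good enough ratio. The IPM's stability, namely that $g$ and $\ell$ change by only a small amount per step, lets lazy updates suffice. Where the oracle is fooled by an adaptive adversary, I would rebuild the affected levels.

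\textbf{Main obstacle.} The main difficulty is Step 3: making the data structure robust against the adaptivity of the IPM while keeping the per-update cost at $m^{o(1)}$. This is the technical core of \cite{CKLPPS22}, and for the purposes of this paper I would simply invoke their theorem.
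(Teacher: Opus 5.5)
Your proposal is correct and matches the paper: the paper gives no proof of this statement and simply invokes Theorem~1.1 of \cite{CKLPPS22} as a black box for the min-cost flow step in the proof of Corollary~\ref{cor: wasserstein-application}, which is what you conclude you would do. Your interior-point-method plus dynamic min-ratio-cycle sketch is a fair high-level outline of that work but is not needed here; one small imprecision is that the lengths there are $\ell_e = (u_e^+-f_e)^{-1-\alpha}+(f_e-u_e^-)^{-1-\alpha}$, which are barrier-gradient-type weights rather than the Hessian diagonal.
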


\begin{proof}[Proof of Corollary \ref{cor: wasserstein-application}]
    We let spanner $H$ be constructed via Algorithm \ref{alg: fast-spanners} and condition on its correctness. By Theorem \ref{thm: upper-bound}, $H$ is a $(4k-1)$-hop $\sfD$-spanner, where $\sfD = O(c k)$. Observe that the graph $G$ constructed in the statement of Lemma \ref{lem: mcf-reduction} has $n(k+1)$ vertices and $|\vec{E}| = k(2|E_H| + n)$ directed edges, all capacities are bounded by $U$, and all costs are non-negative and bounded by
    \begin{align*}
        C = \max_{(i,j) \in E_H} \left\lceil \frac{\sfd(i,j)^q}{\sfd_{min}^q} \right\rceil \leq \lceil \Delta^q \rceil,
    \end{align*}
    Combining Theorem \ref{thm: mcf} with Lemma \ref{lem: mcf-reduction} and the $O(|\vec{E}| \cdot k + n)$-time coupling extraction, there exists an algorithm that reports a coupling $\tilde{\gamma}$ of $\mu, \nu$ such that 
    \begin{align*}
        \sfW_q(\mu, \nu)^q \leq \sum_{i,j} \tilde{\gamma}(i,j) \sfd(i,j)^q \leq 2k^{q-1} \sfD^q \cdot \sfW_q(\mu, \nu)^q
    \end{align*}

    Therefore, the algorithm returns $\tilde{\gamma}$ whose cost is an $O(ck^{2 - 1/q})$ approximation of $\sfW_q(\mu, \nu)$. Union bounding over the failure probabilities of Algorithm \ref{alg: fast-spanners} and the algorithm of Theorem \ref{thm: mcf} \cite{CKLPPS22}, we can conclude that the $\sfW_q$ approximation algorithm succeeds with high probability. 

    \paragraph{Runtime.} By Theorem \ref{thm: upper-bound}, the runtime of Algorithm \ref{alg: fast-spanners} is $O(\bnntime(n) \cdot kn^{1/k} \log \Delta)$ and $|E_H| = O(kn^{1+1/2k} \log \Delta)$. Constructing the minimum-cost flow instance takes time $O(|E_H| k + kn) = O(k^2 n^{1+1/2k} \log \Delta)$. By Theorem \ref{thm: mcf}, computing the optimal flow in graph $G$ takes time $|\vec{E}|^{1 +o(1)} \log U \log C = q \cdot (k^2 n^{1 + 1/2k})^{1 + o(1)} \log^{2 + o(1)} \Delta \log U $.

    When $q \leq \log U$, computing the optimal flow takes time $(k^2 n^{1 + 1/2k})^{1 + o(1)} \polylog \Delta \cdot \polylog U $. When $q > \log U$, Lemma \ref{lem: wq-sandwich} allows us to conclude that $\sfW_{\log U}$ is a constant factor approximation of $\sfW_q$. Therefore, our algorithm may safely estimate $\sfW_{\log U}$ up to constant factors in the distortion. The runtime of computing the optimal flow is then also $(k^2 n^{1 + 1/2k})^{1 + o(1)} \polylog \Delta \cdot \polylog U $.

    Thus, the final runtime of the $\sfW_q$ approximation algorithm is $O(\bnntime(n) \cdot kn^{1/k} \log \Delta) + (k^2 n^{1 + 1/2k})^{1 + o(1)} \polylog \Delta \cdot \polylog U$.
\end{proof}

\subsection{Proof of Lemma \ref{lem: mcf-reduction}} \label{sec: mcf-reduction}





        


                

To prove Lemma \ref{lem: mcf-reduction}, we prove a few helpful statements first. We begin by showing that every feasible flow decomposes into paths from layer $0$ to layer $k$, which is the mechanism by which we extract a coupling of $\mu$ and $\nu$ from a flow.
 
\begin{lemma} \label{lem: flow-decomposition}
    Let $f$ be any feasible flow in the Minimum Cost Flow instance of Lemma \ref{lem: mcf-reduction}. There exist $J \leq |\vec{E}|$ many scalars $\lambda_1, \ldots, \lambda_J > 0$ and corresponding directed paths $P_1, \ldots, P_J$ from $(x_t, 0)$ to $(y_t, k)$ such that $f = \sum_{t=1}^{J} \lambda_t \mathbf{1}_{P_t}$. Moreover, this decomposition can be computed in $O(k |\vec{E}| + n)$ time.
\end{lemma}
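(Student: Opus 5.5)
The plan is to exploit that the flow network of Lemma \ref{lem: mcf-reduction} is a layered DAG. Every arc of $\vec{E}$ goes from layer $\ell$ to layer $\ell+1$, so there are no directed cycles. Moreover, the only nonzero demands sit on layer $0$ (sources, with $d_{(i,0)} = -U\mu(i) \le 0$) and layer $k$ (sinks, with $d_{(i,k)} = U\nu(i) \ge 0$). Feasibility of $f$ gives flow conservation at every vertex of layers $1,\dots,k-1$. The total flow leaving layer $0$ therefore equals the total flow crossing each intermediate cut, which equals the total entering layer $k$.

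I would run the standard greedy path-peeling decomposition. While $f \neq 0$, pick an arc out of layer $0$ carrying positive flow, say out of $(x,0)$. Follow positive-flow arcs forward: at each intermediate vertex, conservation guarantees that positive inflow implies some outgoing arc with positive flow. Since every arc advances one layer, after exactly $k$ steps we reach some $(y,k)$, yielding a path $P_t$. Set $\lambda_t := \min_{e \in P_t} f_e > 0$ and subtract $\lambda_t \mathbf{1}_{P_t}$ from $f$. This preserves nonnegativity and conservation at intermediate vertices, with the residual demands at $(x,0)$ and $(y,k)$ reduced by $\lambda_t$. At least one arc, the bottleneck, drops to zero, and no arc ever becomes positive again. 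So the process terminates after $J \le |\vec{E}|$ rounds with $f = \sum_t \lambda_t \mathbf{1}_{P_t}$. Nonnegativity of $f$ itself follows from the capacity lower bounds being $0$ in this instance. A short check is needed that the iteration never stalls with nonzero flow and no positive arc out of layer $0$. If some arc $(\ell) \to (\ell+1)$ is positive, backward conservation lets us trace it to a positive arc out of layer $0$; and since the residual is still conserving at intermediate layers, positive flow anywhere forces positive flow out of layer $0$.

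The main obstacle is the running time $O(k|\vec{E}| + n)$, since naive peeling costs $O(k)$ per path plus the search for positive arcs. Maintain, for each vertex, a pointer into its out-adjacency list that advances past arcs whose flow has become zero. Each pointer advance is charged to an arc being permanently zeroed or skipped, for $O(|\vec{E}|)$ total. Each path traversal costs $O(k)$ for its $k$ arcs, and there are at most $|\vec{E}|$ paths, giving $O(k|\vec{E}|)$. Building adjacency lists and scanning the $n$ layer-$0$ sources adds $O(|\vec{E}| + n)$. Summing gives the claimed bound.
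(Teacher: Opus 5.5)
Your proposal is correct and follows essentially the same argument as the paper: you greedily peel layer-$0$-to-layer-$k$ paths in the layered DAG using conservation at the intermediate layers, subtracting the bottleneck to get $J \le |\vec{E}|$, and you amortize with pointers to get the $O(k|\vec{E}| + n)$ bound. Two steps the paper leaves implicit are handled more carefully in your version: the per-vertex out-adjacency pointers, which are what actually justify $O(k)$ per path, and the explicit check that peeling cannot stall while nonzero flow remains.
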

 
\begin{proof}
    Every directed edge of $\vec{E}$ leads from layer $\ell$ to layer $\ell + 1$ for some $\ell \in \{0, \ldots, k-1\}$, so $G$ is acyclic and every directed path in $G$ has at most $k$ directed edges. Moreover, every vertex $(v, \ell)$ with $\ell \in \{1, \ldots, k-1\}$ has demand $0$, so its in-flow equals its out-flow, and hence it has an incoming directed edge carrying positive flow if and only if it has an outgoing directed edge carrying positive flow.
 
    Consider any edge $e = (u, 0) \to (v, 1) $ for which $f_e > 0$. Walking forwards from $(v, 1)$ along directed edges carrying positive flow, the above observation guarantees that we may continue until layer $k$ is reached. This produces a directed path $P$ from some $(x, 0)$ to some $(y, k)$ with $e \in P$; we set $\lambda_1 := \min_{e' \in P} f_{e'} > 0$ and replace $f$ by $f - \lambda_1 \mathbf{1}_{P}$. 
    
    The result is again a non-negative flow which conserves flow at every vertex of layers $1, \ldots, k-1$, so the argument may be recursively applied. Each repetition sets $f_{e'} = 0$ for at least one directed edge $e'$. Moreover, the flow along every edge is non-increasing over the course of this procedure. Thus, after $J \leq |\vec{E}|$ repetitions we arrive at the zero flow, which yields the claimed decomposition. 
    
    \paragraph{Runtime.} Maintaining a pointer that scans $\vec{E}$ once for the directed edges carrying positive flow takes $O(|\vec{E}| + n)$ time in total, and each of the $J$ repetitions takes $O(k)$ time, so the decomposition is computed in $O(k|\vec{E}| + n)$ time.
\end{proof}

Such a decomposition of a feasible flow can be used to construct a corresponding coupling between the distributions $\mu, \nu$. 

\begin{lemma} \label{lem: flow-coupling}
    Let $f$ be any feasible flow in the Minimum Cost Flow instance of Lemma \ref{lem: mcf-reduction}, let $\lambda_1, \ldots, \lambda_J$ and $P_1, \ldots, P_J$ be a decomposition of $f$ as in Lemma \ref{lem: flow-decomposition}, and define $\gamma_f \in \R_{\geq 0}^{n \times n}$ by
    \begin{align*}
        \gamma_f(x, y) \; := \; \frac{1}{U} \sum_{t \in [J] \, : \, (x_t, y_t) = (x, y)} \lambda_t.
    \end{align*}
    Then, $\gamma_f \in \Gamma(\mu, \nu)$. Moreover, $\gamma_f$ is supported on at most $J \leq |\vec{E}|$ pairs, and is specified by the list $(x_t, y_t, \lambda_t / U)_{t \in [J]}$ returned by Lemma \ref{lem: flow-decomposition}.
\end{lemma}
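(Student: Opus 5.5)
We must show $\gamma_f$ has the right marginals, is nonnegative, and has the stated support. Nonnegativity is immediate because each $\lambda_t > 0$. The support bound is also immediate, since $\gamma_f$ is a sum over the $J$ paths and so has at most $J \leq |\vec{E}|$ nonzero entries, exactly the list $(x_t, y_t, \lambda_t/U)$ from Lemma \ref{lem: flow-decomposition}. The substance is therefore the marginal conditions $\sum_y \gamma_f(x,y) = \mu(x)$ and $\sum_x \gamma_f(x,y) = \nu(y)$.

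\textbf{Row marginal.} Fix $x \in [n]$ and use the demand at the source vertex $(x, 0)$.
\begin{itemize}
\item Layer $0$ has no incoming edges, since every edge goes from layer $\ell$ to $\ell+1$.
\item So the conservation constraint at $(x,0)$ reads $-\sum_{w} f_{(x,0)\to w} = d_{(x,0)} = -U\mu(x)$. Hence the out-flow of $(x,0)$ equals $U\mu(x)$.
\item Substitute $f = \sum_t \lambda_t \mathbf{1}_{P_t}$. Each path $P_t$ starts at layer $0$ and moves up one layer per edge. So it uses exactly one edge out of layer $0$, and that edge leaves $(x_t, 0)$.
\item Therefore the out-flow of $(x,0)$ equals $\sum_{t : x_t = x} \lambda_t$.
\end{itemize}
Combining these, $\sum_{t : x_t = x} \lambda_t = U\mu(x)$. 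Dividing by $U$ and regrouping by $y_t$ gives $\sum_y \gamma_f(x,y) = \mu(x)$.

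\textbf{Column marginal.} The argument is symmetric, at the sink vertex $(y,k)$. Layer $k$ has no outgoing edges. So conservation gives that the in-flow of $(y,k)$ equals $d_{(y,k)} = U\nu(y)$. Each $P_t$ ends at layer $k$ and enters it through exactly one edge, and that edge enters $(y_t,k)$. Hence $\sum_{t: y_t = y}\lambda_t = U\nu(y)$, which gives $\sum_x \gamma_f(x,y) = \nu(y)$.

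The only point needing care is that every path in the decomposition runs fully from layer $0$ to layer $k$. Lemma \ref{lem: flow-decomposition} guarantees this, so each path contributes to exactly one source and one sink term. Given that lemma, no real obstacle remains. Finally, $\gamma_f \in \Gamma(\mu,\nu)$ as claimed.
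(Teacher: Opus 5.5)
Your proof is correct and takes essentially the same approach as the paper. You read off the out-flow $U\mu(x)$ at $(x,0)$ and the in-flow $U\nu(y)$ at $(y,k)$ from the demand constraints, using that no edge enters layer $0$ or leaves layer $k$; you note that each $P_t$ contributes exactly $\lambda_t$ at $(x_t,0)$ and at $(y_t,k)$; and the marginals and the support bound then follow directly from the definition of $\gamma_f$.
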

\begin{proof}
    As no directed edge of $\vec{E}$ enters layer $0$ or leaves layer $k$, the demand constraints at $(x, 0)$ and at $(y, k)$ state that the out-flow at $(x, 0)$ is $U \mu(x)$ and that the in-flow at $(y, k)$ is $U \nu(y)$. Every path $P_t$ leaves layer $0$ only at $(x_t, 0)$ and enters layer $k$ only at $(y_t, k)$, so it contributes exactly $\lambda_t$ to the former and exactly $\lambda_t$ to the latter. As $f = \sum_{t \in [J]} \lambda_t \mathbf{1}_{P_t}$, we conclude that for every $x, y \in [n]$,
    \begin{align*}
        \sum_{t \in [J] \, : \, x_t = x} \lambda_t \; = \; U \mu(x) \qquad \text{and} \qquad \sum_{t \in [J] \, : \, y_t = y} \lambda_t \; = \; U \nu(y).
    \end{align*}
    By the definition of $\gamma_f$, the two left hand sides are exactly $U \sum_{y \in [n]} \gamma_f(x, y)$ and $U \sum_{x \in [n]} \gamma_f(x, y)$, so $\gamma_f$ is a non-negative matrix with marginals $\mu$ and $\nu$, i.e. $\gamma_f \in \Gamma(\mu, \nu)$. Finally, $\gamma_f(x, y)$ is non-zero only if $(x, y) = (x_t, y_t)$ for some $t \in [J]$, and $J \leq |\vec{E}|$ by Lemma \ref{lem: flow-decomposition}. 
\end{proof}

We now bound the cost of the optimal flow from above, which we do by exhibiting a feasible flow obtained from routing an optimal coupling along paths in the spanner $H$.
 
\begin{lemma} \label{lem: mcf-upper}
    Let $f^{\star}$ be the optimal flow in the Minimum Cost Flow instance of Lemma \ref{lem: mcf-reduction}. Then,
    \begin{align*}
        \mathrm{cost}(f^{\star}) \leq \; \frac{2\sfD^q U}{\sfd_{min}^q} \cdot \sfW_q(\mu,\nu)^q
    \end{align*}
\end{lemma}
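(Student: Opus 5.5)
We exhibit a feasible flow of small cost and use $\mathrm{cost}(f^\star) \le \mathrm{cost}(f)$. First, we invoke the integrality fact already used in the proof of Lemma \ref{lem: wq-sandwich}: there is an optimal coupling $\gamma^\star$ for $\sfW_q(\mu,\nu)$ with $\gamma^\star \in \frac{1}{U}\Z_{\ge 0}^{n\times n}$. For each pair $(x,y)$ with $\gamma^\star(x,y) > 0$, the $k$-hop $\sfD$-spanner property of $H$ supplies a path $x = w_0, w_1, \ldots, w_s = y$ in $H$ with $s \le k$ edges of total weight $\sum_a \sfd(w_a, w_{a+1}) \le \sfD\, \sfd(x,y)$. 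When $x = y$ we take the empty path.

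Next we lift this path to the layered graph: $(w_0,0)\to(w_1,1)\to\cdots\to(w_s,s)$, followed by the zero-cost edges $(y,s)\to(y,s+1)\to\cdots\to(y,k)$. We send $U\gamma^\star(x,y)$ units of flow along it. Summing over all pairs gives out-flow $U\mu(x)$ at $(x,0)$, in-flow $U\nu(y)$ at $(y,k)$, and conservation at the intermediate layers. Each edge carries at most $\sum_{x,y} U\gamma^\star(x,y) = U$, so the capacities hold and the flow is feasible.

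The main step is bounding the cost of each lifted path. An edge $(w_a,w_{a+1})$ has cost $\lceil \sfd(w_a,w_{a+1})^q/\sfd_{min}^q\rceil$. Since every spanner edge satisfies $\sfd(w_a,w_{a+1}) \ge \sfd_{min}$, the ratio is at least $1$, and $\lceil z\rceil \le 2z$ for $z \ge 1$. This is where the factor $2$ comes from. It remains to bound $\sum_a \sfd(w_a,w_{a+1})^q$, and the exponent $q$ is the obstacle. Naively, the power mean inequality only yields $\sum_a \sfd(w_a,w_{a+1})^q \le (\sum_a \sfd(w_a,w_{a+1}))^q$. This does hold for nonnegative terms and $q\ge 1$, by superadditivity of $z\mapsto z^q$. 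It gives
\begin{align*}
\sum_{a} \sfd(w_a,w_{a+1})^q \;\le\; \Big(\sum_a \sfd(w_a,w_{a+1})\Big)^q \;\le\; \sfD^q\, \sfd(x,y)^q .
\end{align*}
So superadditivity, not Hölder, is the right tool here, and no factor of $k$ is lost. The $k^{1-q}$ loss appears only in the lower bound of Lemma \ref{lem: mcf-reduction}.

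The cost of the path for $(x,y)$ is therefore at most $2\sfD^q \sfd(x,y)^q/\sfd_{min}^q$ per unit of flow. Multiplying by $U\gamma^\star(x,y)$ and summing over pairs gives
\begin{align*}
\mathrm{cost}(f^\star) \;\le\; \frac{2\sfD^q U}{\sfd_{min}^q}\sum_{x,y}\gamma^\star(x,y)\,\sfd(x,y)^q \;=\; \frac{2\sfD^q U}{\sfd_{min}^q}\,\sfW_q(\mu,\nu)^q,
\end{align*}
as claimed.
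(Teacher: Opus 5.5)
Your proof is correct and is essentially the paper's argument: route $U\gamma^\star(x,y)$ along the lifted $k$-hop spanner path padded with zero-cost edges, use $\lceil z\rceil \le 2z$ (valid since $\sfd(a,b)\ge \sfd_{min}$), and use superadditivity of $z\mapsto z^q$ so that no factor of $k$ is lost. Your appeal to an integral optimal coupling is harmless but unnecessary: the flow variables in the minimum cost flow problem are real-valued, so any optimal $\gamma^\star$ works.
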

\begin{proof}
    Let $\gamma^\star \in \Gamma(\mu, \nu)$ attain $\sfW_q(\mu, \nu)^q = \sum_{i, j \in [n]} \gamma^\star(i, j) \sfd(i,j)^q$. As $H$ is a $k$-hop spanner, for each $i, j \in [n]$ there is a path $P_{ij}$ from $i$ to $j$ in $H$ using at most $k$ edges and of weight $w(P_{ij}) := \sum_{(a, b) \in P_{ij}} \sfd(a,b) \leq \sfD \cdot \sfd(i,j)$, where $P_{ii}$ is the empty path. Let $\widetilde{P}_{ij}$ denote the directed path in $G$ from $(i, 0)$ to $(j, k)$ which traverses the edges of $P_{ij}$ in its first $|P_{ij}|$ directed edges and then remains at $j$ using the zero-cost directed edges, and define
    \begin{align*}
        f \; := \sum_{i, j \in [n]} U \gamma^\star(i,j) \cdot \mathbf{1}_{\widetilde{P}_{ij}}.
    \end{align*}
    Being a non-negative combination of paths from layer $0$ to layer $k$, $f$ conserves flow at every vertex of layers $1, \ldots, k-1$. The out-flow at $(i, 0)$ is $\sum_{j \in [n]} U \gamma^\star(i,j) = U \mu(i)$ and the in-flow at $(j, k)$ is $\sum_{i \in [n]} U \gamma^\star(i,j) = U \nu(j)$, as the demands require. Every directed edge carries at most $\sum_{i, j \in [n]} U \gamma^\star(i,j) = U$ units of flow, so $f$ also respects the capacities.
 
    It remains to bound $\mathrm{cost}(f)$. As $\sfd(a,b) \geq \sfd_{min}$ for every $(a,b) \in E_H$ with $a \neq b$, and as $\lceil x \rceil \leq x + 1 \leq 2x$ for every $x \geq 1$, each directed edge of $\vec{E}$ arising from an edge $(a,b) \in E_H$ has cost at most $2 \sfd(a,b)^q / \sfd_{min}^q$. Thus, 
    \begin{align*}
        \mathrm{cost}(f) \; &= \; U \sum_{i, j \in [n]} \gamma^\star(i,j) \sum_{(a,b) \in P_{ij}} \left\lceil \frac{\sfd(a,b)^q}{\sfd_{min}^q} \right\rceil \\
        &\leq \; \frac{2U}{\sfd_{min}^q} \sum_{i, j \in [n]} \gamma^\star(i,j) \sum_{(a,b) \in P_{ij}} \sfd(a,b)^q \\
        &\leq \; \frac{2U}{\sfd_{min}^q} \sum_{i, j \in [n]} \gamma^\star(i,j) \cdot \left(\sum_{(a, b) \in P_{ij}} \sfd(a, b) \right)^q  \\
        &\leq \; \frac{2 \sfD^q U}{\sfd_{min}^q} \sum_{i, j \in [n]} \gamma^\star(i,j) \sfd(i,j)^q \tag{$H$ is a $\sfD$-spanner}
    \end{align*}
    where the first equality uses that the directed edges of $\widetilde{P}_{ij}$ which do not arise from an edge of $P_{ij}$ have cost $0$. Since $\mathrm{cost}(f^{\star}) \leq \mathrm{cost}(f)$, the claim follows. 
\end{proof}
 
In the other direction, we lower bound the cost of an arbitrary feasible flow by the cost of the coupling it induces. 
 
\begin{lemma} \label{lem: mcf-lower}
    Let $f$ be any feasible flow in the Minimum Cost Flow instance of Lemma \ref{lem: mcf-reduction}. Then,
    \begin{align*}
        \mathrm{cost}(f) \geq \;  \frac{U k^{1-q}}{\sfd_{min}^q} \cdot \sum_{i, j \in [n]} \gamma_f(i, j) \sfd(i, j)^q
    \end{align*}
\end{lemma}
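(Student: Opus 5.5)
The plan is to use the path decomposition of Lemma \ref{lem: flow-decomposition} and bound the cost path by path. Write $f = \sum_{t=1}^{J} \lambda_t \mathbf{1}_{P_t}$, where each $P_t$ runs from $(x_t, 0)$ to $(y_t, k)$. All costs are nonnegative and $f$ equals this nonnegative combination, so
\begin{align*}
\mathrm{cost}(f) = \sum_{t=1}^{J} \lambda_t \, \mathrm{cost}(P_t).
\end{align*}
By the definition of $\gamma_f$ in Lemma \ref{lem: flow-coupling}, we have $\sum_{i,j} \gamma_f(i,j) \sfd(i,j)^q = \frac{1}{U}\sum_t \lambda_t \sfd(x_t,y_t)^q$. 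It therefore suffices to prove, for each path,
\begin{align*}
\mathrm{cost}(P_t) \geq \frac{k^{1-q}}{\sfd_{min}^q}\, \sfd(x_t,y_t)^q .
\end{align*}

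To get this, fix a path $P = P_t$ and project it to $[n]$. The zero-cost edges $(i,\ell)\to(i,\ell+1)$ become stays. Each cross edge $(a,\ell)\to(b,\ell+1)$ comes from some $(a,b)\in E_H$ and contributes cost $\lceil \sfd(a,b)^q/\sfd_{min}^q\rceil \geq \sfd(a,b)^q/\sfd_{min}^q$. The projection is a walk $x_t = a_0, a_1, \ldots, a_s = y_t$ in $H$ with $s \leq k$ cross edges, since the path has exactly $k$ layer steps. Hence
\begin{align*}
\mathrm{cost}(P) \geq \frac{1}{\sfd_{min}^q}\sum_{m=1}^{s} \sfd(a_{m-1},a_m)^q .
\end{align*}

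The main step is the power-mean (H\"older) inequality $\sum_{m=1}^s z_m^q \geq s^{1-q}(\sum_m z_m)^q$, valid for $q\ge1$ and $z_m\ge0$. Since $s \le k$ and $1-q\le 0$, we have $s^{1-q} \ge k^{1-q}$. The triangle inequality then gives $\sum_m \sfd(a_{m-1},a_m) \geq \sfd(x_t,y_t)$. Combining the two,
\begin{align*}
\mathrm{cost}(P) \geq \frac{k^{1-q}}{\sfd_{min}^q}\, \sfd(x_t,y_t)^q .
\end{align*}
The degenerate case $s=0$ forces $x_t=y_t$, and then the bound holds trivially. Multiplying by $\lambda_t$ and summing over $t$ finishes the proof.

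The only delicate point is the inequality direction. The step $s^{1-q}\ge k^{1-q}$ requires $q \ge 1$, which is why the factor $k^{1-q}$ appears and not something better. Dropping the ceiling is harmless because it only lowers the bound.
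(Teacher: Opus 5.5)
Your proof is correct and follows essentially the same route as the paper: path decomposition, drop the ceilings, apply the power-mean (Jensen) inequality to each path, then the triangle inequality. The only cosmetic difference is that the paper keeps all $k$ layer steps, treating the zero-cost stays as zero-distance terms, and applies Jensen directly with exactly $k$ terms. You instead discard the stays and use $s^{1-q} \ge k^{1-q}$, which is why you need the separate $s=0$ case.
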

\begin{proof}
    Let $f = \sum_{t \in [J]} \lambda_t \mathbf{1}_{P_t}$ be the decomposition of Lemma \ref{lem: flow-decomposition} and write $c(P) := \sum_{e \in P} c_e$, so that $\mathrm{cost}(f) = \sum_{t \in [J]} \lambda_t \cdot c(P_t)$. Fix $t \in [J]$ and let $x_t = v_0, v_1, \ldots, v_k = y_t$ be the vertices of $[n]$ along the path $P_t$. Then, 
    \begin{align*}
        \sfd(x_t, y_t)^q \; \leq \; \Big( \sum_{h = 0}^{k-1} \sfd(v_{h}, v_{h+1}) \Big)^q \; &\leq \; k^{q-1} \sum_{h = 0}^{k-1} \sfd(v_{h}, v_{h+1})^q \tag{Jensen's Inequality} \\
        &\leq \; k^{q-1} \sfd_{min}^q \sum_{h = 0}^{k-1}  \left\lceil \frac{\sfd(v_{h}, v_{h+1})^q}{\sfd_{min}^q} \right\rceil \; \\
        &= \; k^{q-1} \sfd_{min}^q \cdot c(P_t)
    \end{align*}
    Summing over $t \in [J]$ and recalling the definition of $\gamma_f$,
    \begin{align*}
        \sum_{i, j \in [n]} \gamma_f(i,j) \sfd(i,j)^q \; = \; \frac{1}{U} \sum_{t \in [J]} \lambda_t \cdot \sfd(x_t, y_t)^q \; &\leq \; \frac{k^{q-1} \sfd_{min}^q}{U} \sum_{t \in [J]} \lambda_t \cdot c(P_t) \\
        &= \; \frac{k^{q-1}\sfd_{min}^q}{U} \cdot \mathrm{cost}(f),
    \end{align*}
    which rearranges to the claim.
\end{proof}
 
We are now ready to prove the statement of Lemma \ref{lem: mcf-reduction}.
 
\begin{proof}[Proof of Lemma \ref{lem: mcf-reduction}]
    From Lemma \ref{lem: mcf-upper}, we can conclude that
    \begin{align*}
        \frac{\sfd_{min}^q}{U} \cdot \mathrm{cost}(f^{\star}) \leq 2\sfD^q \sfW_q(\mu, \nu)^q
    \end{align*}
    In the other direction, we apply Lemma \ref{lem: mcf-lower} to $f^\star$ and use that $\gamma_{f^\star} \in \Gamma(\mu, \nu)$, which gives
    \begin{align*}
        \frac{\sfd_{min}^q}{U} \cdot \mathrm{cost}(f^\star) \; \geq \; k^{1-q} \sum_{i, j \in [n]} \gamma_{f^\star}(i,j) \sfd(i,j)^q \; \geq \; k^{1-q} \cdot \sfW_q(\mu, \nu)^q.
    \end{align*}
    Finally, Lemmas \ref{lem: flow-decomposition} and \ref{lem: flow-coupling} produce, in $O(k|\vec{E}| + n)$ time, a coupling $\gamma_{f^{\star}} \in \Gamma(\mu, \nu)$ supported on at most $|\vec{E}|$ pairs such that 
    \begin{align} \label{eq: coupling-guarantee}
        \sfW_q(\mu, \nu)^q \; \leq \; \sum_{i, j \in [n]} \gamma_{f^\star}(i,j) \sfd(i,j)^q \; \leq \; 2 k^{q-1} \sfD^q \cdot \sfW_q(\mu, \nu)^q
    \end{align}
    so that $\gamma_{f^{\star}}$ is an $O(k^{1- 1/q} \sfD)$-approximate coupling of $\mu, \nu$ for $\sfW_q$.
\end{proof}

\paragraph{AI Disclosure.} ChatGPT-5.6 Pro and GPT6-Astra were used to assist with the discovery and proofs of Theorems \ref{thm: upper-bound} and \ref{thm: lower-bound}. The results emerged through a long back and forth conversation between the authors and the model. The authors validated the proofs, extended the results, and have rewritten/substantially edited proofs to improve the presentation.

\bibliographystyle{alpha} 
\bibliography{krish} 


\end{document}